\keywords{logic, complexity, games}
\theoremstyle{plain} 
\newcommand{\bi}{\begin{itemize}}
\newcommand{\ei}{\end  {itemize}}
\newcommand{\bt}{\begin{tabbing}}
\newcommand{\et}{\end  {tabbing}}
\newcommand{\be}{\begin{enumerate}}
\newcommand{\ee}{\end  {enumerate}}
\newcommand{\pebbleleft}{\mathsf{Pebble\text{-}Left}}
\newcommand{\pebbleright}{\mathsf{Pebble\text{-}Right}}
\newcommand{\splitleft}{\mathsf{Split\text{-}Left}}
\newcommand{\splitright}{\mathsf{Split\text{-}Right}}
\newcommand{\swap}{\mathsf{Swap}}
\newcommand{\close}{\mathsf{Close}}
\newcommand{\playleft}{\mathsf{Play\text{-}Left}}
\newcommand{\playright}{\mathsf{Play\text{-}Right}}
\newcommand{\bA}{{\bf A}}
\newcommand{\bB}{{\bf B}}
\newcommand{\bC}{{\bf C}}
\newcommand{\bS}{{\bf S}}
\newcommand{\commentout}[1]{}
\newcommand{\ef}{EF }
\newcommand{\EF}{Ehrenfeucht-Fra\"{i}ss\'{e} }
\newcommand{\ms}{MS }
\newcommand{\MS}{multi-structural }
\newcommand{\qvt}{QVT }
\newcommand{\fo}{{\rm FO}}
\newcommand{\np}{\mbox{{\rm NP}}}
\newcommand{\p}{\mbox{{\rm P}}}
\renewcommand{\phi}{\varphi}       
\newcommand{\cA}{\mathcal{A}}
\newcommand{\cB}{\mathcal{B}}
\newcommand{\cC}{\mathcal{C}}
\newcommand{\cS}{\mathcal{S}}
\newcommand{\cT}{\mathcal{T}}
\newcommand{\cL}{\mathcal{L}}
\newcommand{\sA}{\mathscr{A}}
\newcommand{\sB}{\mathscr{B}}
\newcommand{\N}{\mathbb{N}}
\newcommand{\QN}{\mathrm{QN}}
\newcommand{\DSPACE}{\mathrm{DSPACE}}
\newcommand{\NL}{\mathrm{NL}}
\newcommand{\lo}{\mathsf{LO}}
\newcommand{\rt}{\mathsf{RT}}
\newcommand{\Xr}{X_\textrm{root}}
\newcommand{\pred}{\mathsf{Pred}}
\newcommand{\gnode}[3]{\langle ({#1},{#2} )\,\big\vert\, #3 \rangle}
\newcommand{\QVT}[4]{\mathcal{QVT}({#2},{#1})\text{ game on }(#3, #4)}
\newcommand{\SG}[3]{\mathcal{SG}(#1, #2; #3)}
\theoremstyle{plain}
\newtheorem{method}[thm]{Method}
\newtheorem{game}[thm]{Game}
\begin{document}

\title[Multi-Structural Games and Beyond]{Multi-Structural Games and Beyond}

\author[M.~Carmosino]{Marco Carmosino\lmcsorcid{0009-0007-1118-1352}}[a]
\author[R.~Fagin]{Ronald Fagin\lmcsorcid{0000-0002-7374-0347}}[a]
\author[N.~Immerman]{Neil Immerman\lmcsorcid{0000-0001-6609-5952}}[b]
\author[P.~Kolaitis]{Phokion G.~Kolaitis\lmcsorcid{0000-0002-8407-8563}}[a,c]
\author[J.~Lenchner]{Jonathan Lenchner\lmcsorcid{0000-0002-9427-8470}}[a]
\author[R.~Sengupta]{Rik Sengupta\lmcsorcid{0000-0002-9238-5408}}[a,b]

\address{IBM Research}	
\email{mlc@ibm.com, fagin@us.ibm.com, lenchner@us.ibm.com, rik@ibm.com}  

\address{University of Massachusetts Amherst}	
\email{immerman@umass.edu}  

\address{University of California Santa Cruz}	
\email{kolaitis@ucsc.edu}  





\begin{abstract}
  \noindent Multi-structural (MS) games are combinatorial games that capture the number of quantifiers of first-order sentences. On the face of their definition, MS games differ from Ehrenfeucht-Fra\"{i}ss\'{e} (EF) games in two ways: first, MS games are played on two sets of structures, while EF games are played on a pair of structures; second, in MS games, Duplicator can make any number of copies of structures. In the first part of this paper, we perform a finer analysis of MS games and develop a closer comparison of MS games with EF games. In particular, we point out that the use of sets of structures is of the essence and that when MS games are played on pairs of structures,  they capture Boolean combinations of first-order sentences with a fixed number of quantifiers. After this, we focus on another important difference between MS games and EF games, namely, the necessity for Spoiler to play on top of a previous move in order to win some MS games. Via an analysis of the types realized during MS games, we delineate the expressive power of the variant of MS games in which Spoiler never plays on top of a previous move. In the second part we focus on simultaneously capturing number of quantifiers and number of variables in first-order logic. We show that natural variants of the MS game do \textit{not} achieve this. We then introduce a new game, the quantifier-variable tree game, and show that it simultaneously captures the number of quantifiers and number of variables. We conclude by generalizing this game to a family of games, the \emph{syntactic games}, that simultaneously capture reasonable syntactic measures and the number of variables.
\end{abstract}

\maketitle


\section{Introduction}\label{sec:intro}

Combinatorial games are an effective tool to analyze the expressive power of logics on sets of structures. The prototypical example of such combinatorial games are the \EF (EF) games \cite{Ehr61,Fra54}, played by two players, called Spoiler and Duplicator; \ef games capture the quantifier rank of first-order sentences and have been used to analyze the expressive power of first-order logic ({\fo}). Specifically, for every $r \in \N$, two structures $\bA$ and $\bB$ satisfy the same {\fo}-sentences of quantifier rank $r$ if and only if Duplicator wins the $r$-round \ef game on $(\bA,\bB)$.
Since this holds true for finite and infinite structures alike, \ef games can be used in finite model theory, unlike other tools from mathematical logic that fail in the finite realm, such as the compactness theorem. Variants of \ef games in which the players start by playing unary relations were used to analyze the expressive power of monadic \np, which is the collection of  problems in \np\ that are definable by sentences of existential monadic second-order logic \cite{DBLP:journals/mlq/Fagin75,DBLP:conf/stoc/Rougemont84,DBLP:conf/focs/Schwentick94,DBLP:journals/iandc/FaginSV95}. A
different variant of \ef games are the pebble games; they capture the number of variables and were used to analyze the expressive power of finite-variable first-order logic and infinitary logics
\cite{DBLP:journals/jsyml/Barwise77,DBLP:journals/jcss/Immerman82,DBLP:journals/iandc/KolaitisV92}.

Multi-structural (MS) games were introduced in \cite{Immerman81},  re-discovered recently in \cite{MSgames1}, and investigated further in \cite{MSgames2} and \cite{MSgames3}. \ms games capture the number of quantifiers of {\fo}-sentences, which is another important parameter in understanding the expressive power of {\fo}. If one is interested in showing that a property $P$ of finite structures, such as connectivity or acyclicity, is not expressible by \emph{any} fixed {\fo}-sentence $\psi$, then both \ef games and \ms games work equally well. The original motivation of introducing \ms games, however, came from the potential use of \ms games to obtain lower bounds when a sequence $\{\psi_n\}_{n\geq 1}$ of {\fo}-sentences is considered so that a property $P$ is expressed by $\psi_n$ on structures with at most $n$ elements. Let $h(n)$ be a function from the natural numbers to the natural numbers. In \cite{Immerman81}, $\QN[h(n)]$ is defined to be the class of all properties $P$ for which there is a
\emph{uniform} sequence $\{\psi_n\}_{n\geq 1}$, of {\fo}-sentences such that $\psi_n$ expresses $P$ on structures with at most $n$ elements, $\psi_n$ has $O(h(n))$ quantifiers, and $h(n)$ is constructible in $\DSPACE[h(n)]$, where $\DSPACE$ stands for deterministic space. 
From results in \cite{Immerman81}, it follows that on the class of all ordered finite structures, 
$$\NL\subseteq \QN[\log(n)].$$
Thus, proving that some \np-complete  or some \p-complete problem requires $\omega(\log(n))$ quantifiers would separate $\NL$ from $\np$ or from $\p$, and this could be achieved by playing \ms games on ordered finite structures. In contrast, as pointed out in \cite{Immerman81}, {\fo}-sentences of quantifier rank $O(\log(n))$ capture isomorphism on ordered finite structures, and hence \ef games cannot be used to prove lower bounds for sequences of {\fo}-sentences with $\omega(\log(n))$ quantifiers. Even though the early optimism of using combinatorial games to separate complexity classes has yet to materialize, it is still worth exploring \ms games further with the aim of developing a deeper understanding of their features and potential uses.

On the face of their definitions, \ef games and \ms games differ in the following ways. First, \ef games are played on a pair $(\bA,\bB)$ of structures, while \ms games are played on a pair $(\cA,\cB)$ of \emph{sets}  of structures; second, in each round of the \ms game, Duplicator can make any number of copies of structures on the side that she has to play; and, third, in the \ef game, Duplicator must maintain a partial isomorphism between $\bA$ and $\bB$, while in the \ms game, she must maintain a partial isomorphism between some structure from $\cA$ and some structure from $\cB$. Spoiler wins the $r$-round \ef game on $(\bA,\bB)$ iff there is a {\fo}-sentence of quantifier rank $r$ that is true on $\bA$ and false on $\bB$, while Spoiler wins the $r$-round \ms game on $(\cA,\cB)$ iff there is a {\fo}-sentence with $r$ quantifiers that is \emph{separating} for $(\cA,\cB)$, i.e.,~it is  true on every structure in $\cA$ and false on every structure in $\cB$.

In this paper, we carry out a finer analysis of \ms games and shed additional light on how \ef games compare with \ms games. In particular, when applying \ms games to obtain inexpressibility results, one can use infinite sets $\cA$ and $\cB$ of structures. We show that the use of infinite sets does not yield stronger inexpressibility results; however, the sets used cannot always be assumed to be singletons (i.e.,~we cannot always start with a pair of structures). We also show that  $r$-round \ms games restricted to singleton sets capture Boolean combinations of {\fo}-sentences with $r$ quantifiers. Finally, we add insight into Duplicator's ability to make copies of structures during the game: if enough copies of each structure are made before the first round, then Duplicator does not need to make any additional copies during gameplay.

In \cite{MSgames1} and \cite{MSgames2}, it was pointed out that, unlike the \ef game, Spoiler may get an advantage in the \ms game by playing ``on top,'' i.e.,~by placing a pebble on top of an existing pebble or a constant. Here, we first give a self-contained proof of the fact that, in general, Spoiler may not win an \ms game without sometimes playing on top. We then explore the expressive power of the variant of the \ms game in which Spoiler never plays on top. By analyzing the \emph{types} (i.e.,~the combinations of atomic and negated atomic formulas satisfied by the pebbles played and the constants), we characterize when Spoiler wins the $r$-round \ms game on $(\cA,\cB)$ without playing on top in terms of properties of separating sentences for $(\cA,\cB)$.

We next study combinatorial games that simultaneously capture number of quantifiers and number of variables. It is well known that \ef games can be adapted to simultaneously capture quantifier rank and number of variables by limiting the number of pebbles used. In contrast, we show that natural variants of \ms games obtained by limiting the number of pebbles fail to simultaneously capture number of quantifiers and number of variables. For this reason, we introduce a new game, the \emph{quantifier-variable tree} game (\qvt game). The \qvt game  is inspired by the Adler-Immerman game \cite{adlerimmerman}, which was introduced to study formula size. Variants of the Adler-Immerman game were subsequently investigated in \cite{gsSuccinctness} to study the succinctness of logics, and in \cite{hv15} to obtain lower bounds for formula size in propositional logic and in {\fo}. Our main result about the \qvt game asserts that Spoiler wins the $r$-round, $k$-pebble \qvt game on a pair $(\cA,\cB)$ of sets of structures if and only if there is a separating sentence for $(\cA,\cB)$ with $r$ quantifiers and $k$ variables.

Finally, we generalize the \qvt game to a class of two-player games called \emph{syntactic games}, that simultaneously capture some other measure of first-order formulas and the number of variables. We focus on \emph{compositional} syntactic measures, i.e., measures on first-order formulas defined inductively for formulas in terms of its value on their subformulas. Quantifier count, quantifier rank, and formula size are all examples of such measures.  Our syntactic games provide a unified game-theoretic setting in which to study these measures. A similar game for formulas  of the infinitary logic $\mathcal{L}_{\omega_1, \omega}$ was studied in \cite{DBLP:journals/mlq/VaananenW13}.

In conclusion, our results yield new insights into \ms games, their variants, and generalizations. The next step in this investigation would be to use these games to 
determine the optimal value of a complexity measure and number of variables to express various combinatorial properties as a function of the size of the relevant structures.
\section{Preliminaries}\label{sec:prelims}

Throughout this paper, we consider a relational schema $\tau$ consisting of finitely many relation and constant symbols. We denote the set of relation symbols in $\tau$ by $\pred(\tau)$. We write $\tau$-structures in boldface, e.g.,~$\bA$. We denote the universe (i.e.,~set of elements) of $\bA$ by $A$, and sets of $\tau$-structures using calligraphic typeface, e.g.,~$\cA$.

\subsection{Pebbled structures}

We have a set $\cC$ of \emph{pebble colors}, with arbitrarily many \emph{pebbles} of each color available. A $\tau$-structure $\bA$ is \emph{pebbled} if zero or more elements from $A$ have one or more pebbles on them, so that \emph{at most} one pebble of each color is present in $\bA$. If pebbles of color $1, \ldots, t$ are placed on elements $a_1, \ldots, a_t \in A$, we refer to this pebbled structure as $\langle\bA ~|~ a_1, \ldots, a_t\rangle$. For readability, when the context is obvious, we refer to $\langle\bA ~|~ a_1, \ldots, a_t\rangle$ as the $t$-pebbled (or simply pebbled) structure $\bA$. Note that the unpebbled structure $\bA$ corresponds to the pebbled structure $\langle \bA ~|~ \rangle$, with an empty set of pebbles. When $t = 0$, the notation $\langle \bA ~|~ a_1, \ldots, a_t\rangle$ refers to the unpebbled structure $\langle \bA ~|~ \rangle$.

\begin{defi}\label{def:isomorphism}
    Consider two pebbled structures $\langle\bA ~|~ a_1, \ldots, a_t\rangle$ and $\langle\bB ~|~ b_1, \ldots, b_t\rangle$, and let $f: A \to B$ be the map such that:
    \begin{itemize}
    \item for $1 \leq i \leq t$, we have that $f$ maps $a_i \mapsto b_i$.
    \item for each constant symbol $c$ in $\tau$, we have that $f$ maps $c^\bA \mapsto c^\bB$.
    \end{itemize}
    We say the pair $\langle\bA ~|~ a_1, \ldots, a_t\rangle$ and $\langle\bB ~|~ b_1, \ldots, b_t\rangle$ of pebbled structures \emph{matches} (and call it a \emph{matching pair})  iff the map $f$ defined above is an isomorphism between the substructures of $\bA$ and $\bB$ induced by its domain and range.
\end{defi}

Of course, note that the pebbled structures $\langle\bA ~|~ a_1, \ldots, a_t\rangle$ and $\langle\bB ~|~ b_1, \ldots, b_t\rangle$ may form a matching pair even if $\bA$ and $\bB$ are not isomorphic. In fact, they may even form a matching pair without any pebbles placed on them.

\subsection{The \MS game}
Fix $r \in \N$, and two nonempty sets $\cA$ and $\cB$ of $\tau$-structures. We define the $r$-round \MS game on $(\cA, \cB)$ as follows. We refer to $\cA$ and $\cB$ as the \emph{left} and \emph{right} sides, respectively. Informally, in each round of the game, one player (Spoiler) places a pebble on every structure on either the left side or the right side. Then the other player (Duplicator) places a pebble of the same color on every structure on the other side. Duplicator (but not Spoiler) may copy structures to make different placements on distinct copies of the same structure. Duplicator tries to ensure that, at the end of every round, there is some matching pair (Definition \ref{def:isomorphism}); Spoiler tries to eliminate all matching pairs in $r$ or fewer rounds.

We now give a formal description of the game. We view all structures henceforth as pebbled structures. The game proceeds over up to $r$ rounds, by building a sequence of \emph{configurations} $(\cA_0, \cB_0), (\cA_1, \cB_1), \ldots$, where $\cA_0 = \cA$ and $\cB_0 = \cB$. Initially, if there are no matching pairs, then Spoiler wins the $0$-round game. Otherwise, inductively, in each round $t = 1, \ldots, r$, with configuration $(\cA_{t-1}, \cB_{t-1})$, Spoiler selects one of the following moves:
\begin{itemize}
\item $\playleft$:
  \begin{enumerate}
  \item For each pebbled structure $\langle\bA ~|~ a_1, \ldots, a_{t-1}\rangle \in \cA_{t-1}$, Spoiler places a pebble colored $t$ on an element $a_t \in A$, creating the pebbled structure $\langle\bA ~|~ a_1, \ldots, a_t\rangle$. Call the resulting set $\cA_t$.
  \item For each pebbled structure $\langle\bB ~|~ b_1, \ldots, b_{t-1}\rangle \in \cB_{t-1}$, Duplicator \emph{may} make any number of copies of this pebbled structure, and then for each such copy, \emph{must} place a pebble colored $t$ on an element $b_t \in B$, creating the pebbled structure $\langle\bB ~|~ b_1, \ldots, b_t\rangle$. Call the resulting set $\cB_t$.
  \item At the end of this round, if no $t$-pebbled structure in $\cA_t$ forms a matching pair with a $t$-pebbled structure in $\cB_t$, then Spoiler wins the game. Otherwise, if $t < r$, play continues.
   \end{enumerate}
 \item $\playright$: This move is dual to $\playleft$; Spoiler places pebbles colored $t$ on the pebbled structures on the right side, and Duplicator responds on the left side.
   \end{itemize}
At the end of round $r$, Duplicator wins the game if there is still some $r$-pebbled structure on the left and some $r$-pebbled structure on the right forming a matching pair.

After $t$ rounds of the $r$-round \ms game on $(\cA,\cB)$ have been played, we have the collection $\cA_t$ of pebbled structures $\langle\bA ~|~ a_1,\ldots,a_t\rangle$ on the left, and the collection $\cB_t$ of pebbled structures $\langle\bB ~|~ b_1,\ldots,b_t\rangle$ on the right, where $\bA$ and $\bB$ range over $\cA$ and $\cB$ respectively, and $a_i$ and $b_i$ are the elements pebbled by the players on $\bA$ and $\bB$ in round $i$, for $1\leq i\leq t$. A \emph{strategy} for Spoiler is a function that takes as input such a configuration $(\cA_t, \cB_t)$, and provides him with the next move in the game. Specifically, given such a configuration, a strategy tells Spoiler whether to play $\playleft$ or $\playright$; if it tells Spoiler to play $\playleft$, then for each pebbled structure $\langle\bA ~|~ a_1,\ldots,a_t\rangle \in \cA_t$, the strategy provides an element $a_{t+1}\in A$ for Spoiler to play the pebble colored $t+1$ on. Similarly, if it tells Spoiler to play $\playright$, then for each pebbled structure $\langle\bB ~|~ b_1,\ldots,b_t\rangle \in \cB_t$, the strategy provides an element $b_{t+1} \in B$ for Spoiler to play the pebble colored $t+1$ on. A \emph{winning} strategy for Spoiler in the $r$-round \ms game on $(\cA,\cB)$ is a strategy such that Spoiler always wins the game if he follows it.

We say that Duplicator follows the \emph{oblivious} strategy if in each round, for every pebbled structure on the side she plays on, she makes as many copies as there are elements in the universe of the structure, and then plays a pebble on a different element in each copy. It is obvious that if Duplicator wins the $r$-round \ms game on $(\cA,\cB)$, then Duplicator can win  by following the oblivious strategy. From now on, unless we say otherwise, we will assume that Duplicator follows the oblivious strategy in the \ms game.

We next state an important definition.

\begin{defi}\label{def:sepsentence}
Given two sets $\cA$ and $\cB$ of $\tau$-structures, a \emph{separating sentence for $(\cA,\cB)$} is a {\fo}-sentence
$\psi$ such that every $\bA \in \cA$ has $\bA \models \psi$, and every $\bB \in \cB$ has $\bB \models \lnot\psi$.
\end{defi}

Clearly, $\psi$ is a separating sentence for $(\cA, \cB)$ if and only if $\lnot\psi$ is a separating sentence for $(\cB,\cA)$.

Next, we state without proof the fundamental theorem of \ms games, which is in  \cite[Theorem 10]{Immerman81} and \cite[Theorem 1.2]{MSgames1}.

\begin{thmC}[\cite{Immerman81, MSgames1}]\label{thm:ms-main}
Spoiler has a winning strategy for the $r$-round \ms game on $(\cA, \cB)$ if and only if there is an $r$-quantifier separating sentence for $(\cA,\cB)$.
\end{thmC}

As discussed in \cite{Immerman81}, Duplicator's ability to make copies of the structures in the \ms game is crucial. Consider the next example (Figure \ref{fig:3_vs_2}).

\tikzset{RED/.style={draw=red, very thick, fill=none, font=\tiny, circle, inner sep=0.5mm}}
\tikzset{BLUE/.style={draw=blue, very thick, fill=none, font=\tiny, circle, inner sep=0.5mm}}
\tikzset{BLUESMALL/.style={draw=blue, very thick, fill=none, font=\tiny, circle, inner sep=0.3mm}}
\tikzset{BLANK/.style={draw=black, fill=none, font=\tiny, circle, inner sep=0.5mm}}
\tikzset{LABEL/.style={draw=none, fill=none, font=\small, circle, inner sep=0cm}}
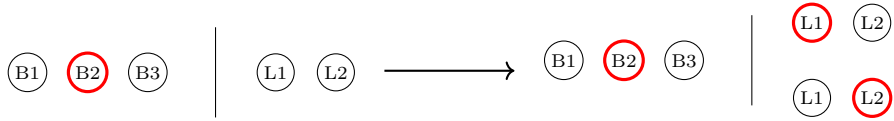
\begin{figure}[ht]
\centering
\begin{tikzpicture}

\node [BLANK] at (-2.3,0) {B1};
\node [RED] at (-1.5,0) {B2};
\node [BLANK] at (-0.7,0) {B3};

\draw (0.2, 0.6) -- (0.2, -0.6);

\node [BLANK] at (1,0) {L1};
\node [BLANK] at (1.8,0) {L2};

\end{tikzpicture}
\begin{tikzpicture}[state/.style={rectangle,draw=none, rounded corners},arrow/.style={->,thick}]
  \node[state] (tail) at (-1,0) {};
  \node[state] (head) at (1,0) {};
  \node[state] (invis) at (0,-0.5) {};
  \draw[arrow] (tail) edge node[auto] {} (head);
\end{tikzpicture}
\begin{tikzpicture}

\node [BLANK] at (-2.3,0) {B1};
\node [RED] at (-1.5,0) {B2};
\node [BLANK] at (-0.7,0) {B3};

\draw (0.2, 0.6) -- (0.2, -0.6);

\node [RED] at (1,0.5) {L1};
\node [BLANK] at (1.8,0.5) {L2};

\node [BLANK] at (1,-0.5) {L1};
\node [RED] at (1.8,-0.5) {L2};

\end{tikzpicture}
\caption{The beginning of a  $2$-round \ms game on $(\cA, \cB)$, where $\cA = \{\lo(3)$\}, the singleton linear order of size $3$, and $\cB = \{\lo(2)\}$, the singleton linear order of size $2$. Spoiler plays his first move (indicated in red) on B2, the middle element of $\lo(3)$. Duplicator then makes a copy of $\lo(2)$ and plays each of the two possible moves in response. This example is a slight variation on the examples in \cite{MSgames1,MSgames2}.}
\label{fig:3_vs_2}
\end{figure}

The game is played on $\{\lo(3)\}$, a singleton set consisting of a linear order of size $3$ on  the left, and $\{\lo(2)\}$, a singleton set consisting of a linear order of size $2$ on the right. Spoiler's only interesting move is to play the middle element B2 on the left, as indicated in red in the figure (it can be easily seen that all other moves by Spoiler lead to defeats). If this were now an \ef game, Duplicator would lose --- a response of L1 would be met with a Spoiler play on B1 in round $2$, while a  response of L2 would be met with a Spoiler play on B3 in round $2$. However, in  the  \ms game, Duplicator can make a second copy of $\lo(2)$, and play \emph{both} possible moves. It should now be obvious that it is impossible for Spoiler to win in the one remaining move, illustrating a clear difference between \ef and \ms games. Indeed, there is a separating sentence for $(\{\lo(3)\}, \{\lo(2)\})$ of quantifier rank $2$, namely, $\exists x (\exists y (y < x) \land \exists y (x < y))$, which uses three quantifiers, but no separating sentence with two quantifiers!

\subsection{Strategies}
A perusal of the proof of Theorem \ref{thm:ms-main} in \cite{Immerman81} or in \cite{MSgames1} reveals that if $\psi$ is a separating sentence for $(\cA,\cB)$ with $r$ quantifiers, then Spoiler has a winning strategy for the $r$-round \ms game on $(\cA,\cB)$ where he ``follows'' $\psi$.  To explain in more precise terms what ``following'' $\psi$ means, assume that $\psi$ is of the form  $Q_1x_1\ldots Q_rx_r\theta$, where $\theta$ is quantifier-free. If $Q_t$ is $\exists$, then Spoiler plays $\playleft$, while if $Q_t$ is $\forall$, then Spoiler plays $\playright$. Assume by induction that at the start of round $t \geq 1$, the configuration has the property that for every $\langle\bA ~|~ a_1,\ldots,a_{t-1}\rangle$ on the left and $\langle\bB ~|~ b_1,\ldots,b_{t-1}\rangle$ on the right in this configuration, we have:
\begin{align*}
\bA&\models Q_tx_t\ldots Q_rx_r\theta(a_1/x_1, \ldots, a_{t-1}/x_{t-1}) \\
\bB&\models \lnot Q_tx_t\ldots Q_rx_r\theta(b_1/x_1, \ldots, b_{t-1}/x_{t-1}).
\end{align*}
If $Q_t$ is $\exists$, then there is some $a_t\in A$ such that
$\bA\models Q_{t+1}x_{t+1}\ldots Q_rx_r\theta(a_1/x_1, \ldots, a_t/x_t)$; in this case, Spoiler picks such an element $a_t$ and plays the pebble on it. If $Q_t$ is $\forall$, then $\bB\models \exists x_t \lnot Q_{t+1}x_{t+1}\ldots Q_rx_r\theta(b_1/x_1, \ldots, b_{t-1}/x_{t-1})$, and so there is an element $b_t \in B$ such that $\bB\models \lnot Q_{t+1}x_{t+1}\ldots Q_rx_r\theta(b_1/x_1, \ldots, b_t/x_t)$; in this case, Spoiler picks such an element $b_t$ and plays the pebble on it. Since in each of these cases, more than one witness to the existential quantifier may exist, there may exist different strategies for Spoiler obtained by ``following'' $\psi$; however, each of these strategies is a winning strategy for Spoiler. We say that a strategy for Spoiler is \emph{obtained} from $\psi$ if it is one of the strategies for Spoiler obtained by following $\psi$ in this way.

The preceding discussion is summarized in the first part of the next result; the second part of the result can be extracted from the proof of Theorem \ref{thm:ms-main}. 

\begin{thm}\label{thm:m-s-game-strong}
Let $r \in \N$, and let $\cA$ and $\cB$ be two sets of $\tau$-structures. Then, the following are true.
\begin{enumerate}
\item If $\psi\equiv  Q_1x_1\ldots Q_rx_r\theta$ is a separating sentence for $(\cA,\cB)$ with $r$ quantifiers, then  Spoiler can win the $r$-round \ms game on $(\cA,\cB)$ using a strategy  obtained from $\psi$. Moreover, for every such strategy and for all pebbled structures $\langle\bA ~|~ a_1,\ldots,a_r\rangle$ on the left and $\langle\bB ~|~ b_1,\ldots,b_r\rangle$ on the right in the final configuration arising by using this strategy, we have $\bA \models \theta(a_1/x_1,\dots,a_r/x_r)$ and $\bB\models \lnot \theta(b_1/x_1,\ldots,b_r/x_r)$.
\item If Spoiler has a winning strategy $\cS$ for the $r$-round \ms game on $(\cA, \cB)$, then there is a separating sentence $\psi$ for $(\cA,\cB)$ with $r$ quantifiers such that the strategy $\cS$  is one of the strategies for Spoiler  obtained from $\psi$. 
\end{enumerate}
\end{thm}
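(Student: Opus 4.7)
The plan is to formalize the sketch preceding the theorem as an induction on the round index $t$, maintaining as invariant that at the start of round $t$, every left pebbled structure $\langle\bA \mid a_1,\dots,a_{t-1}\rangle$ satisfies $Q_t x_t \cdots Q_r x_r \theta(a_1/x_1,\dots,a_{t-1}/x_{t-1})$ and every right structure $\langle\bB \mid b_1,\dots,b_{t-1}\rangle$ satisfies its negation. The base case $t = 1$ is exactly the hypothesis that $\psi$ separates $(\cA,\cB)$. For the inductive step, if $Q_t = \exists$ then Spoiler plays $\playleft$ by choosing a witness $a_t$ for the existential on each left structure (such a witness exists by the invariant), while Duplicator's oblivious response pebbles every $b \in B$ on each right structure; the invariant passes to round $t+1$ because the right-hand invariant unfolds to $\bB \models \forall x_t \lnot Q_{t+1} x_{t+1} \cdots \theta$, so every chosen $b$ satisfies the negated tail. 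The $\forall$ case is symmetric. At $t = r + 1$ the invariant reads $\bA \models \theta(a_1/x_1,\dots,a_r/x_r)$ and $\bB \models \lnot\theta(b_1/x_1,\dots,b_r/x_r)$; since $\theta$ is quantifier-free, no pair in the final configuration can form a matching pair, so Spoiler wins, and the moreover clause is precisely this final invariant.

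\textbf{Plan for Part (2): setup and base case.} For the converse I would strengthen the statement for induction: for every $t \ge 0$, every configuration $(\cA_t,\cB_t)$ of $t$-pebbled structures, and every winning strategy $\cS$ for Spoiler in the remaining $s$-round \ms game on $(\cA_t,\cB_t)$, there is a formula $\psi(x_1,\dots,x_t)$ with exactly $s$ quantifiers such that $\bA \models \psi(a_1,\dots,a_t)$ on every left structure, $\bB \models \lnot\psi(b_1,\dots,b_t)$ on every right structure, and $\cS$ is obtained from $\psi$ in the sense defined just before the theorem. I induct on $s$. The base case $s = 0$ says that $(\cA_t,\cB_t)$ contains no matching pair; since $\tau$ is finite, only finitely many quantifier-free $t$-types over $\tau$ exist, so the sets $T_\cA$ and $T_\cB$ of types realized on the two sides are finite and disjoint, and I take $\psi$ to be the finite disjunction of the formulas defining the types in $T_\cA$.

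\textbf{Plan for Part (2): inductive step and main obstacle.} For the inductive step, suppose $\cS$ calls for $\playleft$ (the $\playright$ case is symmetric). Let $(\cA_{t+1},\cB_{t+1})$ be the configuration produced by executing $\cS$'s first move against oblivious Duplicator, and let $\cS'$ be the restriction of $\cS$ to the subsequent $s-1$ rounds starting from $(\cA_{t+1},\cB_{t+1})$; $\cS'$ is still winning. The inductive hypothesis applied to $\cS'$ yields an $(s-1)$-quantifier formula $\psi'(x_1,\dots,x_{t+1})$ that separates $(\cA_{t+1},\cB_{t+1})$ and from which $\cS'$ is obtained. Setting $\psi = \exists x_{t+1}\,\psi'$, the choice $a_{t+1} = \cS(\langle\bA \mid a_1,\dots,a_t\rangle)$ makes $\langle\bA \mid a_1,\dots,a_t,a_{t+1}\rangle$ lie in $\cA_{t+1}$, so $\bA \models \psi'(a_1,\dots,a_{t+1})$ and hence $\bA \models \psi$; dually, oblivious Duplicator places $\langle\bB \mid b_1,\dots,b_t,b\rangle \in \cB_{t+1}$ for every $b \in B$, so $\bB \models \lnot\psi'(b_1,\dots,b_t,b)$ for every $b$, giving $\bB \models \lnot\psi$. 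The strategy $\cS$ is obtained from $\psi$ because the leading $\exists$ matches Spoiler's $\playleft$, his witnesses $a_{t+1}$ are valid for $\psi'$, and $\cS'$ is inductively obtained from $\psi'$. The main obstacle is precisely this last alignment: to guarantee that $\cS$ is obtained from the constructed $\psi$ (not merely that $\psi$ separates), the inductive construction of $\psi'$ must use the configuration actually produced by $\cS$, so that Spoiler's own witnesses are guaranteed to satisfy $\psi'$.
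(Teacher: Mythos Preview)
Your proposal is correct and follows the same approach the paper indicates: Part~(1) is exactly the formalization of the inductive sketch the paper gives in the paragraph preceding the theorem, and Part~(2) is the standard extraction of a separating sentence from a winning strategy by induction on the number of remaining rounds, which the paper simply cites as ``can be extracted from the proof of Theorem~\ref{thm:ms-main}.'' Your care in applying the inductive hypothesis to the configuration actually produced by $\cS$ (so that Spoiler's own witnesses satisfy $\psi'$) is precisely the point needed to get the ``obtained from $\psi$'' conclusion rather than mere separation.
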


\subsection{Types}

The following concepts will recur throughout this paper. An \emph{atomic} formula is an expression of one of the following forms: an equality $x_i = x_j$ between two distinct variables;  an equality $c_i=x_j$ between a constant symbol and a variable;  an equality $c_i=c_j$ between two distinct constant symbols; an expression $R(y_1,\ldots,y_m)$, where $R$ is an $m$-ary relation symbol in the given schema, for some $m\geq 1$, and $y_1,\ldots,y_m$ are not-necessarily-distinct variables or constant symbols. A \emph{type} over an $r$-tuple of distinct variables, $x_1, \ldots, x_r$, is a conjunction of atomic and negated atomic formulas such that for each atomic formula with variables from $x_1, \ldots, x_r$, exactly one of the atomic formula and its negation appears as a conjunct. We will often denote such a type by $t(x_1, \ldots, x_r)$. If we wish to emphasize the numbers of variables, we can also refer to such a type as an \emph{$r$-type}.
\section{\EF Games vs.~Multi-Structural Games}\label{sec:differenceswithEF}

\EF games capture indistinguishibility with respect to quantifier rank, while \MS games capture indistinguishibility with respect to number of quantifiers. Each of these two families of games yields a method for establishing lower bounds in the corresponding fragment of first-order logic. Here, we compare the two methods and uncover differences between them.

A \emph{property} of $\tau$-structures is a Boolean query on $\tau$, i.e.,~a function $P$ such that for every $\tau$-structure $\bA$,  either $P(\bA)=1$, or $P(\bB)= 0$, and $P$ is invariant under isomorphisms. If $P(\bA)=1$,  we say that $\bA$ \emph{satisfies} $P$. In the rest of the presentation, we shall say Spoiler (resp.~Duplicator) \emph{wins} a certain \ef game to mean that he (resp.~she) has a winning strategy in that game.

\begin{method}[The \ef Game Method] \label{ef-method}
Let $r \in \N$, and let $P$ be a property of $\tau$-structures. Then the following statements are equivalent.
\begin{enumerate}
\item No {\fo}$(\tau)$-sentence of quantifier rank $r$ defines $P$.
\item There are $\tau$-structures $\bA$ and $\bB$ such that $\bA$ satisfies $P$, $\bB$ does not satisfy $P$, and Duplicator wins the $r$-round \ef game on $(\bA,\bB)$.
\end{enumerate}
\end{method}

The direction (b) $\Rightarrow$ (a) captures the ``soundness'' of Method \ref{ef-method}. It follows immediately from the basic property of \ef games that if Duplicator wins the $r$-round \ef game on $(\bA, \bB)$, then every {\fo}-sentence of quantifier rank at most $r$ that is true on $\bA$ is also true on $\bB$.
The direction (a) $\Rightarrow$ (b) captures the ``completeness" of Method \ref{ef-method}. Its proof uses the fact that {\fo}-sentences of quantifier rank $r$ are closed under disjunctions, and also the following properties of the equivalence relation $\equiv^{\mathrm {qr}}_r$, where $\bA \equiv^{\mathrm {qr}}_r \bB$ means that $\bA$ and $\bB$ satisfy the same {\fo}-sentences of quantifier rank at most $r$:
 \begin{enumerate}
 \item $\equiv^{\mathrm {qr}}_r$ has finitely many equivalence classes.
 \item Each equivalence class of  $\equiv^{\mathrm {qr}}_r$ is definable by a {\fo}-sentence of quantifier rank $r$.
\end{enumerate}

We now turn our attention to the \ms game. The following result is an 
 immediate consequence of Theorem \ref{thm:ms-main} and the determinacy of the \ms game.

\begin{method}[The \ms Game Method]\label{ms-method}
Let $r \in \N$, and let $P$ be a property of $\tau$-structures. Then the following statements are equivalent.
\begin{enumerate}
\item No {\fo}-sentence with $r$ quantifiers  defines $P$.
\item Duplicator wins the $r$-round \ms game on  $({\cA}(P),{\cB}(P))$, where ${\cA}(P)$ is the set of all $\tau$-structures that satisfy $P$ and ${\cB}(P)$ is the set of all $\tau$-structures that do not.
\end{enumerate}
\end{method}

Clearly, for every property $P$, at least one of ${\cA}(P)$ and ${\cB}(P)$ is infinite. Thus, it is natural to ask: can the \ms Game Method  be used to prove inexpressibility results by considering finite sets of structures only?  

Let $r \in \N$ and consider  the number-of-quantifiers equivalence relation $\equiv^{\mathrm{qn}}_r$, where ${\bA}\equiv^{\mathrm {qn}}_r {\bB}$ means that ${\bA}$ and ${\bB}$ satisfy the same {\fo}-sentences with   $r$ quantifiers.
The following proposition contains some basic facts about $\equiv^{\mathrm{qn}}_r$.

\begin{restatable}{prop}{qnrfacts}\label{prop:quant-num-equiv}
For every $r \in \N$, the following hold.
\begin{enumerate}
\item $\equiv^{\mathrm {qn}}_r$ has finitely many equivalence classes.
\item For every $\tau$-structure $\bA$, the $\equiv^{\mathrm {qn}}_r$-equivalence class of $\bA$ is definable by the conjunction of all {\fo}-sentences with $r$ quantifiers  satisfied by $\bA$.
\item Each $\equiv^{\mathrm {qn}}_r$-equivalence class is definable by a {\fo}-sentence, but it need not be definable by a {\fo}-sentence with $r$ quantifiers.
\end{enumerate}
\end{restatable}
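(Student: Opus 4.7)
The plan is to handle (a), (b), and (c) in sequence, exploiting two features of the quantifier-count measure that distinguish it from quantifier rank: the number of quantifiers is preserved by negation and by conversion to prenex normal form.

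For (a), I would show that, up to logical equivalence, there are only finitely many {\fo}-sentences with $r$ quantifiers; finiteness of the number of $\equiv^{\mathrm{qn}}_r$-classes then follows immediately, since these classes are determined by such sentences. Converting to prenex normal form preserves the quantifier count and uses at most $r$ distinct bound variables, which I can rename from the fixed pool $\{x_1,\ldots,x_r\}$. There are then only finitely many prefixes $Q_1x_{i_1}\cdots Q_rx_{i_r}$, and the matrix is a Boolean combination of the finitely many atomic formulas over $x_1,\ldots,x_r$ and the constants of $\tau$, which yields only finitely many possibilities up to equivalence.

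For (b), let $\Phi_{\bA}$ be the conjunction of all {\fo}-sentences with $r$ quantifiers that are satisfied by $\bA$; by (a), this is, up to equivalence, a finite conjunction and hence a genuine {\fo}-sentence (this point is precisely why (a) must be established first). Trivially $\bA\models\Phi_{\bA}$, and every structure that is $\equiv^{\mathrm{qn}}_r$-equivalent to $\bA$ satisfies $\Phi_{\bA}$. For the converse, the crucial observation is that $\lnot\psi$ has the same number of quantifiers as $\psi$: so if $\bB\models\Phi_{\bA}$ and $\psi$ is any $r$-quantifier sentence with $\bA\not\models\psi$, then $\bA\models\lnot\psi$, hence $\lnot\psi$ is a conjunct of $\Phi_{\bA}$, and therefore $\bB\models\lnot\psi$. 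Thus $\bA$ and $\bB$ agree on every {\fo}-sentence with $r$ quantifiers.

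For (c), the first claim is immediate from (b). For the second, I would exhibit an explicit counterexample. Taking $\tau=\{R\}$ with $R$ binary, $r=1$, and $\bA$ the two-element structure in which exactly one element has a self-loop, the $\equiv^{\mathrm{qn}}_1$-class of $\bA$ consists precisely of those structures that satisfy both $\exists x\,R(x,x)$ and $\exists y\,\lnot R(y,y)$. This class is defined by the $2$-quantifier sentence $\exists x\,R(x,x)\land\exists y\,\lnot R(y,y)$, but a short case analysis over the four possibilities (up to equivalence) for the quantifier-free part of a candidate one-quantifier sentence $Qx\,\phi(x)$ shows that no one-quantifier sentence can simultaneously enforce the presence of both a self-loop element and a non-self-loop element. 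The main conceptual obstacle in the whole proof is the prenex-form argument in (a); once that is in hand, (b) and (c) reduce to essentially bookkeeping and the choice of a small explicit example.
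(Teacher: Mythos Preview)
Your proposal is correct and follows essentially the same approach as the paper: finitely many $r$-quantifier sentences up to equivalence for (a), closure under negation for (b), and the very same self-loop counterexample for (c). The only difference is that you supply the prenex-form argument explicitly for (a), whereas the paper simply asserts the finiteness; your extra detail is sound and does no harm.
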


\begin{proof}
The equivalence relation $\equiv^{\mathrm{qn}}_r$ has finitely many equivalence classes\footnote{ Technically speaking, this is true only once we have fixed the schema $\tau$. For the sake of readability, we make this assumption throughout.} because, up to logical equivalence, there are only finitely many {\fo}-sentences with $r$ quantifiers. Recall that we made the assumption throughout that our fixed schema $\tau$ has finitely many relation symbols.

Take a structure $\bA$ and consider all {\fo}-sentences with (at most) $r$ quantifiers that are true on $\bA$. Since, up to logical equivalence, there are finitely many such sentences,  the conjunction $\theta_\bA$ of these sentences is a {\fo}-sentence. Moreover,  $\theta_\bA$ defines the $\equiv^{\mathrm{qn}}_r$-equivalence class of $\bA$, because {\fo}-sentences with $r$ quantifiers are closed under negation. Indeed, if $\bB$ satisfies $\theta_\bA$, then clearly every {\fo}-sentence with $r$ quantifiers that is true on $\bA$ is also true on $\bB$. The converse is also true, since, otherwise, we would have a {\fo}-sentence $\psi$ with $r$ quantifiers that is true on $\bB$ but not on $\bA$. But then $\neg \psi$ would be true on $\bA$ and false on $\bB$, which is a contradiction.

In general, the sentence $\theta_\bA$  need not be logically equivalent to any {\fo}-sentences with  $r$ quantifiers. To see this, consider the case $r=1$ and a relational schema consisting of a binary relation symbol $R$. Let $\bA$ be the structure consisting of a self-loop $R(a,a)$ and an isolated node $b$. Then $\bA$ satisfies $\exists xR(x,x)\land \exists y \neg R(y,y)$, but no {\fo}-sentence with a single quantifer logically implies $\exists xR(x,x)\land \exists y \neg R(y,y)$, and hence the $\equiv^{\mathrm{qn}}_r$-equivalence class of $\bA$ cannot be expressed by a {\fo}-sentence with a single quantifier. This generalizes to $r > 1$.
\end{proof}

Note also that, unlike {\fo}-sentences of quantifier rank $r$, {\fo}-sentences with $r$ quantifiers are not closed under finite disjunctions or finite conjunctions.

The next proposition shows that it suffices to consider finite sets in \ms games.

\begin{restatable}{prop}{finitesuffice}\label{prop:finite}
Let $r \in \N$, let $\cA$ and $\cB$ be two sets of $\tau$-structures, and let $\cA'$ and $\cB'$ be the sets of $\tau$-structures obtained from  $\cA$ and  $\cB$ by keeping exactly one structure from each equivalence class 
$\equiv_r^{\rm qn}$ with members in  $\cA$ and $\cB$, respectively.
Then the following hold.

\begin{enumerate}
\item The sets ${\cA}'$ and  ${\cB}'$ are finite.
\item Duplicator wins the $r$-round \ms game on
$({\cA},{\cB})$ if and only if  Duplicator wins the $r$-round \ms game on $({\cA}',{\cB}')$.
\end{enumerate}
\end{restatable}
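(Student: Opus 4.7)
Part (a) is immediate from Proposition \ref{prop:quant-num-equiv}(a): since $\equiv^{\rm qn}_r$ has only finitely many equivalence classes over $\tau$-structures (for our fixed finite schema), the sets $\cA'$ and $\cB'$ of chosen representatives are finite. For part (b), my plan is to argue both directions by contrapositive, using Theorem \ref{thm:ms-main} to convert between ``Spoiler wins'' and ``there exists an $r$-quantifier separating sentence.'' By determinacy of the MS game, it suffices to prove the equivalence with Spoiler in place of Duplicator: Spoiler wins the $r$-round MS game on $(\cA,\cB)$ iff Spoiler wins it on $(\cA',\cB')$.

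The direction from $(\cA,\cB)$ to $(\cA',\cB')$ is the easy one. An $r$-quantifier separating sentence $\psi$ for $(\cA,\cB)$ supplied by Theorem \ref{thm:ms-main} automatically separates $(\cA',\cB')$ since $\cA'\subseteq\cA$ and $\cB'\subseteq\cB$, and a second application of Theorem \ref{thm:ms-main} finishes this direction.

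For the converse, let $\psi$ be an $r$-quantifier separating sentence for $(\cA',\cB')$; the main step is to show that the same $\psi$ separates $(\cA,\cB)$. Given $\bA\in\cA$, pick its $\cA'$-representative $\bA'$: since $\bA\equiv^{\rm qn}_r \bA'$, since $\bA'\models\psi$, and since $\psi$ has $r$ quantifiers, we conclude $\bA\models\psi$. Symmetrically, for any $\bB\in\cB$ with representative $\bB'\in\cB'$, the fact that $\lnot\psi$ also has $r$ quantifiers (negation preserves quantifier count) together with $\bB'\models\lnot\psi$ gives $\bB\models\lnot\psi$. Hence $\psi$ separates $(\cA,\cB)$, so Spoiler wins there by Theorem \ref{thm:ms-main}.

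The only nontrivial observation is precisely this dual use of $\equiv^{\rm qn}_r$ on both sides: it works because the number-of-quantifiers measure is closed under negation, even though, as noted after Proposition \ref{prop:quant-num-equiv}, it is \emph{not} closed under finite disjunction or conjunction. This asymmetry is exactly why a representative-based reduction succeeds here while the analogous Hintikka-style single-sentence-per-class argument (the route used for $\equiv^{\mathrm{qr}}_r$) would fail. Beyond articulating this point, I do not expect any technical obstacle.
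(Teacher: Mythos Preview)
Your proposal is correct and takes essentially the same approach as the paper: both directions are handled via Theorem~\ref{thm:ms-main} and the closure of the $r$-quantifier fragment under negation, transferring the separating sentence between $(\cA',\cB')$ and $(\cA,\cB)$ using $\equiv_r^{\rm qn}$. The only cosmetic difference is that the paper phrases the easy direction as ``Duplicator wins on the subsets $\Rightarrow$ Duplicator wins on the full sets'' (which can also be seen directly game-theoretically), whereas you phrase its contrapositive via the separating sentence; the substantive direction is argued identically in both.
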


\begin{proof}
The first part follows immediately from Proposition \ref{prop:quant-num-equiv}, by the fact that the equivalence relation $\equiv_r^{\mathrm{qn}}$ has finitely many equivalence classes. 

For the second part, since ${\cA}'\subseteq {\cA}$ and ${\cB}'\subseteq {\cB}$, if Duplicator wins the $r$-round \ms game on ${\cA}'$
and  ${\cB}'$, then Duplicator also wins the $r$-round \ms game on ${\cA}$
and  ${\cB}$. For the other direction, assume that Duplicator  wins the $r$-round \ms game on ${\cA}$
and  ${\cB}$. We have to show that Duplicator wins the $r$-round \ms game on ${\cA}'$
and  ${\cB}'$. If this were not true, then Spoiler wins the $r$-round \ms game on ${\cA}'$
and  ${\cB}'$. Hence, by Theorem \ref{thm:ms-main}, there is a {\fo}-sentence $\psi$ with $r$ quantifiers such that $\psi$ is true on every structure in ${\cA}'$ and false on every structure in ${\cB}'$. Since every structure in $\cA$ (resp.~${\cB}$) is $\equiv_r^{\mathrm{qn}}$ to a structure in ${\cA}'$ (resp.~${\cB}'$), it follows that $\psi$ is true on every structure in $\cA$ and false on every structure in $\cB$. Hence, by Theorem \ref{thm:ms-main}, Spoiler wins the $r$-round game on ${\cA}$
and ${\cB}$, which is a contradiction.
\end{proof}

By combining Method \ref{ms-method} and Proposition \ref{prop:finite},
we obtain a method that brings Method \ref{ms-method} closer to Method \ref{ef-method}.

\begin{method}[The \ms Game Method revisited]\label{ms-fin-method}
Let $r \in \N$, and let $P$ be a property of $\tau$-structures. Then the following statements are equivalent.
\begin{enumerate}
\item No {\fo}-sentence with $r$ quantifiers  defines $P$.
\item There are finite sets
$\cA$ and $\cB$ of $\tau$-structures such that every structure in $\cA$ satisfies $P$, no structure in $\cB$ satisfies $P$, and Duplicator wins the $r$-round \ms game on  $({\cA},{\cB})$.
\end{enumerate}
\end{method}

 Can Method \ref{ms-fin-method} be further refined so that inexpressibility results about {\fo}-sentences with $r$ quantifiers are proved by playing \ms games on pairs of \emph{singleton} sets? It is clear that if Spoiler wins the $r$-round \ms game on $(\cA,\cB)$, then for every $\bA \in \cA$ and every $\bB \in \cB$, Spoiler wins the $r$-round \ms game on $(\{\bA\},\{\bB\})$. The next example shows that the converse need not be true, even if $r=1$.

 \begin{exa}\label{ex:notsingletons}
Consider a schema with two unary predicates $R$ and $G$, and consider the structures $\bA$, $\bB$, and $\bC$, where
\begin{itemize}
\item $\bA$ has $\{a_1,a_2\}$ as its universe, $R^{\bA} = \{a_1\}$, $G^{\bA} = \{a_2\}$.
\item $\bB$ has $\{b\}$ as its universe, $R^{\bB} = \{b\}$, $G^{\bB} = \varnothing$.
\item $\bC$ has $\{c\}$ as its universe, $R^{\bC} = \varnothing$, $G^{\bC} = \{c\}$.
\end{itemize}
It is easy to check that Spoiler wins  
  the $1$-round \ms game on 
$(\{{\bA}\},\{{\bB}\})$ and also on  $(\{\bA\},\{{\bC}\})$, but Duplicator wins the $1$-round \ms game on 
 $(\{{\bA}\}, \{{\bB}, {\bC}\})$. 
\end{exa}
Thus, Duplicator may win the \ms game on two sets of structures, but  Duplicator may \emph{not} win the \ms game on any pair of structures from these two sets.

What do \ms games restricted to singleton sets of structures capture? The next result gives the answer.

\begin{thm}\label{thm:MS-bool-comb}
Let $r \in \N$, and let $P$ be a property of $\tau$-structures. Then the following statements are equivalent.
\begin{enumerate}
\item The property $P$ is definable by a Boolean combination of {\fo}-sentences each with $r$ quantifiers.
\item For all $\tau$-structures $\bA$ and $\bB$, if $\bA$ satisfies property $P$ and Duplicator wins the $r$-round \ms game on 
$(\{\bA\},\{\bB\})$, then $\bB$ satisfies property $P$.
\end{enumerate}
\end{thm}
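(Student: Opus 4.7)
The plan is to reduce this theorem to a statement about the number-of-quantifiers equivalence relation $\equiv^{\mathrm{qn}}_r$ from Proposition \ref{prop:quant-num-equiv}. The key intermediate observation is that for any two $\tau$-structures $\bA$ and $\bB$, Duplicator wins the $r$-round \ms game on $(\{\bA\},\{\bB\})$ if and only if $\bA\equiv^{\mathrm{qn}}_r\bB$. By Theorem \ref{thm:ms-main}, Duplicator wins iff there is no separating \fo-sentence with $r$ quantifiers for $(\{\bA\},\{\bB\})$. Since the collection of \fo-sentences with $r$ quantifiers is closed under negation, the existence of such a separating sentence for $(\{\bA\},\{\bB\})$ is equivalent to the existence of one for $(\{\bB\},\{\bA\})$; combining both directions, Duplicator wins exactly when no \fo-sentence with $r$ quantifiers distinguishes $\bA$ from $\bB$, which is precisely $\bA\equiv^{\mathrm{qn}}_r\bB$. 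This reformulates condition (b) as: $P$ is a union of $\equiv^{\mathrm{qn}}_r$-equivalence classes.

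With this reformulation in hand, the direction (a) $\Rightarrow$ (b) is immediate. If $P$ is defined by a Boolean combination $\Phi$ of \fo-sentences each with $r$ quantifiers, then any two $\equiv^{\mathrm{qn}}_r$-equivalent structures agree on every sentence appearing in $\Phi$, hence agree on $\Phi$, so $P$ respects $\equiv^{\mathrm{qn}}_r$.

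For (b) $\Rightarrow$ (a), I would invoke Proposition \ref{prop:quant-num-equiv} directly. Part (a) of that proposition tells us that $\equiv^{\mathrm{qn}}_r$ has only finitely many equivalence classes, with representatives $\bA_1,\ldots,\bA_N$; part (b) tells us that each class $[\bA_i]$ is defined by the sentence $\theta_{\bA_i}$, which is, up to logical equivalence, a finite conjunction of \fo-sentences each with $r$ quantifiers. Because $P$ is a union of such classes, $P$ is the union of the $[\bA_i]$ for $i$ in some $S\subseteq\{1,\ldots,N\}$, so $P$ is defined by $\bigvee_{i\in S}\theta_{\bA_i}$, which is a Boolean combination of \fo-sentences each with $r$ quantifiers. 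The only subtle point---flagged by Proposition \ref{prop:quant-num-equiv}(c) and consistent with Example \ref{ex:notsingletons}---is that $\theta_{\bA_i}$ cannot in general be replaced by a single \fo-sentence with $r$ quantifiers; this is precisely why the conclusion of (a) must allow arbitrary Boolean combinations rather than a single $r$-quantifier sentence, and it is the main conceptual obstacle to strengthening the statement further.
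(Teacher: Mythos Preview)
Your proof is correct and follows essentially the same approach as the paper: both arguments hinge on the observation that Duplicator wins the $r$-round \ms game on $(\{\bA\},\{\bB\})$ iff $\bA\equiv^{\mathrm{qn}}_r\bB$, and both invoke Proposition~\ref{prop:quant-num-equiv} to define $P$ as a finite disjunction of the class-defining conjunctions $\theta_{\bA}$. Your presentation is slightly cleaner in that you isolate the reformulation ``$P$ is a union of $\equiv^{\mathrm{qn}}_r$-classes'' explicitly, whereas the paper embeds this reasoning inline in each direction.
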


\begin{proof}
Assume that $P$ is a property definable by a Boolean combination $\psi$ of {\fo}-sentences each of which has  $r$ quantifiers. WLOG assume that $\psi$ is in disjunctive normal form, i.e.,~$\psi$ is a disjunction $\bigvee_{i=1}^m\theta_i$, where each $\theta_i$ is a conjunction of {\fo}-sentences each with  $r$ quantifiers (observe that the negation of a {\fo}-sentence with $r$ quantifiers is a {\fo}-sentence with $r$ quantifiers). Assume now that $\bA$ is a structure that satisfies $P$ and that  Duplicator wins the $r$-round \ms game on  $(\{\bA\},\{\bB\})$. Since $\bA$ satisfies  $P$,  there is some $i\leq m$ such that $\bA \models \theta_i$. Since  Duplicator wins the $r$-round \ms game on  $(\{\bA\},\{\bB\})$, Theorem \ref{thm:ms-main} implies that every {\fo}-sentence with $r$ quantifiers that is true on $\bA$ is also true on $\bB$, so that every conjunct of $\theta_i$ is true on $\bB$, and hence $\bB\models \theta_i$. This implies that $\bB\models \psi$ and so $\bB$ satisfies  $P$.

Conversely, assume that for all structures $\bA$ and $\bB$, if $\bA$ satisfies  $P$ and  Duplicator wins the $r$-round \ms game on $(\{\bA\},\{\bB\})$, then $\bB$ satisfies  $P$. For every structure $\bA$ satisfying $P$, let $\theta_\bA$ be the 
conjunction of all {\fo}-sentences with $r$ quantifiers  satisfied by $\bA$. By Proposition \ref{prop:quant-num-equiv}, $\theta_\bA$ is a {\fo}-sentence that defines the $\equiv_r^{\mathrm {qn}}$-equivalence class of $\bA$. Let $\psi$ be the (finite) disjunction
$\bigvee_{\bA\models P}\theta_\bA$. Clearly, $\psi$ is a Boolean combination of {\fo}-sentences each with $r$ quantifiers.
We claim that $\psi$ defines  $P$. Assume first that $\bB$ is a structure that satisfies $P$. Then $\bB \models \theta_\bB$, and $\theta_\bB$ is a disjunct  of $\psi$, hence $\bB\models \psi$. Conversely, assume that $\bB$ is a structure such that $\bB \models \psi$. Hence, there is a structure $\bA$ such that $\bA$ satisfies $P$ and $\bB\models \theta_\bA$. Since $\theta_\bA$ defines the $\equiv_r^{\mathrm {qn}}$-equivalence class of $\bA$, it follows that $\bA$ and $\bB$ satisfy the same {\fo}-sentences with $r$ quantifiers. Therefore, by Theorem \ref{thm:ms-main}, we have that  Duplicator wins the $r$-round \ms game on $(\{\bA\},\{\bB\})$,  hence, by our hypothesis, we have that $\bB$ satisfies $P$.
\end{proof}

As an immediate consequence of Theorem \ref{thm:MS-bool-comb}, we obtain the following  method about \ms games on singleton sets.

\begin{method} \label{ms-boolean-method} 
Let $r \in \N$, and let $P$ be a property of $\tau$-structures. Then the following statements are equivalent.

\begin{enumerate}
\item There is no Boolean combination of {\fo}-sentences each with $r$ quantifiers that defines $P$.
\item There are $\tau$-structures $\bA$ and $\bB$ such that  $\bA$ satisfies $P$, $\bB$ does not satisfy $P$, and Duplicator wins the $r$-round \ms game on $(\{\bA\},\{\bB\})$.
\end{enumerate}
\end{method}

Because of the disjunctive normal form and since {\fo}-sentences with $r$ quantifiers are closed under negation, we could replace ``Boolean combination" by ``disjunction of conjunctions" in the statements of Theorem \ref{thm:MS-bool-comb} and Method \ref{ms-boolean-method}.

In summary, we now have the following rather complete picture about the similarities and the differences between \ef games and \ms games:
\begin{enumerate}
\item  $r$-round \ef games played on a pair of structures capture expressibility by a {\fo}-sentence of quantifier rank $r$.

\item  $r$-round \ms games played on sets of structures capture expressibility by a {\fo}-sentence with $r$ quantifiers.

The sets of structures can be taken to be finite, but not singletons.

\item  $r$-round \ms games played on a pair of structures (i.e.,~on singletons) capture expressibility by a Boolean combination of {\fo}-sentences, each with $r$ quantifiers.
\end{enumerate}
The difference between (a) and (c) is because {\fo}-sentences of quantifier rank $r$ are closed under disjunctions and conjunctions, whereas {\fo}-sentences with $r$ quantifiers are not.

We also have the following  trade-off between the \ef Game Method and the \ms Game Method in establishing
inexpressibility results about a property $P$: for \ef games, we can work with pairs of structures, but we need to find them and to show that they have the desired properties in Method \ref{ef-method}; in contrast, for \ms games we can collect all the structures that satisfy property $P$ on one side and all the structures that do not satisfy $P$ on the other, but then it may be  harder to show that Duplicator wins the \ms game on these sets.

\medskip

Duplicator's ``superpower'' in \ms games arises from the ability to make copies of structures while the game is played. The next theorem says that, if enough copies of each structure are made at the beginning of the game, then Duplicator does not need to  make any further copies during gameplay.

\begin{restatable}{thm}{extracopies}\label{thm:dupl}
Fix $r \in \mathbb{N}$, and let $\cA$ and $\cB$ be two sets of finite $\tau$-structures. Then, the following are equivalent.
\begin{enumerate}
\item Duplicator wins the  $r$-round \ms game on $(\cA,\cB)$.
\item Duplicator wins the  $r$-round \ms game \textbf{without} duplicating on $(\mathcal{A}^+, \mathcal{B}^+)$, where 
 $\mathcal{A}^+$ is the multiset consisting of $|A|^r$ copies of each $\bA \in \mathcal{A}$, and $\mathcal{B}^+$ is the multiset consisting  of $|B|^r$ copies of each  $\bB \in \mathcal{B}$.
 \end{enumerate}
 \end{restatable}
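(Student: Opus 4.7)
The plan is to prove both implications by a parallel simulation between the regular MS game on $(\cA,\cB)$ and the MS game without duplicating on $(\cA^+,\cB^+)$, maintaining throughout the play a surjection $\pi$ from the no-duplicating copies on each side to the regular copies of the same structure, with the invariant that every no-duplicating copy shares the pebble sequence of its image regular copy. The multiplicities $|A|^r$ and $|B|^r$ are precisely what is needed to keep every preimage of $\pi$ nonempty for the full $r$ rounds.

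For (b)$\Rightarrow$(a), given Duplicator's winning strategy $\Delta$ in the no-duplicating game on $(\cA^+,\cB^+)$, we construct her strategy in the regular game. Initially, $\pi$ maps every no-duplicating copy of $\bA$ (resp.~$\bB$) to the unique regular copy of $\bA$ (resp.~$\bB$). When Spoiler plays in the regular game, Duplicator simulates in her mind a Spoiler move in the no-duplicating game that plays the same element on every copy in a given preimage, and queries $\Delta$ for its response. Since $\Delta$ may play different elements on different copies in the same preimage, Duplicator---who is allowed to duplicate freely in the regular game---creates one new regular copy per distinct element $\Delta$ plays and refines $\pi$ accordingly. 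The invariant is preserved, and after $r$ rounds the matching pair guaranteed by $\Delta$ lifts via $\pi$ to a matching pair in the regular game.

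For (a)$\Rightarrow$(b), since Duplicator wins the regular game she wins it using the oblivious strategy. Her strategy in the no-duplicating game simulates oblivious play in a parallel regular game, again via a surjection $\pi$. When Spoiler plays on one side of the no-duplicating game, for each regular copy $f$ we let the simulated regular Spoiler play the element that Spoiler chose on the largest portion of $\pi^{-1}(f)$, and $\pi^{-1}(f)$ is updated to retain only those no-duplicating copies, which shrinks the preimage by a factor of at most $|A|$ or $|B|$. When Duplicator herself plays, the simulated oblivious regular Duplicator creates $|A|$ new copies of each existing copy (one per element); we then split the preimage into $|A|$ nonempty groups, assigning each to a new regular copy and playing the corresponding element on every no-duplicating copy in that group. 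A straightforward induction gives $|\pi^{-1}(\cdot)|\geq |A|^{r-t}$ (resp.~$|B|^{r-t}$) after $t$ rounds, so the splits are always feasible and every preimage remains nonempty after $r$ rounds; the matching pair produced by the winning oblivious regular play thus lifts to a matching pair in the no-duplicating game.

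The main obstacle is the adversarial freedom Spoiler has in the no-duplicating game to play different elements on different copies of the same structure, which threatens to collapse the correspondence $\pi$. The plurality-style choice in Spoiler-rounds and the even split in Duplicator-rounds each shrink a preimage by at most a factor of $|A|$ or $|B|$; starting from $|A|^r$ or $|B|^r$ copies, this lets the invariant $|\pi^{-1}(\cdot)|\geq 1$ survive the full $r$ rounds, which is the precise reason the multiplicities $|A|^r$ and $|B|^r$ suffice.
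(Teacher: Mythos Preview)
Your proof is correct, and for the substantive direction (a)$\Rightarrow$(b) it is essentially the same as the paper's: your ``plurality choice'' on Spoiler's side is exactly the paper's ``favorite move'' obtained by pigeonhole, your even split on Duplicator's side matches the paper's distribution of each element over $|B|^{r-t}$ copies, and your invariant $|\pi^{-1}(\cdot)|\ge |A|^{r-t}$ is the paper's Invariant~2 (the paper keeps the count exactly $|A|^{r-t}$ by trimming, but this is cosmetic). One small wording point: you say ``split the preimage into $|A|$ nonempty groups,'' but to maintain the stated invariant you need each group to have size at least $|A|^{r-t}$, i.e., an (approximately) even split; this is clearly your intent and is what the induction requires.

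For the easier direction (b)$\Rightarrow$(a) you take a different route. The paper proves the contrapositive: if Spoiler wins the ordinary game, the separating sentence from Theorem~\ref{thm:ms-main} still separates $(\cA^+,\cB^+)$, so Spoiler wins the new game (they also sketch a direct Spoiler-strategy lift via their Invariant~1). You instead translate a Duplicator winning strategy $\Delta$ for the no-duplicating game directly into a regular-game strategy, by refining the surjection $\pi$ whenever $\Delta$ plays different elements on copies in the same fiber. Your argument is self-contained and does not invoke the Fundamental Theorem; the paper's is a one-line appeal to it. Both are valid.
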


\begin{proof}
Fix a number $r$ of rounds. Let us refer to the ordinary MS game on $(\cA, \cB)$ as the \emph{original} game, and to the MS  game on $(\cA^+, \cB^+)$ where Duplicator is not allowed to make copies as the \emph{new} game. Recall that the game proceeds by building a sequence of configurations $(\cA_0, \cB_0), (\cA_1, \cB_1), \ldots$, where $\cA_0 = \cA$ and $\cB_0 = \cB$. In particular, the configuration at the beginning of round $t$ for $1 \leq t \leq r$ is $(\cA_{t-1}, \cB_{t-1})$, consisting of $(t-1)$-pebbled structures (including when $t = 1$). We define the configurations $(\cA_t^+, \cB_t^+)$ of the new game analogously.

\medskip

\noindent\textbf{Claim 1.} If Spoiler has a winning strategy in the original game on $(\cA, \cB)$, then Spoiler has a winning strategy in the new game on $(\cA^+, \cB^+)$.

\medskip

This direction is quite easy: if Spoiler has a winning strategy in the original game on $(\cA, \cB)$, then by the Fundamental Theorem of MS Games, this means there is some separating FO sentence $\varphi \in \mathcal{L}(\tau)$ such that $\varphi$ is true of all $\bA \in \cA$, and $\varphi$ is false of all $\bB \in \cB$. Since the \emph{set} of structures in $\cA^+$ (resp.~$\cB^+$) is precisely the set of structures in $\cA$ (resp.~$\cB$), this must also mean that $\varphi$ is a separating sentence for $(\cA^+, \cB^+)$ as well. Therefore, Spoiler has a winning strategy in $(\cA^+, \cB^+)$ for the original game. Then, since the only difference between the old and new games is that Duplicator is more restricted in the new game, Spoiler certainly has a winning strategy in the new game for $(\cA^+, \cB^+)$ as well.

For an alternative argument involving an explicit description of Spoiler's winning strategy, we proceed as follows.

Fix an arbitrary instance $(\cA, \cB)$ of the original game, and assume that Spoiler has a winning strategy on this instance in the original game (and WLOG, assume that Duplicator plays obliviously, so that Spoiler's strategy is \emph{non-adaptive} in the following sense: his sequence of moves in the game is determined at the start of the game, before any move has been played). Consider the corresponding instance $(\cA^+, \cB^+)$ of the new game. We shall define a strategy for Spoiler to win this instance $(\cA^+, \cB^+)$ in the new game.

Suppose WLOG in round $t$, Spoiler plays in $\cA_{t-1}$ in the original game; this is well-defined since Spoiler's strategy is non-adaptive. We shall ensure that in round $t$ of the new game, Spoiler plays in $\cA_{t-1}^+$. A symmetric argument will hold if Spoiler plays in $\cB_{t-1}$.

To lift Spoiler's move in round $t$ to a move in the new game, we will ensure that the following invariant holds for all $t$ satisfying $1 \leq t \leq r + 1$:

\medskip

\noindent\emph{Invariant 1.} Every $\langle\bA ~|~ a_1, \ldots, a_{t-1}\rangle \in \cA_{t-1}^+$ in the new game is present in $\cA_{t-1}$ in the original game. Similarly, every $\langle\bB ~|~ b_1, \ldots, b_{t-1}\rangle \in \cB_{t-1}^+$ in the new game is present in $\cB_{t-1}$ in the original game.

\medskip

As long as Invariant 1 holds, Spoiler can take \emph{all} copies of any particular pebbled structure in the new game, and play the move he plays on that structure in the original game; this would be a well-specified strategy.

To see that Invariant 1 holds under this strategy, we shall use induction on $t$. Clearly, it is true for $t = 1$, by the definition of $\cA^+$ and $\cB^+$. Suppose Invariant 1 is true of $t - 1$, and assume Spoiler follows the strategy above in round $t$.

Consider a pebbled structure $\langle\bA ~|~ a_1, \ldots, a_t\rangle \in \cA_t^+$ in the new game. This structure arose because in round $t$, Spoiler played pebble $t$ on element $a_t \in A$ in the pebbled structure $\langle\bA ~|~ a_1, \ldots, a_{t-1}\rangle$. Therefore, the structure $\langle\bA ~|~ a_1, \ldots, a_{t-1}\rangle$ was in $\cA_{t-1}^+$. By induction, this structure $\langle\bA ~|~ a_1, \ldots, a_{t-1}\rangle$ was also in $\cA_{t-1}$ in the original game. By our specified strategy, Spoiler played pebble $t$ on $a_t \in A$ in the new game precisely because he also made that same move in the original game. Therefore, the structure $\langle\bA ~|~ a_1, \ldots, a_t\rangle$ exists at the end of round $t$ in the original game (and hence is in $\cA_t$).

Now consider a pebbled structure $\langle\bB ~|~ b_1, \ldots, b_t\rangle \in \cB_t^+$ in the new game. This structure arose in round $t$ when Duplicator played pebble $t$ on element $b_t \in B$ on the pebbled structure $\langle\bB ~|~ b_1, \ldots, b_{t-1}\rangle$. Hence, $\langle\bB ~|~ b_1, \ldots, b_{t-1}\rangle$ is in $\cB_{t-1}^+$. By induction, this pebbled structure $\langle\bB ~|~ b_1, \ldots, b_{t-1}\rangle$ was also in $\cB_{t-1}$ in the original game. Since Spoiler plays his round $t$ move in $\cA_{t-1}$, we know Duplicator moves obliviously in $\cB_{t-1}$ in the original game, and makes all possible moves, including a move that plays pebble $t$ on $b_t \in B$. (Note that we make no assumptions about Duplicator's strategy in the \emph{new} game.) Hence, the pebbled structure $\langle\bB ~|~ b_1, \ldots, b_t\rangle$ arises during round $t$ in the original game (and hence is in $\cB_t$).

Therefore, Invariant 1 is true of $t$, and so by induction, Spoiler's strategy is well-specified.

By assumption, we used a winning strategy for Spoiler in the original game on $(\cA, \cB)$; therefore, at the end of $r$ rounds in the original game, no pair of pebbled structures from $\cA_r$ and $\cB_r$ forms a matching pair. Suppose in the new game there are pebbled structures $\langle\bA ~|~ a_1, \ldots, a_r\rangle \in \cA_r^+$ and $\langle\bB ~|~ b_1, \ldots, b_r\rangle \in \cB_r^+$ that form a matching pair. By Invariant 1 above, these also appear in $\cA_r$ and $\cB_r$ in the original game, which is a contradiction. Therefore, after $r$ rounds, no pair of pebbled structures from $\cA_r^+$ and $\cB_r^+$ form a matching pair, and hence Spoiler wins the new game on $(\cA^+, \cB^+)$.

\bigskip

\noindent\textbf{Claim 2.} If Duplicator has a winning strategy in the original game on $(\cA, \cB)$, then she has a winning strategy in the new game on $(\cA^+, \cB^+)$.

\medskip

Fix an arbitrary instance $(\cA, \cB)$ of the original game, and consider the corresponding instance $(\cA^+, \cB^+)$ of the new game. We shall define a strategy for Duplicator in this instance $(\cA^+, \cB^+)$ of the new game, and prove that it is well-specified. Once we do this, we shall show that this strategy in the new game essentially ``simulates'' the oblivious strategy in the original instance $(\cA, \cB)$ in the original game; therefore, if Duplicator has a winning strategy in the original instance, then her oblivious strategy would win, and we would then be able to conclude that she wins in the new game as well.

To describe Duplicator's strategy, we need to describe her move in round $t$ of the new game, for each $t$ satisfying $1 \leq t \leq r$. For this, we will ensure that the following invariant holds for all $t$ satisfying $0 \leq t \leq r + 1$:

\medskip

\noindent\emph{Invariant 2.} For each $\langle\bA ~|~ a_1, \ldots, a_t\rangle \in \cA_t^+$, there are $|A|^{r - t}$ copies of it in $\cA_t^+$. Similarly, for each $\langle\bB ~|~ b_1, \ldots, b_t\rangle \in \cB_t^+$, there are $|B|^{r - t}$ copies of it in $\cB_t^+$.

\medskip

Note that Invariant 2 is certainly true for $t = 0$, by the definition of $\cA^+$ and $\cB^+$. We proceed inductively, so suppose Invariant 2 is true of $t - 1$. Assume WLOG that in round $t$ of the new game, Spoiler plays in $\cA_{t-1}^+$. For each $\langle\bA ~|~ a_1, \ldots, a_{t-1}\rangle \in \cA_{t-1}^+$, Spoiler chooses some $a_t \in A$, and plays pebble $t$ on $a_t$, creating $\langle\bA ~|~ a_1, \ldots, a_t\rangle \in \cA_t^+$. By induction, there are $|A|^{r - t + 1}$ copies of $\langle\bA ~|~ a_1, \ldots, a_{t-1}\rangle$, but at most $|A|$ distinct new pebbled structures that can be created as a result of playing pebble $t$. So, by the pigeonhole principle, there must be some $t$-pebbled structure $\langle\bA ~|~ a_1, \ldots, a_t\rangle$ arising from $\langle\bA ~|~ a_1, \ldots, a_{t-1}\rangle$ that has at least $|A|^{r - t}$ copies after this move. We call this move by Spoiler (i.e., playing pebble $t$ on $a_t \in A$) his \emph{favorite move} on the pebbled structure $\langle\bA ~|~ a_1, \ldots, a_{t-1}\rangle \in \cA_{t-1}^+$. Duplicator keeps these $|A|^{r-t}$ copies of $\langle\bA ~|~ a_1, \ldots, a_t\rangle$ in $\cA_t^+$ (breaking ties arbitrarily in case there are multiple such structures with at least that many copies), and deletes all other $t$-pebbled structures arising from $\langle\bA ~|~ a_1, \ldots, a_{t-1}\rangle$. Note that deleting structures only makes it harder for Duplicator to win. This ensures that each set of $|A|^{r - t + 1}$ copies of $\langle\bA ~|~ a_1, \ldots, a_{t-1}\rangle \in \cA_{t-1}^+$ gives rise to $|A|^{r-t}$ copies of \emph{some} $\langle\bA ~|~ a_1, \ldots, a_t\rangle \in \cA_t^+$ (as a result of Spoiler playing his favorite move on those pebbled structures), and these are the only copies arising from that set.

Duplicator also has to respond in $\cB_{t-1}^+$. By induction, each structure $\langle\bB ~|~ b_1, \ldots, b_{t-1}\rangle \in \cB_{t-1}^+$ has $|B|^{r - t + 1}$ copies. For each $b_t \in B$, Duplicator plays pebble $t$ on element $b_t$ in exactly $|B|^{r - t}$ of these copies. This ensures that each set of $|B|^{r - t + 1}$ copies of $\langle\bB ~|~ b_1, \ldots, b_{t-1}\rangle \in \cB_{t-1}^+$ gives rise to $|B|$ new sets of $t$-pebbled structures in $\cB_t^+$, each consisting of $|B|^{r - t}$ copies of some $\langle\bB ~|~ b_1, \ldots, b_t\rangle \in \cB_t^+$ (and each corresponding to a distinct choice of $b_t$), and these are the only copies arising from that set.

Therefore, Invariant 2 is true of $t$; by induction, Duplicator's strategy is well-specified.

We shall now prove that, if Duplicator has a winning strategy for $(\cA, \cB)$ in the original game, then her strategy described above is winning in the new game for $(\cA^+, \cB^+)$. Assume henceforth that Duplicator has a winning strategy for $(\cA, \cB)$ in the original game. In particular, her oblivious strategy wins in the original game.

Consider \emph{any} sequence of moves by Spoiler in the new game on $(\cA^+, \cB^+)$, followed by Duplicator's responses as specified by her strategy above. This describes a complete run of the new game over $r$ rounds. We shall now lift this to a complete run of the original game over $r$ rounds. To do so, we shall maintain the following invariant for $0 \leq t \leq r$:

\medskip

\noindent\emph{Invariant 3.} The multiset $\cA_t^+$ contains $|A|^{r - t}$ copies of a pebbled structure $\langle\bA ~|~ a_1, \ldots, a_t\rangle$ if and only if $\cA_t$ contains one copy of the same structure (and similarly for $\cB_t^+$ and $\cB_t$).

\medskip

Note that Invariant 3 is certainly true for $i = 0$, by the definition of $\cA^+$ and $\cB^+$. We proceed inductively, so suppose Invariant 3 is true of $t - 1$. Assume WLOG that in round $t$ of the new game, Spoiler plays in $\cA_{t-1}^+$. We know from Invariant 2 that each $\langle\bA ~|~ a_1, \ldots, a_{t-1}\rangle \in \cA_{t-1}^+$ comes in a set of $|A|^{r - t + 1}$ copies. By induction, there is one copy of $\langle\bA ~|~ a_1, \ldots, a_{t-1}\rangle$ in $\cA_{t-1}$ in the original game. We shall specify Spoiler's move on this copy in the original game, in order to construct our valid run. By Duplicator's specified strategy in the new game, this set of $|A|^{r - t + 1}$ copies gives rise (through Spoiler's favorite move) to a set of $|A|^{r - t}$ copies of some $\langle\bA ~|~ a_1, \ldots, a_t\rangle \in \cA_t^+$, with all other copies deleted. In the original game, we shall have Spoiler play his favorite move on the copy as well; namely, we shall have him play pebble $t$ on element $a_t \in A$, creating $\langle\bA ~|~ a_1, \ldots, a_t\rangle \in \cA_t$. We repeat this process for all other structures in $\cA_{t-1}$, and hence specify Spoiler's move in round $t$ completely. We have Duplicator respond \emph{obliviously} in $\cB_{t-1}$ in round $t$ in the original game. By Invariant 2, we know each $\langle\bB ~|~ b_1, \ldots, b_{t-1}\rangle \in \cB_{t-1}^+$ comes in a set of $|B|^{r - t + 1}$ copies, which gives rise to $|B|$ sets each comprised of $|B|^{r - t}$ copies of $\langle\bB ~|~ b_1, \ldots, b_t\rangle \in \cB_t^+$ (each such set corresponding to a distinct choice of $b_t$). By induction, there was one copy of $\langle\bB ~|~ b_1, \ldots, b_{t-1}\rangle$ in $\cB_{t-1}$. When Duplicator responds obliviously, she creates one copy of each $\langle\bB ~|~ b_1, \ldots, b_t\rangle \in \cB_t$ in the original game (one copy for each distinct choice for $b_t$). Therefore, each such copy rising in the original game corresponds to one of the sets of $|B|^{r - t}$ copies in the new game. This completely specifies Duplicator's move in round $t$ completely, and also proves that Invariant 3 is true of $t$ as well, finishing the induction.

This specifies one particular run of the original game on $(\cA, \cB)$, where in each round, Spoiler plays his favorite move for that round on all structures in his side, and Duplicator responds obliviously on the other side. Since by assumption, this is a winning instance for Duplicator, this means that her oblivious strategy wins against any Spoiler strategy, and so in particular on this run of the game composed of Spoiler's sequence of favorite moves as well. In particular, there is some $\langle\bA ~|~ a_1, \ldots, a_r\rangle \in \cA_r$ and some $\langle\bB ~|~ b_1, \ldots, b_r\rangle \in \cB_r$ forming a matching pair. By Invariant 3, this matching pair is present in $(\cA_r^+, \cB_r^+)$ as well, and so Duplicator wins this run of the new game.

Since Spoiler's moves in the new game were arbitrary, it follows that Duplicator has a winning strategy in the new game, and we are done.
\end{proof}
\section{Play-on-Top Moves in \ms Games}\label{sec:playontop}

In \cite{MSgames1}, the authors draw attention to a surprising difference between the \ef game and the \ms game. Specifically, in the \ef game, Spoiler gains no advantage by ever placing a pebble on top of an existing pebble or a constant. As it turns out, there are instances of the \ms game where Spoiler wins, but only if he may sometimes play ``on top'', i.e.,~he places a pebble on a constant or on an existing pebble. In this section, we delineate the expressive power of the variant of the \ms game in which Spoiler never plays on top.

We begin with an example which will be used later, in which Spoiler wins the $3$-round \ms game without playing on top. For $s=3,4$, let $\lo(s)$ be the linear order of size $s$, and consider the sentence  
\begin{equation}
    \Phi_{3,\forall}: \forall x_1 \exists x_2 \exists x_3(x_1 < x_2 < x_3 \lor x_2 < x_3 < x_1).
\end{equation}
This sentence says that every element  has two smaller elements or two larger elements, and so $\lo(4)\models \Phi_{3,\forall}$, but $\lo(3)\models \lnot\Phi_{3,\forall}$. Since $\Phi_{3,\forall}$ is a separating sentence for $(\{\lo(4)\},\{\lo(3)\})$, Spoiler wins the $3$-round \ms game on $(\{\lo(4)\},\{\lo(3)\})$. Moreover, it is easy to verify that, by following the sentence $\Phi_{3,\forall}$, Spoiler can win the \ms game on $(\{\lo(4)\},\{\lo(3)\})$ without ever playing on top.

Ideas and results in \cite{MSgames1} and \cite{MSgames2} yield examples where Spoiler wins, but only by playing on top. We give a self-contained proof that Spoiler sometimes needs to play on top, using an example that involves smaller structures and only three rounds (which is the fewest rounds for which playing on top may make a difference if there are no constants in $\tau$).

\tikzset{RED/.style={draw=red, very thick, fill=none, font=\tiny, circle, inner sep=0.5mm}}
\tikzset{BLUE/.style={draw=blue, very thick, fill=none, font=\tiny, circle, inner sep=0.5mm}}
\tikzset{BLUESMALL/.style={draw=blue, very thick, fill=none, font=\tiny, circle, inner sep=0.3mm}}
\tikzset{BLANK/.style={draw=black, fill=none, font=\tiny, circle, inner sep=0.5mm}}
\tikzset{LABEL/.style={draw=none, fill=none, font=\small, circle, inner sep=0cm}}
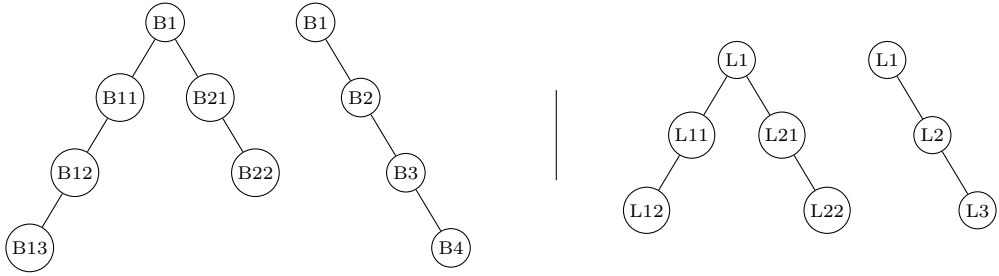
\begin{figure}[ht]
\centering
\begin{tikzpicture}

\node(L1a) [BLANK] at (-6.8,-1.5) {B13};
\node(L1b) [BLANK] at (-6.2,-0.5) {B12};
\node(L1c) [BLANK] at (-5.6,0.5) {B11};
\node(L1d) [BLANK] at (-5,1.5) {B1};
\node(L1e) [BLANK] at (-4.4,0.5) {B21};
\node(L1f) [BLANK] at (-3.8,-0.5) {B22};
\draw (L1a) -- (L1b);
\draw (L1b) -- (L1c);
\draw (L1c) -- (L1d);
\draw (L1d) -- (L1e);
\draw (L1e) -- (L1f);

\node(L2a) [BLANK] at (-3,1.5) {B1};
\node(L2b) [BLANK] at (-2.4,0.5) {B2};
\node(L2c) [BLANK] at (-1.8,-0.5) {B3};
\node(L2d) [BLANK] at (-1.2,-1.5) {B4};
\draw (L2a) -- (L2b);
\draw (L2b) -- (L2c);
\draw (L2c) -- (L2d);

\draw (0.2, 0.6) -- (0.2, -0.6);

\node(R1a) [BLANK] at (1.4,-1) {L12};
\node(R1b) [BLANK] at (2,0) {L11};
\node(R1c) [BLANK] at (2.6,1) {L1};
\node(R1d) [BLANK] at (3.2,0) {L21};
\node(R1e) [BLANK] at (3.8,-1) {L22};

\draw (R1a) -- (R1b);
\draw (R1b) -- (R1c);
\draw (R1c) -- (R1d);
\draw (R1d) -- (R1e);

\node(R2a) [BLANK] at (4.6,1) {L1};
\node(R2b) [BLANK] at (5.2,0) {L2};
\node(R2c) [BLANK] at (5.8,-1) {L3};

\draw (R2a) -- (R2b);
\draw (R2b) -- (R2c);

\end{tikzpicture}
\caption{A rooted tree $\rt(4)$ whose longest branch has $4$ nodes, and a linear order $\lo(4)$ of $4$ nodes, drawn as a rooted tree (left); a rooted tree $\rt(3)$ whose longest branch has $3$ nodes  and a linear order $\lo(3)$ of $3$ nodes  (right).}
\label{fig:2_trees1}
\end{figure}

See Figure \ref{fig:2_trees1}, whose left side contains $\rt(4)$, a rooted tree whose longest branch has $4$ nodes and, $\lo(4)$.  The right side contains $\rt(3)$, whose longest branch has $3$ nodes, and $\lo(3)$. Let $\tau$ have the single binary relation symbol $<$, where $x<y$ means that $x$ is a descendent of $y$ in the trees as drawn, so that larger nodes consistently appear higher on the page.

The following proposition shows that Spoiler needs to play on top to win this instance in the $3$-round \ms game.

\begin{restatable}{prop}{playontopnecessary}\label{prop:rooted}
Consider the sets $\cA=\{\rt(4), \lo(4)\}$ and $\cB=\{\rt(3), \lo(3)\}$ as depicted in Figure \ref{fig:2_trees}. Spoiler wins the $3$-round \ms game on $(\cA,\cB)$, but cannot win this game without playing on top.
\end{restatable}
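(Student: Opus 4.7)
The plan is to prove the two parts of the claim separately.

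For the first part, I plan to exhibit a concrete Spoiler strategy that wins in 3 rounds using on-top play. The intuitive distinguishing property between $\cA$ and $\cB$ is that both structures in $\cA$ contain a chain of length 4 while neither in $\cB$ does, but ``chain of length 4'' a priori requires 4 quantifiers to express. Spoiler will compensate by playing $\playright$ in round 1 to fix pebbles at strategic ``middle'' positions of each $\cB$-structure (e.g., $L2$ in $\lo(3)$ and $L11$ in $\rt(3)$), which destroys Duplicator's ability to match chain-of-3 types with $b_1$ at an extreme of the resulting chain. In subsequent rounds Spoiler plays $\playleft$, selecting for each $(\bA,a_1)\in\cA_1$ a pair $(a_2,a_3)$ that forms a 3-chain with $a_1$ at one endpoint; this works for every $a_1$ except the element $B21\in\rt(4)$, which has only one strict ancestor and one strict descendant. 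For this exceptional $a_1$, Spoiler plays a pebble on top of an existing pebble on $\rt(4)$, forcing every matching response on $\rt(3)$ to share the same equality pattern and thereby excluding any $\rt(3)$-pebbled structure at $b_1=L11$ from matching. I will verify by direct enumeration that, for each pebbled left-hand structure in $\cA_3$, no pebbled right-hand structure in $\cB_3$ forms a matching pair.

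For the second part, that Spoiler cannot win without ever playing on top, I plan to construct an explicit Duplicator strategy against any Spoiler strategy in which every pebble is placed on a fresh element. The key observation is that after round 3 each pebbled structure then has exactly three distinct pebbled elements, so the induced substructure has one of a small finite set of $<$-atomic types. Every three-element $<$-type realized in $\lo(4)$ is also realized in $\lo(3)$, because both are linear orders of size at least three; likewise every three-element $<$-type realized in $\rt(4)$ --- chain of 3, ``V'' with common ancestor, or a mixed comparable/incomparable configuration --- is realized on some three-element subset of $\rt(3)$, since $\rt(3)$ already contains all such local tree patterns. Consequently Duplicator's oblivious response on the $\cB$-side always produces a right-hand pebbled structure realizing the same three-element $<$-type as Spoiler's left-hand pebbled structure. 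I will invoke the standard reduction to Duplicator's oblivious strategy (noted just before Theorem~\ref{thm:m-s-game-strong}) and case-split on which side Spoiler plays in each round to conclude that Duplicator maintains a matching pair through round 3.

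The main obstacle I expect in the first part is pinning down the exact on-top move and verifying rigorously that at $a_1=B21$ no $\rt(3)$-pebbled structure matches: every \emph{distinct} triple at $B21$ is matched by some triple at $L11$ involving $L21,L22$ (whose incomparable-chain mirrors the incomparable-chain $B13<B12<B11$ seen from $B21$), so the on-top move in round 3 is genuinely needed to break this match via equality-pattern mismatch. In the second part the main subtlety is handling Spoiler strategies that alternate $\playleft$ and $\playright$ across rounds; I will handle this by tracking, at every round, the set of three-element $<$-types realized on the right, and using Duplicator's oblivious copying to keep this set a superset of the corresponding left-side set throughout the game, so that a matching pair persists at the end of round 3.
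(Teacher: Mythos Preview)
Your proposed strategy for Part 1 does not work. You have Spoiler open with $\playright$, pebbling $L11$ in $\rt(3)$ and $L2$ in $\lo(3)$. But then Duplicator, among her oblivious responses, pebbles $B21$ in one copy of $\rt(4)$, and she survives two more rounds on this single pair $(\langle\rt(4)\mid B21\rangle,\langle\rt(3)\mid L11\rangle)$ regardless of what Spoiler does, \emph{including} on-top play. From $B21$ there is exactly one element above ($B1$), one below ($B22$), and a chain of incomparables ($B11,B12,B13$); from $L11$ the picture is the same with $L1$, $L12$, and the chain $L21,L22$. Any choice of $(a_2,a_3)$ at $B21$---with or without equalities---is matched by some $(b_2,b_3)$ at $L11$. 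So no $\forall\exists\exists$ strategy wins here. The paper in fact proves that \emph{every} winning Spoiler strategy on this instance must begin with $\playleft$ and must follow the pattern $\exists\forall\exists$; it then exhibits the separating sentence $\Phi_{3,\exists}$ (``some element has one smaller and two larger elements'') to show Spoiler wins.

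Your approach to Part 2 has a genuine gap as well. The observation that every $3$-type on distinct elements realized in $\lo(4)$ or $\rt(4)$ is also realized in $\lo(3)$ or $\rt(3)$ is an \ef-style fact; it does not by itself control the \ms game once Spoiler plays on the right, because his move fixes the pebble position $b_j$ in every right-hand structure and thereby restricts which $3$-types can appear there. Concretely, your argument would equally ``show'' that Spoiler cannot win the $3$-round \ms game on $(\{\lo(4)\},\{\lo(3)\})$ without playing on top---but he can, via $\Phi_{3,\forall}:\forall x_1\exists x_2\exists x_3(x_1<x_2<x_3\lor x_2<x_3<x_1)$, and the paper notes this explicitly. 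The paper's proof of Part 2 instead first pins down that any winning strategy must follow the pattern $\playleft,\playright,\playleft$ with the first pebble on $B2$ or $B3$ in $\lo(4)$, and then does a short case analysis on the linear orders alone to show that, without an on-top move in round 2 on the copy of $\lo(3)$ pebbled at $L2$, Duplicator always keeps a matching pair.
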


\begin{proof}
Consider the {\fo}-sentence
\begin{eqnarray}
    \Phi_{3,\exists}: \exists x_1 \forall x_2 \exists x_3&\Big(&x_2<x_1 \rightarrow x_3 > x_1 ~ \land \notag\\
    &&x_2 >x_1 \rightarrow (x_3 \neq x_2 \land x_3 > x_1) ~ \land \notag\\
    &&x_2=x_1 \rightarrow x_3<x_1 ~~\Big),
\end{eqnarray}
which asserts that there is an element with one smaller element and two larger elements. Clearly, $\Phi_{3,\exists}$ has $3$ quantifiers and is a separating sentence for $(\cA,\cB)$; hence Theorem \ref{thm:ms-main} implies that Spoiler wins the $3$-round MS game on $(\cA,\cB)$. We will show that every winning strategy for Spoiler in the $3$-round MS game requires that Spoiler plays on top. As a stepping stone, we will first establish the following claim.

\smallskip

\noindent{\bf Claim 1:}
For every winning strategy for Spoiler  in the $3$-round \ms game on $(\cA, \cB)$, the following hold:
\begin{enumerate}
\item Spoiler's round $1$ move must be $\playleft$.
\item Spoiler's round $1$ play on $\rt(4)$ must be on either B11 or B12; furthermore, Spoiler's round $1$ play on $\lo(4)$ must be on either B2 or B3.
\item Spoiler's round $2$ move must be $\playright$.
\item Spoiler's round $3$ move must be $\playleft$.
\end{enumerate}

\smallskip

Note that (a), (c), and  (d) also  mean that every $3$-quantifier separating sentence for $(\cA,\cB)$ must have $\exists\forall\exists$ as its quantifier prefix.

\begin{proof}[Proof of Claim 1]
    \begin{enumerate}
    \item We begin by showing that Spoiler's first-round move must be on the left side. For this, it suffices to show that if Spoiler's first-round move is on the right side, then Duplicator can maintain a matching pair with one copy of $\rt(4)$ and one copy of $\rt(3)$ for three rounds. Indeed, in response to a first-round move by Spoiler on L1 in $\rt(3)$,  Duplicator plays on B1 in one copy of $\rt(4)$, and then easily survives two more rounds just on this pair of pebbled structures (e.g.,~in response to a second-round Spoiler play on B12, Duplicator plays on L11 in one copy and L12 in another copy of the $\rt(3)$). First-round Spoiler plays on L11 or L21 are met by a Duplicator play on B21 in one copy of $\rt(4)$, and first-round Spoiler plays on L12 or L22 are met by a Duplicator play on B22 in a copy of $\rt(4)$; in each case, Duplicator readily survives two more rounds on just this pair of structures. Thus, Spoiler's round $1$ move must be $\playleft$.
    \item If Spoiler's first-round move on $\rt(4)$ is not on B11 or on B12, then once again, Duplicator survives three rounds just on the pair $(\{\rt(4)\}, \{\rt(3)\})$ by keeping a matching pair between two copies via the following responses in a copy of $\rt(3)$: B1$\hookrightarrow$L1, B13$\hookrightarrow$L12, B21$\hookrightarrow$L21, B22$\hookrightarrow$L22.
    
    Furthermore, if Spoiler's first-round move on $\lo(4)$ is on B1 or B4, Duplicator survives three rounds just on the pair $(\{\lo(4)\}, \{\lo(3)\})$ by responding on L1 or L3 respectively. Therefore, Spoiler's first-round move on $\lo(4)$ must be on B2 or B3.
    \item Figure \ref{fig:after-first-round} now depicts the state of the game after the first round (with Spoiler having played on B12 on the $\rt(4)$ and on B3 on the $\lo(4)$), showing Duplicator's oblivious response on the right side as well.

\tikzset{RED/.style={draw=red, very thick, fill=none, font=\tiny, circle, inner sep=0.5mm}}
\tikzset{BLUE/.style={draw=blue, very thick, fill=none, font=\tiny, circle, inner sep=0.5mm}}
\tikzset{BLUESMALL/.style={draw=blue, very thick, fill=none, font=\tiny, circle, inner sep=0.3mm}}
\tikzset{BLANK/.style={draw=black, fill=none, font=\tiny, circle, inner sep=0.5mm}}
\tikzset{LABEL/.style={draw=none, fill=none, font=\small, circle, inner sep=0cm}}
\begin{figure}[ht]
\centering
\begin{tikzpicture}

\node(L1a) [BLANK] at (-6.8,-1.5) {B13};
\node(L1b) [RED] at (-6.2,-0.5) {B12};
\node(L1c) [BLANK] at (-5.6,0.5) {B11};
\node(L1d) [BLANK] at (-5,1.5) {B1};
\node(L1e) [BLANK] at (-4.4,0.5) {B21};
\node(L1f) [BLANK] at (-3.8,-0.5) {B22};
\draw (L1a) -- (L1b);
\draw (L1b) -- (L1c);
\draw (L1c) -- (L1d);
\draw (L1d) -- (L1e);
\draw (L1e) -- (L1f);

\node(L2a) [BLANK] at (-3,1.5) {B1};
\node(L2b) [BLANK] at (-2.4,0.5) {B2};
\node(L2c) [RED] at (-1.8,-0.5) {B3};
\node(L2d) [BLANK] at (-1.2,-1.5) {B4};
\draw (L2a) -- (L2b);
\draw (L2b) -- (L2c);
\draw (L2c) -- (L2d);

\draw (0.2, 0.6) -- (0.2, -0.6);

\node(R1a) [BLANK] at (1.4,3) {L12};
\node(R1b) [BLANK] at (2,4) {L11};
\node(R1c) [RED] at (2.6,5) {L1};
\node(R1d) [BLANK] at (3.2,4) {L21};
\node(R1e) [BLANK] at (3.8,3) {L22};

\draw (R1a) -- (R1b);
\draw (R1b) -- (R1c);
\draw (R1c) -- (R1d);
\draw (R1d) -- (R1e);

\node(R2a) [BLANK] at (1.4,1) {L12};
\node(R2b) [RED] at (2,2) {L11};
\node(R2c) [BLANK] at (2.6,3) {L1};
\node(R2d) [BLANK] at (3.2,2) {L21};
\node(R2e) [BLANK] at (3.8,1) {L22};

\draw (R2a) -- (R2b);
\draw (R2b) -- (R2c);
\draw (R2c) -- (R2d);
\draw (R2d) -- (R2e);

\node(R3a) [RED] at (1.4,-1) {L12};
\node(R3b) [BLANK] at (2,0) {L11};
\node(R3c) [BLANK] at (2.6,1) {L1};
\node(R3d) [BLANK] at (3.2,0) {L21};
\node(R3e) [BLANK] at (3.8,-1) {L22};

\draw (R3a) -- (R3b);
\draw (R3b) -- (R3c);
\draw (R3c) -- (R3d);
\draw (R3d) -- (R3e);

\node(R4a) [BLANK] at (1.4,-3) {L12};
\node(R4b) [BLANK] at (2,-2) {L11};
\node(R4c) [BLANK] at (2.6,-1) {L1};
\node(R4d) [RED] at (3.2,-2) {L21};
\node(R4e) [BLANK] at (3.8,-3) {L22};

\draw (R4a) -- (R4b);
\draw (R4b) -- (R4c);
\draw (R4c) -- (R4d);
\draw (R4d) -- (R4e);

\node(R5a) [BLANK] at (1.4,-5) {L12};
\node(R5b) [BLANK] at (2,-4) {L11};
\node(R5c) [BLANK] at (2.6,-3) {L1};
\node(R5d) [BLANK] at (3.2,-4) {L21};
\node(R5e) [RED] at (3.8,-5) {L22};

\draw (R5a) -- (R5b);
\draw (R5b) -- (R5c);
\draw (R5c) -- (R5d);
\draw (R5d) -- (R5e);

\node(RR1a) [RED] at (5,3.5) {L1};
\node(RR1b) [BLANK] at (5.6,2.5) {L2};
\node(RR1c) [BLANK] at (6.2,1.5) {L3};

\draw (RR1a) -- (RR1b);
\draw (RR1b) -- (RR1c);

\node(RR2a) [BLANK] at (5,1) {L1};
\node(RR2b) [RED] at (5.6,0) {L2};
\node(RR2c) [BLANK] at (6.2,-1) {L3};

\draw (RR2a) -- (RR2b);
\draw (RR2b) -- (RR2c);

\node(RR3a) [BLANK] at (5,-1.5) {L1};
\node(RR3b) [BLANK] at (5.6,-2.5) {L2};
\node(RR3c) [RED] at (6.2,-3.5) {L3};

\draw (RR3a) -- (RR3b);
\draw (RR3b) -- (RR3c);

\end{tikzpicture}
\caption{Configuration after round $1$ of the \ms game, with a particular choice of Spoiler's round $1$ moves in red on the left, and Duplicator's oblivious responses in red on the right.}
\label{fig:after-first-round}
\end{figure}

We now show that Spoiler's second-round moves must be on the right side. We can focus entirely on the linear orders on the two sides. It suffices to consider Spoiler playing B3 in round $1$ (as B2 would be symmetric). Figure \ref{fig:orders-after-first-round}
depicts the linear orders on the two sides after round $1$.

\tikzset{RED/.style={draw=red, very thick, fill=none, font=\tiny, circle, inner sep=0.5mm}}
\tikzset{BLUE/.style={draw=blue, very thick, fill=none, font=\tiny, circle, inner sep=0.5mm}}
\tikzset{BLUESMALL/.style={draw=blue, very thick, fill=none, font=\tiny, circle, inner sep=0.3mm}}
\tikzset{BLANK/.style={draw=black, fill=none, font=\tiny, circle, inner sep=0.5mm}}
\tikzset{LABEL/.style={draw=none, fill=none, font=\small, circle, inner sep=0cm}}
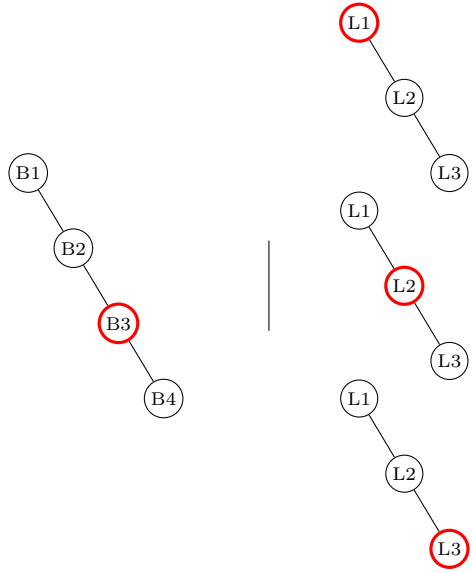
\begin{figure}[ht]
\centering
\begin{tikzpicture}

\node(L2a) [BLANK] at (-3,1.5) {B1};
\node(L2b) [BLANK] at (-2.4,0.5) {B2};
\node(L2c) [RED] at (-1.8,-0.5) {B3};
\node(L2d) [BLANK] at (-1.2,-1.5) {B4};
\draw (L2a) -- (L2b);
\draw (L2b) -- (L2c);
\draw (L2c) -- (L2d);

\draw (0.2, 0.6) -- (0.2, -0.6);

\node(RR1a) [RED] at (1.4,3.5) {L1};
\node(RR1b) [BLANK] at (2,2.5) {L2};
\node(RR1c) [BLANK] at (2.6,1.5) {L3};

\draw (RR1a) -- (RR1b);
\draw (RR1b) -- (RR1c);

\node(RR2a) [BLANK] at (1.4,1) {L1};
\node(RR2b) [RED] at (2,0) {L2};
\node(RR2c) [BLANK] at (2.6,-1) {L3};

\draw (RR2a) -- (RR2b);
\draw (RR2b) -- (RR2c);

\node(RR3a) [BLANK] at (1.4,-1.5) {L1};
\node(RR3b) [BLANK] at (2,-2.5) {L2};
\node(RR3c) [RED] at (2.6,-3.5) {L3};

\draw (RR3a) -- (RR3b);
\draw (RR3b) -- (RR3c);

\end{tikzpicture}
\caption{Linear orders with moves in red after the first round of the \ms game.}
\label{fig:orders-after-first-round}
\end{figure}

If Spoiler's second-round move on $\lo(4)$ is on B3, he already plays on top, and we are done.
If  Spoiler's second-round move is on B1 in $\lo(4)$, then Duplicator would play L1 in the second and third copies, and survive another round on one of these pairs. If Spoiler's second-round move is on B2, Duplicator would play L1 in the second copy and L2 in the third copy, and survive another round on one of these pairs. Finally, if Spoiler's second-round move is on B4, then Duplicator would play L3 on the second copy and survive one more round on this pair. It follows that Spoiler's round $2$ move must be $\playright$.
\item If Spoiler's third-round move is also on the right side, then it is easy to check that Duplicator can maintain a matching pair with the $\lo(4)$ on the left and the third copy of $\lo(3)$ shown on the right. This proves Claim 1. \qedhere
\end{enumerate}
\end{proof}

Now that Claim 1 has been established, we are ready to show that Spoiler cannot win the $3$-round MS game on $(\cA,\cB)$ without playing on top. We can make this argument by considering just the linear orders. By symmetry, we may assume that Spoiler's round $1$ move on the $\lo(4)$ is on B3. Let us focus on the second copy of $\lo(3)$ on the right. If Spoiler's second-round move on that copy is on L3, then Duplicator responds on B4 in $\lo(4)$ and wins on just this pair easily. Therefore, if Spoiler is to not move on top, he must play on L1 on this second copy of $\lo(3)$, and on L1 or L2 on the third copy of $\lo(3)$. If Spoiler plays on L2, then Duplicator responds on B2 in $\lo(4)$ and is guaranteed a matching pair between $\lo(4)$ and one of the copies of $\lo(3)$; if Spoiler plays on L1, then Duplicator responds on B1, and the same conclusion holds. 
 
This completes the proof that Spoiler must play on top in order to win the $3$-round \ms game on $(\cA, \cB)$. \end{proof}

We now analyze the variant of the \ms game in which Spoiler never plays on top. Recall that a \emph{type} over an $r$-tuple of distinct variables, $x_1, \ldots, x_r$, is a conjunction of atomic and negated atomic formulas such that for each atomic formula with variables from $x_1, \ldots, x_r$, exactly one of the atomic formula and its negation appears as a conjunct.

We assume that every {\fo}-sentence with $r$  quantifiers is in prenex normal form, that its quantifier-free part is a disjunction of types, and that the quantifier prefix is $Q_1x_1 \ldots Q_rx_r$, where each $Q_j$ is the quantifier $\exists$ or the quantifier  $\forall$. In other words, we assume that we work with {\fo}-sentences of the form $Q_1x_1 \ldots Q_rx_r \theta$, where $\theta$ is a disjunction of types.

\begin{defi} \label{defn:non-repl}
Let $\psi$ be a {\fo}-sentence of the form $Q_1x_1 \ldots Q_rx_r \theta$, where $\theta$ is a disjunction  of types.
\begin{itemize}
\item A type $t(x_1,\ldots,x_r)$ occurring as a disjunct of $\theta$ is \emph{non-replicating in $\psi$} if conditions $(a)$ and $(b)$ below  hold:
\begin{enumerate}
\item For every two distinct variables $x_i$ and $x_j$, if the equality $x_i = x_j$ appears in  $t(x_1,\ldots,x_r)$, then exactly one of the following two conditions holds:
\begin{enumerate}[label=(\roman*)]
\item Both variables $x_i$ and $x_j$ are universally quantified;
\item One of the two variables is universally quantified, the other variable is existentially quantified, and the existential quantifier appears before the universal quantifier in the quantifier prefix (that is, if $Q_i=\exists$ and $Q_j = \forall$, then we must have $i < j$).
\end{enumerate}
\item If an equality $c_i=x_j$ appears in $t(x_1,\ldots,x_r)$, then the variable $x_j$ is universally quantified.
\end{enumerate}
\item A type $t(x_1,\ldots,x_r)$ occurring as a disjunct of $\theta$ is \emph{replicating in $\psi$} if $t(x_1,\ldots,x_r)$ is not non-replicating in $\psi$.
\item $\psi$ is a \emph{non-replicating} sentence if every
type occurring as a disjunct of $\theta(x_1,\ldots,x_r)$ of $\psi$ is non-replicating in $\psi$.
\item $\psi$ is a \emph{replicating} sentence if it is not a non-replicating sentence (i.e.,~at least one type occurring as a disjunct of  $\theta(x_1,\ldots,x_r)$ of $\psi$ is replicating in $\psi$).
\end{itemize}
\end{defi}

We illustrate the notion of a non-replicating sentence with several examples in which the $<$ relation is assumed to range over the elements of rooted trees (with not all elements necessarily related); thus, instead of spelling out full types explicitly, we will often give a part of the type that determines the full type. 

\begin{exa}\label{exam:eq-free-non-repl}
If $\psi \equiv Q_1x_1 \ldots Q_rx_r \theta$ is such that every type  in $\theta$ is equality-free, then $\psi$ is non-replicating.
\end{exa}

\begin{exa}\label{exam:both-non-repl}
Consider the {\fo}-sentence
\begin{equation}
    \Phi_{3,\forall}: \forall x_1 \exists x_2 \exists x_3(x_1 < x_2 < x_3 \lor x_2 < x_3 < x_1),
\end{equation}
encountered at the beginning of this section. We claim that both
$ \Phi_{3,\forall}$ and  $\lnot  \Phi_{3,\forall}$ are non-replicating sentences.

To see this, first observe that the quantifier-free part of $ \Phi_{3,\forall}$ consists of two equality-free types, hence $ \Phi_{3,\forall}$ is a non-replicating sentence by Example \ref{exam:eq-free-non-repl}.

Secondly, the negation $\lnot\Phi_{3,\forall}$ of $\Phi_{3,\forall}$ is the {\fo}-sentence
\begin{equation}
\exists  x_1 \forall x_2 \forall  x_3 \lnot (x_1 < x_2 < x_3 \lor x_2 < x_3 < x_1).
\end{equation}
This is a non-replicating sentence because the possible equalities in the types of its quantifier-free part are 
$x_1=x_2$, $x_1=x_3$, and $x_2=x_3$. The first two  involve the existentially quantified variable $x_1$ and the universally quantified variables $x_2$ or $x_3$ (hence, they satisfy condition (ii) in Definition \ref{defn:non-repl}(a)), while the third  involves the two universally quantified variables $x_2$ and $x_3$ (hence, it satisfies condition (i) in Definition \ref{defn:non-repl}(a)).
\end{exa}

\begin{exa} \label{exam:one-non-repl}
Consider the {\fo}-sentence
\begin{eqnarray}
    \Phi_{3,\exists}: \exists x_1 \forall x_2 \exists x_3&\Big(&x_2<x_1 \rightarrow x_3 > x_1 ~ \land \notag\\
    &&x_2 >x_1 \rightarrow (x_3 \neq x_2 \land x_3 > x_1) ~ \land \notag\\
    &&x_2=x_1 \rightarrow x_3<x_1 ~~\Big) \label{phi3E}.
\end{eqnarray}
This sentence asserts that there is an element with one smaller element and two larger elements. We claim that 
$ \Phi_{3,\exists}$ is a non-replicating sentence, but   its negation $\lnot  \Phi_{3,\exists}$ is   replicating.

First, it is easy to see that the quantifier-free part of $\Phi_{3,\exists}$
is logically equivalent to a disjunction of types in which 
the only equality is $x_1=x_2$. Since   $x_1$ is  existentially quantified and $x_2$ is universally quantified in $\Phi_{3,\exists}$, the sentence $\Phi_{3,\exists}$ is non-replicating.
The negation $\lnot  \Phi_{3,\exists}$ of $ \Phi_{3,\exists}$ is the {\fo}-sentence
\begin{eqnarray}
     \forall  x_1 \exists x_2 \forall  x_3& \lnot \Big (&x_2< x_1 \rightarrow x_3 > x_1~\land \notag\\
    &&x_2 >x_1 \rightarrow (x_3 \neq x_2 \land x_3 > x_1)~\land \notag\\
    &&x_2=x_1 \rightarrow x_3<x_1 ~~\Big),
\end{eqnarray}
By pushing the negation inside, it is easy to see that, as a disjunction of types, the quantifier-free part of $\lnot\Phi_{3,\exists}$ includes the type $x_1=x_2 < x_3$ as a disjunct. Since  $x_1$  is universally quantified  and $x_2$ is existentially  quantified   in $\neg \Phi_{3,\exists}$, the sentence $\lnot  \Phi_{3,\exists}$ is  replicating.
\end{exa}

We are now ready to state and prove the main result about the expressive power of the variant of the \ms game in which  Spoiler never plays on top.

\begin{thm} \label{thm:non-repl}
 Let $r \in \N$, and let $\cA$ and $\cB$  be two sets of $\tau$-structures. Then the following statements are equivalent:
\begin{enumerate}
\item Spoiler wins the $r$-round \ms game on $(\cA,\cB)$ without ever playing on top.
\item There is a separating sentence $\psi$ for $(\cA,\cB)$  of the form $Q_1x_1\ldots Q_rx_r\theta(x_1, \ldots, x_r)$, where $\theta$ is quantifier-free; moreover, if $\cS$ is a winning strategy for Spoiler 
obtained from $\psi$, then for every pebbled structure
$\langle\bA ~|~ a_1,\ldots,a_r\rangle$ with $\bA \in \cA$, and for every pebbled structure $\langle\bB ~|~ b_1,\ldots,b_r\rangle$ with $\bB \in \cB$ arising by using this strategy, the following hold:
\begin{enumerate}[label=(\roman*)]
\item The  disjunct  of $\theta$ satisfied by $(a_1,\ldots,a_r)$ is a non-replicating type in $\psi$.
\item The  disjunct  of $\lnot \theta$  satisfied by $(b_1,\ldots,b_r)$ is a non-replicating type in $\lnot \psi$.
\end{enumerate}
\end{enumerate}
\end{thm}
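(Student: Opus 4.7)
Direction (b)$\Rightarrow$(a) proceeds by a direct case analysis. Suppose, for contradiction, that while executing $\cS$ Spoiler plays on top at some round $i$ on some structure. Consider the case where this happens on the left, so $Q_i = \exists$ (the right case is symmetric, carried out inside $\lnot\psi$). Either $a_i = a_j$ for some $j < i$, in which case $x_i = x_j$ appears in the type $t$ of $(a_1,\ldots,a_r)$, but neither Definition~\ref{defn:non-repl}(a)(i) (blocked by $Q_i = \exists$) nor (a)(ii) (which would demand the existential $x_i$ to come before its universal partner, contradicting $j < i$) can hold; or $a_i = c^{\bA}$, so $c = x_i$ appears in $t$ with $Q_i = \exists$, violating condition (b) of Definition~\ref{defn:non-repl}. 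Either way the realized type is replicating in $\psi$, contradicting condition (b)(i) of the theorem.

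For (a)$\Rightarrow$(b), given a winning strategy $\cS$ that never plays on top, I construct $\psi$ directly. Let $Q_i$ be $\exists$ if $\cS$ plays $\playleft$ at round $i$, and $\forall$ if it plays $\playright$. Have Spoiler execute $\cS$ and Duplicator play obliviously; for each $\bA \in \cA$ let $T_\bA$ denote the set of types realized by the final tuples $(a_1,\ldots,a_r)$ in $\cA_r$. Set $\theta := \bigvee_{\bA,\, t \in T_\bA} t$ and $\psi := Q_1 x_1 \cdots Q_r x_r\,\theta$. A standard game-semantic argument shows $\psi$ is separating for $(\cA,\cB)$: each $\bA \in \cA$ satisfies $\psi$ because $\cS$ supplies the existential witnesses while oblivious Duplicator covers every universal choice, each producing a tuple whose type is in $\theta$; and if some $\bB \in \cB$ satisfied $\psi$, then a witnessing strategy for the existentials on $\bB$ (covered by Duplicator's oblivious play on the right) together with Spoiler's plays via $\cS$ would produce a tuple in $\cB_r$ whose type lies in $\theta$, yielding a matching pair with some element of $\cA_r$ and contradicting that $\cS$ wins.

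To verify conditions (i) and (ii) of the theorem, note that by maximality of types the disjunct of $\theta$ satisfied by any final left tuple $(a_1,\ldots,a_r)$ is exactly its type. An equality $x_i = x_j$ in this type means $a_i = a_j$: if $Q_i = Q_j = \exists$, then Spoiler placed both pebbles, impossible without playing on top; if the quantifiers are mixed and the universal pebble is placed first, then the later existential pebble is on top; hence either both quantifiers are universal (case (a)(i) of Definition~\ref{defn:non-repl}) or the existential strictly precedes the universal (case (a)(ii)). An equality $c = x_i$ in the type forces $Q_i = \forall$, since $Q_i = \exists$ would mean Spoiler placed a pebble on the constant $c^{\bA}$. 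The right side is handled analogously in $\lnot\psi$: writing $\lnot\theta$ in disjunctive normal form, its disjuncts are exactly the complete types not appearing in $\theta$, and the same case split applies with $\exists$ and $\forall$ interchanged, matching the fact that on the right Spoiler plays at the $\forall$-rounds of $\psi$ and Duplicator at the $\exists$-rounds.

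The main obstacle is the forward direction, specifically isolating $\psi$ in the precise normal form ``prenex with quantifier-free part a disjunction of complete types'' so that the syntactic non-replication conditions of Definition~\ref{defn:non-repl} align exactly with the semantic statement that Spoiler never plays on top. Once $\psi$ is chosen in this way and proven separating via the game-semantic argument, the verification of (i) and (ii) reduces to a uniform case split over who placed each pebble and in what order.
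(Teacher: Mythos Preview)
Your proof is correct and follows essentially the same approach as the paper's. The only difference is expository: where the paper invokes Theorem~\ref{thm:m-s-game-strong}(b) as a black box to obtain the separating sentence $\psi$ from the given non-play-on-top strategy, you reconstruct $\psi$ explicitly by taking $\theta$ to be the disjunction of types realized on the left under oblivious Duplicator play; the subsequent verification of (i) and (ii) via case analysis on who placed each pebble, and the (b)$\Rightarrow$(a) direction, are argued identically in both proofs.
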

\begin{proof}
Assume  that  Spoiler wins the $r$-round \ms game on $\cA$ and $\cB$ without ever playing on top. By Theorem \ref{thm:m-s-game-strong}, there is a {\fo}-sentence 
$\psi \equiv Q_1x_1 \ldots Q_rx_r \theta(x_1,\ldots,x_r)$ with $r$ quantifiers such that $\psi$ is separating for  $(\cA,\cB)$; moreover, if $\cS$
is a winning strategy for Spoiler obtained from $\psi$, then for every pebbled structure
$\langle\bA ~|~ a_1,\ldots,a_r\rangle$ with $\bA \in \cA$, and every pebbled structure $\langle\bB ~|~ b_1,\ldots,b_r\rangle$ with $\bB \in \cB$ arising by using this strategy, we have that $\bA \models \theta(a_1,\ldots,a_r)$ and $\bB \models \lnot \theta(b_1,\ldots,b_r)$. Fix two such pebbled structures
$\langle\bA ~|~ a_1,\ldots,a_r\rangle$ with $\bA \in \cA$ and $\langle\bB ~|~ b_1,\ldots,b_r\rangle$ with $\bB \in \cB$. Let $t(x_1,\ldots,x_r)$ be the type that occurs as a disjunct of $\theta(x_1,\ldots,x_r)$
and is satisfied by $(a_1,\ldots,a_r)$. We claim that $t(x_1,\ldots,x_r)$ is a non-replicating type in $\psi$.  Otherwise, one of the following three things would happen: 
(1) $t(x_1,\ldots,x_r)$ would contain an equality $x_i=x_j$, where both the $i$-th and the $j$-th quantifier of $\psi$ are $\exists$; this implies that $a_i=a_j$, but since both $a_i$ and $a_j$ were played by  Spoiler, this means that Spoiler  played on top, which is a contradiction.
(2) $t(x_1,\ldots,x_r)$ would contain an equality $x_i=x_j$, where the $i$-th quantifier of $\psi$ is $\forall$, the $j$-th quantifier of $\psi$ is $\exists$, and $i<j$; this implies that $a_i=a_j$, but since $a_j$ was played by  Spoiler, this means that Spoiler  played on top, which is a contradiction. (3)  $t(x_1,\ldots,x_r)$ would contain an equality $c_i=x_j$, where the $j$-th quantifier of $\psi$  is $\exists$; this implies that $c_i=a_j$, but $a_j$ was played by Spoiler, hence Spoiler  played on top, which is a contradiction.

A similar argument shows that the disjunct of $\lnot \theta$ satisfied by $(b_1,\ldots,b_r)$ is a non-replicating type in $\lnot \psi$.

Conversely, assume  that there is a separating sentence $\psi$ for $(\cA,\cB)$
of the form
$ Q_1x_1\ldots Q_rx_r\theta(x_1, \ldots, x_r)$, where $\theta$ is quantifier-free; further, assume that if $\cS$ is a winning strategy for Spoiler obtained from $\psi$, then  the properties asserted in (b) hold. We claim that  Spoiler never plays on top using  strategy $\cS$. Consider a pebbled structure $\langle\bA ~|~ a_1,\ldots,a_r\rangle$ with $\bA \in \cA$ and a pebbled structure $\langle\bB ~|~ b_1,\ldots,b_r\rangle$ with $\bB \in \cB$ arising by using this strategy.
Let us examine the $j$-th move of Spoiler, where $1\leq j\leq r$. There are two cases to consider, namely, the case in which $Q_j=\exists$ and the case in which $Q_j=\forall$. If $Q_j=\exists$, then Spoiler has to place the $j$-th pebble on an element of $\bA$.
Since  Spoiler plays using the strategy ${\cS}$, Part (a) of Theorem \ref{thm:m-s-game-strong} tells that $\bA \models \theta(a_1,\ldots,a_r)$.
Consequently, the tuple $(a_1,\ldots,a_r)$ must satisfy a type
$t(x_1,\ldots,x_r)$ occurring as a disjunct of $\theta$.
Thus,  $t(x_1,\ldots,x_r)$ must be a non-replicating type in $\psi$.  Since the $j$-th quantifier of $\psi$ is $\exists$, the type  $t(x_1,\ldots,x_r)$  cannot contain an equality of the form $x_i = x_j$ with $i<j$ or an equality of the form $c_i=x_j$; instead, it must contain $\lnot (x_i=x_j)$ and $\lnot (c_i=x_j)$. Therefore, $a_i\not = a_j$, for every $i<j$, and also $c_i \not = a_j$, where $c_i$ is a constant; hence,  Spoiler does not play on top on $\bA$. If $Q_j=\forall$, the argument is similar using the fact that the type in $\lnot\theta$ satisfied by $(b_1,\ldots,b_r)$ must be non-replicating in $\lnot\psi$.
\end{proof}

\begin{cor}\label{cor-non-repl}
Let $r\in\N$, and let $\cA$ and $\cB$ be two sets of $\tau$-structures. If there is a separating sentence $\psi$ for $(\cA,\cB)$ with $r$ quantifiers such that both $\psi$ and $\lnot \psi$ are non-replicating sentences, then  Spoiler wins the $r$-round \ms game on $(\cA,\cB)$ without ever playing on top. 
\end{cor}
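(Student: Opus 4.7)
The plan is to derive this corollary as an essentially immediate consequence of the (b) $\Rightarrow$ (a) direction of Theorem \ref{thm:non-repl}. The hypothesis gives us a separating sentence $\psi \equiv Q_1x_1 \ldots Q_rx_r \theta(x_1,\ldots,x_r)$ for $(\cA,\cB)$ with $r$ quantifiers, and it remains only to verify that conditions (i) and (ii) of part (b) of Theorem \ref{thm:non-repl} are automatically satisfied whenever $\psi$ and $\lnot\psi$ are both non-replicating.

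To that end, I would fix an arbitrary winning strategy $\cS$ for Spoiler obtained from $\psi$, and take any pebbled structures $\langle\bA ~|~ a_1,\ldots,a_r\rangle$ (with $\bA \in \cA$) and $\langle\bB ~|~ b_1,\ldots,b_r\rangle$ (with $\bB \in \cB$) arising in the final configuration produced by $\cS$. By Theorem \ref{thm:m-s-game-strong}(a), we have $\bA \models \theta(a_1,\ldots,a_r)$ and $\bB \models \lnot\theta(b_1,\ldots,b_r)$. Hence $(a_1,\ldots,a_r)$ satisfies some disjunct $t(x_1,\ldots,x_r)$ of $\theta$, and, writing $\lnot\psi$ in prenex normal form with flipped quantifier prefix $\overline{Q_1}x_1\ldots\overline{Q_r}x_r\theta'$ (where $\theta'$ is the disjunction of all $r$-types not occurring in $\theta$), the tuple $(b_1,\ldots,b_r)$ satisfies some disjunct $t'(x_1,\ldots,x_r)$ of $\theta'$.

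Now the key observation: by Definition \ref{defn:non-repl}, saying that $\psi$ is non-replicating means that \emph{every} disjunct of $\theta$ is a non-replicating type in $\psi$; in particular the specific disjunct $t$ is non-replicating in $\psi$, establishing condition (i). The identical reasoning applied to $\lnot\psi$ (which is non-replicating by hypothesis) forces $t'$ to be non-replicating in $\lnot\psi$, which is condition (ii). Applying the (b) $\Rightarrow$ (a) direction of Theorem \ref{thm:non-repl} then yields that Spoiler wins the $r$-round \ms game on $(\cA,\cB)$ without ever playing on top.

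There is no real obstacle in this argument, since the corollary is essentially just a specialization of Theorem \ref{thm:non-repl} where the ``for every pair of pebbled structures arising under $\cS$'' quantification in part (b) is trivialized by the \emph{global} non-replicating hypothesis on $\psi$ and $\lnot\psi$. The only mild point to be careful about is the passage from $\psi$ to $\lnot\psi$ in prenex-DNF form, i.e., noting that the disjuncts of the quantifier-free part of $\lnot\psi$ are exactly the $r$-types complementary to those appearing in $\theta$, so that ``non-replicating in $\lnot\psi$'' is a well-defined condition that the hypothesis directly supplies.
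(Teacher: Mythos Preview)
Your proposal is correct and follows exactly the intended route: the paper states the corollary without proof, treating it as an immediate consequence of the (b) $\Rightarrow$ (a) direction of Theorem~\ref{thm:non-repl}, and your argument spells out precisely that verification. The only (harmless) extra step you take is invoking Theorem~\ref{thm:m-s-game-strong}(a) to justify that pebbled tuples satisfy $\theta$ and $\lnot\theta$; this is already built into the statement of condition (b) in Theorem~\ref{thm:non-repl}, so you could shorten your argument slightly by observing directly that the global non-replicating hypotheses on $\psi$ and $\lnot\psi$ make conditions (i) and (ii) hold vacuously for every disjunct, without needing to name specific tuples.
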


Consider the variant of the \ms game in which  Spoiler never plays on top on the left and the variant of the \ms game in which  Spoiler never plays on top on the right. The proof of Theorem \ref{thm:non-repl} can be adapted to yield analogous results for these two variants. For example, we have the following results, whose proofs we omit.

\begin{thm} \label{thm:left-non-repl}
 Let $r \in \N$, and let $\cA$ and $\cB$ be two sets of $\tau$-structures. Then the following statements are equivalent:
\begin{enumerate}
\item Spoiler wins the $r$-round \ms game on $(\cA,\cB)$ without ever playing on top on the left side.
\item There is a separating sentence $\psi$ for $(\cA,\cB)$  of the form $Q_1x_1\ldots Q_rx_r\theta(x_1, \ldots, x_r)$, where $\theta$ is quantifier-free; moreover, if $\cS$ is a winning strategy for Spoiler obtained from $\psi$, then for every $\langle\bA ~|~ a_1,\ldots,a_r\rangle$ with $\bA \in \cA$ arising by using this strategy, the  disjunct of $\theta$ satisfied by $(a_1,\ldots,a_r)$ is a non-replicating type in $\psi$.
\end{enumerate}
\end{thm}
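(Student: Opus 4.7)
The plan is to adapt the proof of Theorem \ref{thm:non-repl} almost verbatim, simply dropping the right-hand side of the non-replicating requirement. The key observation driving the adaptation is that in the proof of Theorem \ref{thm:non-repl}, the hypothesis that the disjunct of $\theta$ satisfied by $(a_1,\ldots,a_r)$ is non-replicating in $\psi$ is used \emph{only} to handle rounds $j$ with $Q_j=\exists$ (when Spoiler plays on the left), while the corresponding hypothesis about $\lnot\psi$ is used \emph{only} for rounds with $Q_j=\forall$ (when Spoiler plays on the right). Since the present theorem restricts Spoiler only on the left, we retain exactly the first half of that argument.

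For the direction (a)$\Rightarrow$(b), I would start from a winning strategy for Spoiler in which he never plays on top on the left, and apply Theorem \ref{thm:m-s-game-strong}(2) to extract a separating sentence $\psi \equiv Q_1x_1\ldots Q_rx_r\theta(x_1,\ldots,x_r)$, with $\theta$ in disjunctive form, from which this strategy is obtained. Fix an arising pebbled structure $\langle\bA ~|~ a_1,\ldots,a_r\rangle$ with $\bA\in\cA$; by Theorem \ref{thm:m-s-game-strong}(1), the tuple $(a_1,\ldots,a_r)$ satisfies some disjunct $t(x_1,\ldots,x_r)$ of $\theta$. I would then verify that $t$ must be non-replicating in $\psi$, exactly as in the proof of Theorem \ref{thm:non-repl}: if $t$ contained a forbidden equality $x_i=x_j$ with $Q_i,Q_j=\exists$, or with $Q_i=\forall, Q_j=\exists, i<j$, or a forbidden equality $c_i=x_j$ with $Q_j=\exists$, then in each case the corresponding element $a_j$ (played by Spoiler on the left, since $Q_j=\exists$) would coincide with a previously placed pebble or with a constant, contradicting the hypothesis.

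For the direction (b)$\Rightarrow$(a), I would fix a strategy $\cS$ obtained from $\psi$; by Theorem \ref{thm:m-s-game-strong}(1), $\cS$ is winning. Now consider an arising $\langle\bA ~|~ a_1,\ldots,a_r\rangle$ and a round $j$ in which Spoiler plays on the left, i.e.~$Q_j=\exists$. The tuple $(a_1,\ldots,a_r)$ satisfies some disjunct $t$ of $\theta$, which by hypothesis is non-replicating in $\psi$. Because $Q_j=\exists$, conditions (a)(i)--(ii) and (b) of Definition \ref{defn:non-repl} jointly prohibit any equality $x_i=x_j$ with $i<j$ and any equality $c_i=x_j$ from appearing in $t$; hence $a_j\neq a_i$ for $i<j$ and $a_j\neq c^{\bA}$ for every constant symbol $c$, so Spoiler's round-$j$ move on $\bA$ is not on top. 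The argument does not need to say anything about equalities $x_i=x_j$ with $i>j$ (which correspond to later on-top moves \emph{by Duplicator} on the left when $Q_i=\forall$), and this is precisely where the one-sided variant differs from Theorem \ref{thm:non-repl}; noting this asymmetry is the only subtlety, and once noted the proof is a direct transcription.
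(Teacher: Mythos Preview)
Your proposal is correct and matches the paper's approach exactly: the paper omits the proof of Theorem~\ref{thm:left-non-repl} and states only that it is obtained by adapting the proof of Theorem~\ref{thm:non-repl}, which is precisely what you do. Your identification of the key asymmetry---that the non-replicating hypothesis on disjuncts of $\theta$ governs only the $Q_j=\exists$ rounds (Spoiler on the left), while the dropped hypothesis on $\lnot\theta$ would have governed the $Q_j=\forall$ rounds---is the right observation, and your handling of both directions faithfully transcribes the relevant half of the argument in Theorem~\ref{thm:non-repl}.
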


\begin{cor}\label{cor:left}
Let $r \in \N$, and let $\cA$ and $\cB$ be two sets of $\tau$-structures. If there is a separating sentence $\psi$ for $(\cA,\cB)$
 with $r$ quantifiers such that  $\psi$ is a  non-replicating sentence, then  Spoiler wins the $r$-round \ms game on $(\cA,\cB)$ without ever playing on top on the left side.
\end{cor}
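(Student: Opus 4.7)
The plan is to obtain Corollary \ref{cor:left} as an essentially immediate consequence of Theorem \ref{thm:left-non-repl}. The hypothesis gives a non-replicating separating sentence $\psi$ for $(\cA,\cB)$ with $r$ quantifiers; by the standing convention stated just before Definition \ref{defn:non-repl}, we may assume $\psi$ has the prenex form $Q_1x_1\ldots Q_rx_r\theta(x_1,\ldots,x_r)$ with $\theta$ a disjunction of types. This matches the syntactic form required in part (b) of Theorem \ref{thm:left-non-repl}, so the only thing left to verify is the condition on pebbled structures arising on the left.

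The key observation is that being non-replicating is a \emph{uniform} condition on $\psi$: by Definition \ref{defn:non-repl}, \emph{every} type occurring as a disjunct of $\theta$ is non-replicating in $\psi$. Hence the side condition appearing in Theorem \ref{thm:left-non-repl}(b) becomes trivial. Concretely, fix any winning strategy $\cS$ for Spoiler obtained from $\psi$ (which exists by Theorem \ref{thm:ms-main} or, more precisely, by the construction underlying Theorem \ref{thm:m-s-game-strong}(a)). For any pebbled structure $\langle \bA \mid a_1,\ldots,a_r\rangle$ with $\bA\in\cA$ in the final configuration reached via $\cS$, Theorem \ref{thm:m-s-game-strong}(a) guarantees $\bA\models\theta(a_1,\ldots,a_r)$. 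Therefore $(a_1,\ldots,a_r)$ satisfies some disjunct $t(x_1,\ldots,x_r)$ of $\theta$, and by the non-replicating hypothesis on $\psi$, this $t$ is non-replicating in $\psi$.

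With condition (b) of Theorem \ref{thm:left-non-repl} verified by $\psi$ itself, the equivalence in that theorem yields condition (a): Spoiler wins the $r$-round \ms game on $(\cA,\cB)$ without ever playing on top on the left side. Since the whole argument is a bookkeeping step that feeds the non-replicating hypothesis into the already-proved characterization, there is no real obstacle; the only thing one has to be a little careful about is that ``non-replicating in $\psi$'' in Definition \ref{defn:non-repl} is a condition that holds for \emph{every} disjunct simultaneously, so it automatically covers whichever disjunct happens to be satisfied by the tuple $(a_1,\ldots,a_r)$ produced by Spoiler's strategy.
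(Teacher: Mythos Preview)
Your proof is correct and follows exactly the intended route: the paper omits the proof of Corollary~\ref{cor:left}, but it is meant to follow from Theorem~\ref{thm:left-non-repl} in precisely the way you describe, mirroring how Corollary~\ref{cor-non-repl} follows from Theorem~\ref{thm:non-repl}. The only substantive step---that a non-replicating $\psi$ makes \emph{every} disjunct of $\theta$ non-replicating, so the side condition in part~(b) is automatic---is exactly the point, and you identify it cleanly.
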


We revisit Example \ref{exam:both-non-repl}. Since the sentence $\Phi_{3,\forall}$ asserts that every element has two smaller elements or two larger elements,  $\Phi_{3,\forall}$ is a separating sentence for $(\{\lo(4)\},\{\lo(3)\})$. According to Example \ref{exam:both-non-repl}, both
$\Phi_{3,\forall}$ and $\lnot \Phi_{3,\forall}$ are non-replicating sentences, hence Corollary \ref{cor-non-repl} implies that Spoiler can win the $3$-round \ms game on $(\{\lo(4)\}, \{\lo(3)\})$ without playing on top. As mentioned earlier, we can also verify this directly.
Note that $\Phi_{3,\forall}$ is  a separating sentence for
$(\{\lo(4)\},\{\rt(3),\lo(3)\})$ as well, hence Corollary \ref{cor-non-repl} implies that Spoiler can win the $3$-round \ms game on $(\{\lo(4)\}, \{\rt(3),\lo(3)\})$ without playing on top.

 For a different application, we claim that Spoiler can win the $3$-round \ms game on $(\{\rt(4),\lo(4)\},\{\lo(3)\})$ without playing on top. For this,
we first let  $\mathrm{inc} (x,y)$ be the formula $\neg(x<y) \land \neg(y<x) \land (x\neq y)$, asserting that $x$ and $y$ are distinct incomparable elements, and then let $\Psi_{3,\forall}$ be the sentence
\begin{eqnarray*}
     \Psi_{3,\forall}:\forall x_1 \exists x_2 \exists x_3(x_1 < x_2 < x_3 \lor x_2 < x_3 < x_1 \lor 
    \mathrm{inc}(x_1,x_2)\land \mathrm{inc}(x_1,x_3)\land x_2\not = x_3),
\end{eqnarray*}
which asserts that every element has two bigger elements, or two smaller elements, or two different elements that are incomparable with it. Clearly,
$\Psi_{3,\forall}$ is a separating sentence for $(\{\rt(4),\lo(4)\},\{\lo(3)\})$; moreover, it is easy to see that both $\Psi_{3,\forall}$ and its negation are non-replicating sentences, hence Corollary  \ref{cor-non-repl} implies that Spoiler can win the $3$-round \ms game on $(\{\rt(4),\lo(4)\}, \{\lo(3)\})$ without playing on top.

Finally, Proposition \ref{prop:rooted} and Corollary \ref{cor-non-repl} imply that there is no {\fo}-sentence $\psi$ with $3$ quantifiers such that $\psi$ is a separating sentence for $(\{\rt(4),\lo(4)\},\{\rt(3),\lo(3)\})$ and both $\psi$ and $\neg \psi$ are non-replicating sentences.

Corollary \ref{cor-non-repl} gives a sufficient condition for Spoiler to win  the $r$-round \ms~game without ever playing on top, namely, it suffices to have a
separating sentence such that both the sentence and its negation are non-replicating. It is natural to ask whether this condition is also necessary, in which case it would replace the more complicated necessary and sufficient condition in Theorem \ref{thm:left-non-repl}.
The next result answers this question in the  negative.

\tikzset{RED/.style={draw=red, very thick, fill=none, font=\tiny, circle, inner sep=0.5mm}}
\tikzset{BLUE/.style={draw=blue, very thick, fill=none, font=\tiny, circle, inner sep=0.5mm}}
\tikzset{BLUESMALL/.style={draw=blue, very thick, fill=none, font=\tiny, circle, inner sep=0.3mm}}
\tikzset{BLANK/.style={draw=black, fill=none, font=\tiny, circle, inner sep=0.5mm}}
\tikzset{LABEL/.style={draw=none, fill=none, font=\small, circle, inner sep=0cm}}
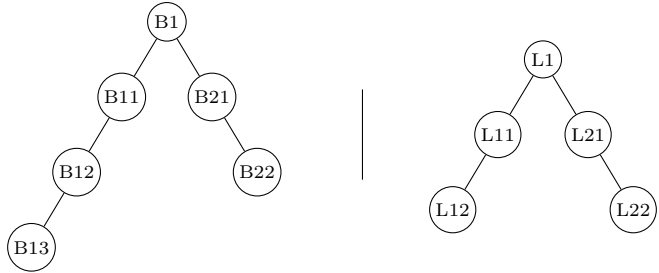
\begin{figure}[ht]
\centering
\begin{tikzpicture}

\node(L1a) [BLANK] at (-4.2,-1.5) {B13};
\node(L1b) [BLANK] at (-3.6,-0.5) {B12};
\node(L1c) [BLANK] at (-3,0.5) {B11};
\node(L1d) [BLANK] at (-2.4,1.5) {B1};
\node(L1e) [BLANK] at (-1.8,0.5) {B21};
\node(L1f) [BLANK] at (-1.2,-0.5) {B22};
\draw (L1a) -- (L1b);
\draw (L1b) -- (L1c);
\draw (L1c) -- (L1d);
\draw (L1d) -- (L1e);
\draw (L1e) -- (L1f);

\draw (0.2, 0.6) -- (0.2, -0.6);

\node(R1a) [BLANK] at (1.4,-1) {L12};
\node(R1b) [BLANK] at (2,0) {L11};
\node(R1c) [BLANK] at (2.6,1) {L1};
\node(R1d) [BLANK] at (3.2,0) {L21};
\node(R1e) [BLANK] at (3.8,-1) {L22};

\draw (R1a) -- (R1b);
\draw (R1b) -- (R1c);
\draw (R1c) -- (R1d);
\draw (R1d) -- (R1e);

\end{tikzpicture}
\caption{A rooted tree $\rt(4)$ whose longest branch has $4$ nodes (left); a rooted tree $\rt(3)$ whose longest branch has $3$ nodes (right).}
\label{fig:2_trees}
\end{figure}

\begin{thm}\label{thm:repl-v-sep}
Let $\rt(4)$ and $\rt(3)$ be the rooted trees in Figure \ref{fig:2_trees}. Then the following statements are true.
\begin{enumerate}[label=(\alph*)]
\item Spoiler can win the $3$-round \ms game on $(\{\rt(4)\},\{\rt(3)\})$ without playing on top.
\item There is no {\fo}-sentence $\psi$ with $3$ quantifiers such that 
\begin{enumerate}[label=(\roman*)]
\item $\psi$ is a separating sentence for $(\{\rt(4)\},\{\rt(3)\})$;
\item both $\psi$ and $\lnot\psi$ are non-replicating sentences.
\end{enumerate}
\end{enumerate}
\end{thm}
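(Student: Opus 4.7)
For \textbf{(a)}, the plan is to exhibit an explicit non-on-top Spoiler strategy. I would play $\playleft$ with $a_1 = \mathrm{B12}$ in round~1; after Duplicator's oblivious response, round~2 ($\playright$) picks $b_2$ per copy avoiding $b_1$ --- for instance, $\mathrm{L11} \mapsto \mathrm{L1}$, $\mathrm{L21} \mapsto \mathrm{L1}$, $\mathrm{L1} \mapsto \mathrm{L11}$, $\mathrm{L12} \mapsto \mathrm{L22}$, $\mathrm{L22} \mapsto \mathrm{L12}$; and round~3 ($\playleft$) picks $a_3$ on each $(a_1, a_2)$-copy to produce a chain or ``separated'' pattern leveraging the deepest element $\mathrm{B13}$ (concretely, $a_3 = \mathrm{B13}$ when $a_2$ equals $a_1$ or is incomparable with it, and $a_3 \in \{\mathrm{B1}, \mathrm{B11}\}$ otherwise, yielding a $3$-chain through $\mathrm{B12}$ in $\rt(4)$). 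Verification is then an enumeration showing that each of the six left-side triple-types $(\mathrm{B12}, a_2, a_3)$ fails to occur among the at most $25$ right-side triples $(b_1, b_2, b_3)$; the decisive obstruction is that Spoiler's round-$2$ choices arrange $b_2$ so that whenever $(b_1, b_2)$ forms a comparable pair, the larger endpoint is already the root $\mathrm{L1}$, blocking the chain patterns $x_1 < x_2 < x_3$ and $x_1 < x_3 < x_2$ that $\mathrm{B13}$'s extra depth realizes on the left.

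For \textbf{(b)}, I would assume such a $\psi$ exists and argue by case analysis on the quantifier prefix $Q_1 Q_2 Q_3$ of $\psi$. The pivotal preliminary is the observation that, for $i < j$, an equality $x_i = x_j$ can appear positively in a non-replicating type of a sentence with prefix $Q_1 Q_2 Q_3$ iff $Q_j = \forall$. Consequently, the type $t^\star$ asserting $x_1 = x_2 = x_3$ --- satisfied by $(\mathrm{B1}, \mathrm{B1}, \mathrm{B1})$, and hence a disjunct of either $\theta$ or $\lnot\theta$ --- is non-replicating in $\psi$ iff $Q_2 = Q_3 = \forall$, and non-replicating in $\lnot\psi$ iff $Q_2 = Q_3 = \exists$. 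This immediately eliminates the four prefixes $\exists\exists\forall, \exists\forall\exists, \forall\exists\forall, \forall\forall\exists$, in each of which exactly one of $Q_2, Q_3$ is $\forall$.

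For the two prefixes $\exists\exists\exists$ and $\forall\forall\forall$, being non-replicating forces $\theta$ (resp.\ $\lnot\theta$) to be equality-free, so $\psi$ reduces to an existential (resp.\ universal) statement about the order-type of a triple of distinct elements. A short inspection shows that $\rt(4)$ and $\rt(3)$ realize exactly the same $3$-element order-types (chain, V, and 2-chain-plus-incomparable; neither tree contains three pairwise incomparable elements), so no such $\psi$ can separate them.

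The remaining prefixes $\exists\forall\forall$ and $\forall\exists\exists$ are the main obstacle. In both, the non-replicating constraints force $\theta$ (resp.\ $\lnot\theta$) to contain every type involving an equality, reducing $\psi$ to a statement parametrized by a set $S$ of equality-free $3$-types. Let $T_a$ denote the set of equality-free order-types realized by triples $(a, a_2, a_3)$ as $a_2, a_3$ range over distinct elements different from $a$. Then $\psi$ with prefix $\exists\forall\forall$ asserts that some $a$ has $T_a \subseteq S$, while $\psi$ with prefix $\forall\exists\exists$ asserts that every $a$ has $T_a \cap S \neq \emptyset$. Separation via $\exists\forall\forall$ therefore demands some $a \in \rt(4)$ such that $T_{a'} \not\subseteq T_a$ for every $a' \in \rt(3)$, whereas separation via $\forall\exists\exists$ demands some $a' \in \rt(3)$ such that $T_a \not\subseteq T_{a'}$ for every $a \in \rt(4)$. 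The concluding step is to compute $T_x$ for every element in both trees and verify the identities $T_{\mathrm{B1}} = T_{\mathrm{L1}}$, $T_{\mathrm{B13}} = T_{\mathrm{L12}}$, $T_{\mathrm{B21}} = T_{\mathrm{L21}}$, $T_{\mathrm{B22}} = T_{\mathrm{L22}}$, together with the inclusions $T_{\mathrm{L11}} \subseteq T_{\mathrm{B11}}$, $T_{\mathrm{L11}} \subseteq T_{\mathrm{B12}}$, and (via the symmetry $T_{\mathrm{L11}} = T_{\mathrm{L21}}$) the corresponding $T_{\mathrm{B21}} \subseteq T_{\mathrm{L11}}$; these pairings exhibit, for every $a \in \rt(4)$, some $a' \in \rt(3)$ with $T_{a'} \subseteq T_a$ (defeating $\exists\forall\forall$), and for every $a' \in \rt(3)$ some $a \in \rt(4)$ with $T_a \subseteq T_{a'}$ (defeating $\forall\exists\exists$). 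This bookkeeping --- enumerating up to $30$ ordered pairs per element and tallying their order-types --- is tedious but entirely mechanical, and constitutes the main technical work of the proof.
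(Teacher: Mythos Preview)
Your proposal is correct; part (a) is essentially the paper's approach with minor variations in the round-2 and round-3 choices, but part (b) proceeds quite differently.

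For (b), the paper does \emph{not} do a case analysis on all eight prefixes. Instead, it first argues game-theoretically that \emph{every} winning Spoiler strategy on $(\{\rt(4)\},\{\rt(3)\})$ must be left--right--left (round~1 on B11 or B12), so any $3$-quantifier separating sentence has prefix $\exists\forall\exists$. It then invokes Proposition~\ref{prop:3alt}, which says that no sentence with an $\exists\forall\exists$ pattern can have both itself and its negation non-replicating. The contradiction is immediate.

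Your route is more elementary and self-contained: the $t^\star$ observation (that the all-equalities type forces $Q_2Q_3 \in \{\forall\forall,\exists\exists\}$) kills four prefixes in one stroke---in fact it is a sharpening of the idea behind Proposition~\ref{prop:3alt}---and the $\exists\exists\exists$/$\forall\forall\forall$ cases are dispatched by the observation that both trees realize the same equality-free $3$-types. The price is the $T_a$ bookkeeping for $\exists\forall\forall$ and $\forall\exists\exists$, which the paper avoids entirely by never needing to consider those prefixes. Conversely, the paper's approach requires the game analysis (showing Duplicator survives against all non-left--right--left patterns), which is its own case check; so neither argument is strictly shorter. The paper's argument is more conceptual, leaning on the general machinery of Corollary~\ref{cor-non-repl} and Proposition~\ref{prop:3alt}; yours is a direct semantic computation that would stand alone without those results.
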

\begin{proof}
We claim that Spoiler can win the $3$-round \ms game on $(\{\rt(4)\},\{\rt(3)\})$ without playing on top by playing his first-round moves on the left, his second-round moves on the right, and his third-round moves on the left as described below.
\begin{itemize}
\item Spoiler plays first by placing a pebble on B12 in $\rt(4)$ on the left.
\item After Duplicator makes five copies of $\rt(3)$ on the right and places a pebble on a different element of each copy, Spoiler places pebbles on each copy of $\rt(3)$ as follows:
\begin{enumerate}
\item In response to L1 on the first copy of $\rt(3)$, Spoiler pebbles L11.

\item  In response to L11 on the second copy of $\rt(3)$, Spoiler pebbles L1, 

\item In response to L12 on the third copy of $\rt(3)$, Spoiler pebbles L21.
\item In response to L21 on the fourth copy of $\rt(3)$, Spoiler pebbles L1.
\item In response to L22 on the fifth copy of $\rt(3)$, Spoiler pebbles L11.

 \end{enumerate}
\item After Duplicator makes six copies of $\rt(4)$ on the left and places a pebble on a different element of each copy, Spoiler places pebbles on each copy of $\rt(4)$ as indicated in Figure \ref{fig:game_tree}, where the first-round moves of the two players are colored red, the second-round-moves of the two players are colored blue, and Spoiler's third-round moves are colored green.

\tikzset{RED/.style={draw=red, very thick, fill=none, font=\small, circle, inner sep=1mm}}
\tikzset{BLUE/.style={draw=blue, very thick, fill=none, font=\small, circle, inner sep=1mm}}
\tikzset{GREEN/.style={draw=black!40!green, very thick, fill=none, font=\small, circle, inner sep=1mm}}
\tikzset{BLUESMALL/.style={draw=blue, very thick, fill=none, font=\small, circle, inner sep=0.6mm}}
\tikzset{BLANK/.style={draw=black, fill=none, font=\small, circle, inner sep=1mm}}
\tikzset{LABEL/.style={draw=none, fill=none, font=\small, circle, inner sep=0cm}}
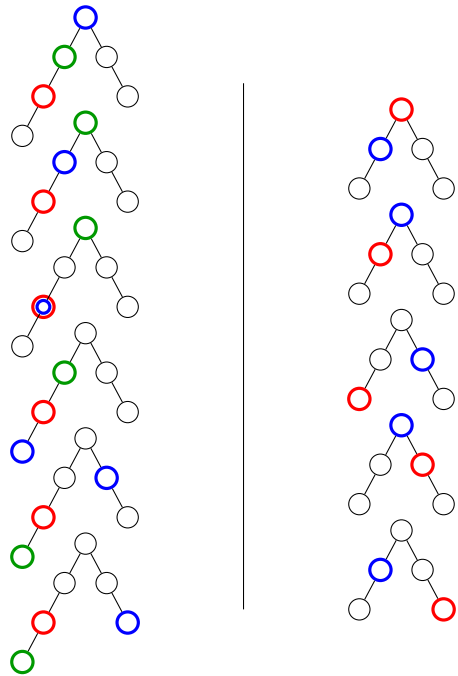
\begin{figure}[ht]
\centering
\begin{tikzpicture}[scale=.7]

\node(L1a) [BLANK] at (-4.2,6) {};
\node(L1b) [RED] at (-3.8,6.75) {};
\node(L1c) [GREEN] at (-3.4,7.5) {};
\node(L1d) [BLUE] at (-3,8.25) {};
\node(L1e) [BLANK] at (-2.6,7.5) {};
\node(L1f) [BLANK] at (-2.2,6.75) {};
\draw (L1a) -- (L1b);
\draw (L1b) -- (L1c);
\draw (L1c) -- (L1d);
\draw (L1d) -- (L1e);
\draw (L1e) -- (L1f);

\node(L2a) [BLANK] at (-4.2,4) {};
\node(L2b) [RED] at (-3.8,4.75) {};
\node(L2c) [BLUE] at (-3.4,5.5) {};
\node(L2d) [GREEN] at (-3,6.25) {};
\node(L2e) [BLANK] at (-2.6,5.5) {};
\node(L2f) [BLANK] at (-2.2,4.75) {};
\draw (L2a) -- (L2b);
\draw (L2b) -- (L2c);
\draw (L2c) -- (L2d);
\draw (L2d) -- (L2e);
\draw (L2e) -- (L2f);

\node(L3a) [BLANK] at (-4.2,2) {};
\node(L3bb) [RED] at (-3.8,2.75) {};
\node(L3b) [BLUESMALL] at (-3.8,2.75) {};
\node(L3c) [BLANK] at (-3.4,3.5) {};
\node(L3d) [GREEN] at (-3,4.25) {};
\node(L3e) [BLANK] at (-2.6,3.5) {};
\node(L3f) [BLANK] at (-2.2,2.75) {};
\draw (L3a) -- (L3b);
\draw (L3b) -- (L3c);
\draw (L3c) -- (L3d);
\draw (L3d) -- (L3e);
\draw (L3e) -- (L3f);

\node(L4a) [BLUE] at (-4.2,0) {};
\node(L4b) [RED] at (-3.8,0.75) {};
\node(L4c) [GREEN] at (-3.4,1.5) {};
\node(L4d) [BLANK] at (-3,2.25) {};
\node(L4e) [BLANK] at (-2.6,1.5) {};
\node(L4f) [BLANK] at (-2.2,0.75) {};
\draw (L4a) -- (L4b);
\draw (L4b) -- (L4c);
\draw (L4c) -- (L4d);
\draw (L4d) -- (L4e);
\draw (L4e) -- (L4f);

\node(L5a) [GREEN] at (-4.2,-2) {};
\node(L5b) [RED] at (-3.8,-1.25) {};
\node(L5c) [BLANK] at (-3.4,-0.5) {};
\node(L5d) [BLANK] at (-3,0.25) {};
\node(L5e) [BLUE] at (-2.6,-0.5) {};
\node(L5f) [BLANK] at (-2.2,-1.25) {};
\draw (L5a) -- (L5b);
\draw (L5b) -- (L5c);
\draw (L5c) -- (L5d);
\draw (L5d) -- (L5e);
\draw (L5e) -- (L5f);

\node(L6a) [GREEN] at (-4.2,-4) {};
\node(L6b) [RED] at (-3.8,-3.25) {};
\node(L6c) [BLANK] at (-3.4,-2.5) {};
\node(L6d) [BLANK] at (-3,-1.75) {};
\node(L6e) [BLANK] at (-2.6,-2.5) {};
\node(L6f) [BLUE] at (-2.2,-3.25) {};
\draw (L6a) -- (L6b);
\draw (L6b) -- (L6c);
\draw (L6c) -- (L6d);
\draw (L6d) -- (L6e);
\draw (L6e) -- (L6f);

\draw (0, 7) -- (0, -3);

\node(R1a) [BLANK] at (2.2,5) {};
\node(R1b) [BLUE] at (2.6,5.75) {};
\node(R1c) [RED] at (3,6.5) {};
\node(R1d) [BLANK] at (3.4,5.75) {};
\node(R1e) [BLANK] at (3.8,5) {};
\draw (R1a) -- (R1b);
\draw (R1b) -- (R1c);
\draw (R1c) -- (R1d);
\draw (R1d) -- (R1e);

\node(R2a) [BLANK] at (2.2,3) {};
\node(R2b) [RED] at (2.6,3.75) {};
\node(R2c) [BLUE] at (3,4.5) {};
\node(R2d) [BLANK] at (3.4,3.75) {};
\node(R2e) [BLANK] at (3.8,3) {};
\draw (R2a) -- (R2b);
\draw (R2b) -- (R2c);
\draw (R2c) -- (R2d);
\draw (R2d) -- (R2e);

\node(R3a) [RED] at (2.2,1) {};
\node(R3b) [BLANK] at (2.6,1.75) {};
\node(R3c) [BLANK] at (3,2.5) {};
\node(R3d) [BLUE] at (3.4,1.75) {};
\node(R3e) [BLANK] at (3.8,1) {};
\draw (R3a) -- (R3b);
\draw (R3b) -- (R3c);
\draw (R3c) -- (R3d);
\draw (R3d) -- (R3e);

\node(R4a) [BLANK] at (2.2,-1) {};
\node(R4b) [BLANK] at (2.6,-0.25) {};
\node(R4c) [BLUE] at (3,0.5) {};
\node(R4d) [RED] at (3.4,-0.25) {};
\node(R4e) [BLANK] at (3.8,-1) {};
\draw (R4a) -- (R4b);
\draw (R4b) -- (R4c);
\draw (R4c) -- (R4d);
\draw (R4d) -- (R4e);

\node(R5a) [BLANK] at (2.2,-3) {};
\node(R5b) [BLUE] at (2.6,-2.25) {};
\node(R5c) [BLANK] at (3,-1.5) {};
\node(R5d) [BLANK] at (3.4,-2.25) {};
\node(R5e) [RED] at (3.8,-3) {};
\draw (R5a) -- (R5b);
\draw (R5b) -- (R5c);
\draw (R5c) -- (R5d);
\draw (R5d) -- (R5e);

\end{tikzpicture}
\caption{The configuration of the $3$-round \ms game after Spoiler's moves in the $3$ rounds. Note that Spoiler has played $\pebbleleft$ in round 1 with red, $\pebbleright$ in round 2 with blue, and $\pebbleleft$ in round 3 with green.}
\label{fig:game_tree}
\end{figure}

\end{itemize}
From Figure \ref{fig:game_tree}, it is clear that Duplicator can not play on the right in the third round so that a partial isomorphism is maintained between one of the six copies of $\rt(4)$ and one of the five copies of $\rt(3)$. For instance, in four copies of $\rt(4)$, the green pebble played by Spoiler are above the red one; in order to maintain a partial isomorphism with one of these copies, there are only two copies of $\rt(3)$ for Duplicator as candidates -- and in both of them, any choice of a green pebble above a red pebble would break all isomorphisms with the copies of $\rt(4)$ on the left. Similarly, there are two copies of $\rt(4)$ with a green pebble below a red pebble, so Duplicator has only two candidates among the copies of $\rt(3)$ to place a green pebble below a red pebble -- and in both of them, the blue pebble is comparable with at least one other pebble, which is not true in the corresponding copies of $\rt(4)$ on the left. Thus, Spoiler wins the $3$-round \ms game on 
$(\{\rt(4)\},\{\rt(3)\})$.
Moreover, only B12 in the third copy of $\rt(4)$ has two pebbles placed on it with the red pebble placed by Spoiler in the first round and the blue pebble placed by Duplicator in the second round. 
Thus,  Spoiler wins the $3$-round \ms game on 
$(\{\rt(4)\},\{\rt(3)\})$ without playing on top. This completes the proof of the first part of the theorem.

To prove the second part of the theorem, we claim that Spoiler wins the $3$-round \ms game on $(\{\rt(4)\},\{\rt(3)\})$ without playing on top only if Spoiler plays on the left in the first round, on the right in the second round, and on the left in the third round; moreover, Spoiler's round first move on $\rt(4)$ must be on either B11 or B12. The argument about Spoiler's first-round move is similar to the one in the proof of Proposition \ref{prop:rooted}.

We  claim that if Spoiler's first-round move is on the right, then Duplicator can maintain a partial isomorphism between one copy of $\rt(4)$ and one copy of $\rt(3)$ for three rounds. Indeed, in response to a first-round move by Spoiler on L1 in $\rt(3)$,  Duplicator plays on B1 in one copy of $\rt(4)$, and then easily survives two more rounds (e.g.,~in response to a second-round Spoiler play on B12, Duplicator plays on L11 in one copy and L12 in another copy of the $\rt(3)$). First-round Spoiler plays on L11 or L21 are met by a Duplicator play on B21 in one copy of $\rt(4)$, and first-round Spoiler plays on L12 or L22 are met by a Duplicator play on B22 in a copy of $\rt(4)$; in each case, Duplicator readily survives two more rounds on just this pair of structures. 
If Spoiler's first-round move on $\rt(4)$ is not on B11 or on B12, then once again, Duplicator survives three rounds just on  $(\{\rt(4)\}, \{\rt(3)\})$ by maintaining a partial isomorphism between two copies via the following responses in a copy of $\rt(3)$: B1$\hookrightarrow$L1, B13$\hookrightarrow$L12, B21$\hookrightarrow$L21, B22$\hookrightarrow$L22.

We have established that Spoiler wins only if his first-round move is on B12 or on B12 on the left. Using this, it is easy to verify that Spoiler must play his second-round moves on the right and his third-round moves on the left. Otherwise, Duplicator easily survives two more rounds. For example, if Spoiler plays B12 in the first round and B1 in the second round, then Duplicator plays L12 and L1 in the third copy of $\rt(3)$, and then survives the third round.

So far, we have shown that  Spoiler wins the $3$-round \ms game on $(\{\rt(4)\},\{\rt(3)\})$ without playing on top only if  Spoiler plays on the left in the first round, on the right in the second round, and on the left in the third round.
Suppose now that there is a {\fo}-sentence $\psi$ with $3$ quantifiers that is separating for $(\{\rt(4)\},\{\rt(3)\})$ and such that both $\psi$ and $\neg \psi$ are non-replicating. Corollary \ref{cor-non-repl} implies that if Spoiler follows $\psi$, then Spoiler wins the $3$-move \ms game on $(\{\rt(4)\},\{\rt(3)\})$ without playing on top. 
Therefore, $\psi$ must have a quantifier prefix of the form $\exists\forall\exists$. Proposition \ref{prop:3alt}, however, implies that at least one of the sentences $\psi$ and $\neg \psi$ must be a non-replicating sentence, which is a contradiction. The proof of the theorem is now complete.
\end{proof}

The proof of Theorem \ref{thm:repl-v-sep} shows that
there are situations in which for Spoiler to win without ever playing on top, he must alternate between the left side and the right side more than once. In such cases, the separating sentence extracted from Spoiler's winning strategy has at least three alternations of quantifiers. The next proposition asserts that no sentence with three alternations of quantifiers has the property that both the sentence and its negation are non-replicating sentences.

\begin{prop}\label{prop:3alt}
Let $Q_1x_1\ldots Q_rx_r\theta$ be a {\fo}-sentence with $r$ variables such that $r \geq 3$ and $\theta$ is quantifier-free. If both $\psi$ and $\lnot\psi$ are non-replicating sentences, then neither $\psi$ nor $\lnot\psi$ contains the quantifiers $\exists x_i$, $\forall x_j$, $\exists x_k$  with $i<j<k$ in its prefix.
\end{prop}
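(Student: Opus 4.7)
The plan is to prove the contrapositive: if the prefix of $\psi$ contains $\exists x_i, \forall x_j, \exists x_k$ with $i<j<k$, then at least one of $\psi$ and $\lnot\psi$ is replicating. The corresponding claim for the pattern $\forall x_i, \exists x_j, \forall x_k$ then follows by swapping the roles of $\psi$ and $\lnot\psi$, since the hypothesis of the proposition is symmetric in the two.

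The starting observation is that, by the standing assumption, $\theta$ is a disjunction of types over $x_1,\ldots,x_r$, so $\lnot\theta$ is logically equivalent to the disjunction of precisely those types that do not appear as disjuncts of $\theta$. In particular, for any fixed type $t^*$ over $x_1,\ldots,x_r$, exactly one of the following holds: $t^*$ is a disjunct of $\theta$, or $t^*$ is a disjunct of $\lnot\theta$ (in its canonical form). The key step is to pick $t^*$ so as to be replicating in whichever of $\psi, \lnot\psi$ turns out to contain it. I will take $t^*$ to be any complete consistent type over $x_1,\ldots,x_r$ that includes all three equalities $x_i=x_j$, $x_j=x_k$, $x_i=x_k$, which is always possible: identify the three variables and extend to a full type arbitrarily.

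Now I split into cases. If $t^*$ is a disjunct of $\theta$, then the equality $x_j=x_k$ sits inside a type of $\psi$ whose quantifiers are $Q_j=\forall$ and $Q_k=\exists$ at positions $j<k$; neither clause of Definition~\ref{defn:non-repl}(a) is satisfied (the universal precedes the existential and neither variable is universal on both sides), so $t^*$ is replicating in $\psi$ and hence $\psi$ is replicating. If instead $t^*$ is a disjunct of $\lnot\theta$, then in $\lnot\psi$ the dual quantifiers $Q_i^\ast=\forall$ and $Q_j^\ast=\exists$ at positions $i<j$ appear, and the equality $x_i=x_j$ in $t^*$ violates Definition~\ref{defn:non-repl}(a) in exactly the same way, so $t^*$ is replicating in $\lnot\psi$. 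Either way, the assumption that both $\psi$ and $\lnot\psi$ are non-replicating fails, and the contrapositive is proved.

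I do not foresee substantive difficulty: the argument is a short case split on which side hosts the ``triply-collapsed'' type $t^*$, followed by a direct verification that the non-replication conditions are violated. The only subtleties are notational --- keeping the quantifier indices $i,j,k$ straight from the generic indices in Definition~\ref{defn:non-repl}, and being explicit that ``disjunct of $\lnot\theta$'' refers to the disjuncts of its canonical disjunction-of-types form.
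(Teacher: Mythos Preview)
Your proposal is correct and follows essentially the same approach as the paper: choose a consistent type containing the equalities among $x_i,x_j,x_k$, observe that it must appear as a disjunct of exactly one of $\theta$ or $\lnot\theta$, and in either case verify that the relevant equality ($x_j=x_k$ in $\psi$, or $x_i=x_j$ in $\lnot\psi$) violates the non-replicating condition. The only cosmetic difference is that the paper frames it as a proof by contradiction (assuming $\psi$ non-replicating to force the type into $\lnot\theta$), whereas you do a direct case split; the underlying idea is identical.
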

\begin{proof}
Towards a contradiction and without loss of generality assume that $\psi$ and $\lnot\psi$ are non-replicating sentences such that the quantifier prefix of $\psi$
contains the quantifiers $\exists x_i$, $\forall x_j$, $\exists x_k$  with $i<j<k$. The quantifier-free parts $\theta$ of $\psi$ and $\lnot\theta$ of $\lnot\psi$ can be written as disjunctions of consistent types, so that each consistent type is a disjunct 
either of $\theta$ or of  $\lnot\theta$ (but not of both). Since $\psi$ is a non-replicating sentence, no type occurring
as a disjunct of $\theta$ can contain the equalities $x_i=x_k$ or $x_j=x_k$. Hence,
at least one of the types $t(x_1,\ldots,x_r)$ occurring as a disjunct of $\lnot\theta$ must contain the equalities $x_i=x_k$ and $x_j=x_k$. Since $t(x_1,\ldots,x_r)$ is a consistent type, it follows that $t(x_1,\ldots,x_r)$ must also contain the equality $x_i=x_j$. Since
$x_i$ is universally quantified in $\lnot\psi$ and $x_j$ is existentially quantified in $\lnot\psi$, it follows that $\lnot\psi$ is a replicating sentence, contradicting the assumption made.
\end{proof}

\section{Restricting the Number of Variables}\label{sec:finitepebbles}

Suppose in addition to the number of quantifiers, we also wish to simultaneously capture the number of \emph{variables} needed to express a certain property, or distinguish two sets of $\tau$-structures. How do we achieve this?

Let $\mathrm{FO}^k(\tau)$ be the set of well-formed {\fo} formulas over the schema $\tau$, that only use the
variables $\{x_1, \ldots, x_k\}$.

\subsection{Limitations of \ms Games}

It might at first seem reasonable to try to adapt the \ms game (which already captures the number of quantifiers) to also capture the number of variables, simply by limiting the number of pebble colors that can be used. This approach works in a straightforward manner with \ef games \cite[Definition 6.2, Theorem 6.10]{immermanbook}. There are two natural ways to define such adaptations of the \ms game.

\begin{game}[\ms Game with Repebbling]\label{msgame1}
Define the \emph{$r$-round, $k$-color \ms game with repebbling on $(\cA, \cB)$} as identical to the $r$-round \ms game on $(\cA, \cB)$, except that the set $\cC$ of pebble colors satisfies $|\cC| = k$, and so, Spoiler needs to play with at most $k$ pebble colors (forcing him to possibly re-use the same pebble color in two different rounds).
When a pebble of a given color is re-used, the previously played pebble of the same color is ``picked up'' from all structures. In every round, Duplicator has to respond with the same pebble color as the one used by Spoiler in that round. As before, the winning conditions are the same, i.e.,~at the end of each of $r$ rounds, Duplicator needs to exhibit a pebbled structure on the left and a pebbled structure on the right forming a matching pair, while Spoiler needs to exhibit a configuration within $r$ rounds where no pair of pebbled structures from the left and right form a matching pair.
\end{game}

While Game \ref{msgame1} seems like a reasonable candidate for capturing {\fo}-distinguishability with $r$ quantifiers and $k$ variables, this turns out to not work. Recall that the discussion following Figure \ref{fig:3_vs_2} observed that there is a $3$-quantifier $2$-variable sentence that is separating for $(\{\lo(3)\}, \{\lo(2)\})$, namely: $\exists x (\exists y (x < y) \land \exists y (y < x))$. So, if Game \ref{msgame1} were to capture distinguishability with $r$ quantifiers and $k$ variables, then Spoiler would need to have a winning strategy in the $3$-round, $2$-color \ms game with repebbling on this instance. The following lemma, whose proof is straightforward and omitted, shows that this is not the case.

\begin{lem} \label{weak_indistinguishabiliity_lemma}
Duplicator has a winning strategy in the $3$-round, $2$-color \ms game with repebbling on $(\{\lo(3)\},\{\lo(2)\})$.
\end{lem}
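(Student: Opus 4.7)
The plan is to exhibit an explicit winning strategy for Duplicator and verify the invariant ``some type is shared between the two sides'' at the end of each round, by a case analysis on Spoiler's moves. The key reduction is that with only two pebble colors, each structure carries at most two pebbles at any time, and on a linear order the isomorphism type of a two-pebbled induced substructure falls into one of three mutually exclusive cases: the two pebbles coincide (type $E$), the first-color pebble precedes the second-color pebble (type $L$), or the first-color pebble succeeds the second-color pebble (type $G$). Two $2$-pebbled structures form a matching pair iff they realize the same type, and both $\lo(2)$ and $\lo(3)$ realize each of $E$, $L$, $G$. So Duplicator's task reduces to ensuring that the type sets realized on the two sides intersect at the end of each round.

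The proposed strategy is the oblivious one: whenever it is Duplicator's turn, for each structure on her side she creates one copy per element of its universe and places the new pebble on a distinct element in each copy. With this strategy, Duplicator's side realizes every pebble placement consistent with her earlier moves. After round $1$ the invariant is trivial, since only a single pebble is in play. For round $2$ I would split on whether Spoiler plays on the same side as in round $1$ or switches sides. In the same-side case, Spoiler's side is a single $2$-pebbled structure of one specific type, while Duplicator's side---built obliviously from the round-$1$ copies on the opposite side---realizes all three types, so a matching pair exists. In the switch case, Duplicator's side has only a single structure (with the first pebble fixed at Spoiler's chosen position), and a finite enumeration of Spoiler's at most four possible placements on the opposite side shows that their type sets always intersect.

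The key step is round $3$. Because only two colors are available, Spoiler must reuse a color, and the corresponding pebble is picked up from all structures, leaving each with exactly one surviving pebble. Spoiler then places the new pebble on his chosen side and Duplicator responds obliviously on the other side. Because Duplicator's oblivious play in earlier rounds ensures that the surviving pebble occupies every position of the underlying linear order across her many copies, her oblivious round-$3$ response produces all three types $E$, $L$, $G$ on her side, and in particular matches whatever type Spoiler realizes on his side.

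The main obstacle is the switch-side subcase of round $2$, where the fixed first pebble lies at an endpoint of $\lo(3)$ or $\lo(2)$ and Duplicator's side realizes only two of the three types. Here one has to verify by direct enumeration that, even under this restriction, no choice of Spoiler's second-pebble placements on his side yields a type set disjoint from Duplicator's. This is a brute-force check over a handful of configurations; once it is dispatched, the round-$3$ argument above goes through uniformly, and the oblivious strategy is winning.
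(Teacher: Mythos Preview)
The paper omits this proof entirely (calling it straightforward), so there is no paper argument to compare against. Your overall plan is sound: with two colors on a linear order the matching condition reduces to the three $2$-types $E$, $L$, $G$, and it suffices to show the two sides' type sets intersect after each round under oblivious play. Rounds $1$ and $2$ are handled correctly.

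The gap is in round $3$. Your claim that on Duplicator's round-$3$ side ``the surviving pebble occupies every position \ldots\ across her many copies'' is not true in general. The surviving color was placed in some earlier round $i\in\{1,2\}$; Duplicator played obliviously on side $\overline{S_i}$ in that round, so it is side $\overline{S_i}$---not necessarily Duplicator's round-$3$ side $\overline{S_3}$---that carries the surviving pebble at every position. If Spoiler chooses $S_3=\overline{S_i}$, then Duplicator's side is $S_i$, where the surviving pebble sits only at the position(s) \emph{Spoiler} selected in round $i$. Concretely: let Spoiler play on the left in rounds $1$ and $2$ (colors $1$ and $2$, both on B1), and in round $3$ reuse color $1$ but play on the right. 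After pickup the left carries a single color-$2$ pebble at B1, and Duplicator's oblivious placement of color $1$ realizes only $\{E,G\}$, not all three types.

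The repair is a short case split of the same flavor as your round-$2$ enumeration. In the problematic case it is \emph{Spoiler's} side that has the surviving pebble at every position, in particular at the minimum and at the maximum of his linear order. His placement on the copy with surviving pebble at the minimum is forced into $\{E,X\}$ for some $X\in\{L,G\}$ (depending on which color is surviving), and on the copy at the maximum into $\{E,\overline X\}$. Meanwhile Duplicator's oblivious placement, even with the surviving pebble pinned by Spoiler at a single position, yields $\{E,X\}$ if that position is the minimum, $\{E,\overline X\}$ if the maximum, and all three types otherwise. In every combination the type sets intersect. So your oblivious strategy does win, but the round-$3$ justification needs this additional branch rather than the uniform claim you made.
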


In order to define the second variant of the \ms game, we first need a definition.

\begin{defi}
During a complete play of the $r$-round, $k$-color \ms game with repebbling on $(\cA, \cB)$, we say that a pebbled structure $\langle\bS ~|~ s_1, \ldots, s_t\rangle$ is a \emph{parent} of a pebbled structure $\langle\bS ~|~ s'_1, \ldots, s'_{t'}\rangle$, if both of the following statements hold:
\begin{itemize}
\item the configuration containing $\langle\bS ~|~ s_1, \ldots, s_t\rangle$ is from the round immediately preceding the configuration containing $\langle\bS ~|~ s'_1, \ldots, s'_{t'}\rangle$,
\item $\langle\bS ~|~ s'_1, \ldots, s'_{t'}\rangle$ is the result of playing a new pebble color or reusing a pebble color on $\langle\bS ~|~ s_1, \ldots, s_t\rangle$.
\end{itemize}
Note that this means $t' = t$ (if a color was reused), or $t' = t + 1$ (if a new color was used).
\end{defi}

Observe that in the \ms game with repebbling, there may be $\bA, \bA' \in \cA$ and $\bB, \bB' \in \cB$ such that the following are both true for some $0 \leq t < r$:
\begin{itemize}
\item $\langle\bA' ~|~ a'_1, \ldots, a'_{t'}\rangle$ and $\langle\bB' ~|~ b'_1, \ldots, b'_{t'}\rangle$ are a matching pair in round $t+1$.
\item there are no matching pairs $\langle\bA ~|~ a_1, \ldots, a_t\rangle$ and $\langle\bB ~|~ b_1, \ldots, b_t\rangle$ in round $t$ such that $\langle\bA ~|~ a_1, \ldots, a_t\rangle$ is a parent of $\langle\bA' ~|~ a'_1, \ldots, a'_{t'}\rangle$, and $\langle\bB ~|~ b_1, \ldots, b_t\rangle$ is a parent of $\langle\bB' ~|~ b'_1, \ldots, b'_{t'}\rangle$.
\end{itemize}
Our second variant of the \ms game will constrain this heavily. First we need a definition.

\begin{defi}\label{def:hereditarymatch}
During a complete play of the $r$-round, $k$-color \ms game with repebbling on $(\cA, \cB)$, two $t$-pebbled structures $\langle\bA ~|~ a_1, \ldots, a_t\rangle$ on the left and $\langle\bB ~|~ b_1, \ldots, b_t\rangle$ on the right form a \emph{hereditary match} if the following hold:
\begin{enumerate}
\item $\langle\bA ~|~ a_1, \ldots, a_t\rangle$ and $\langle\bB ~|~ b_1, \ldots, b_t\rangle$ are within the same configuration, say at the end of round $t'$.
\item There is a sequence $\bA_0,\ldots,\bA_{t'}$ of pebbled structures with $\bA_0 = \bA$ and $\bA_{t'} = \langle\bA ~|~ a_1, \ldots, a_t\rangle$, such that $\bA_i$ is the parent of $\bA_{i+1}$ for each $0 \leq i < t'$.
\item There is a sequence $\bB_0,\ldots,\bB_{t'}$ of pebbled structures with $\bB_0 = \bB$ and $\bB_{t'} = \langle\bB ~|~ b_1, \ldots, b_t\rangle$, such that $\bB_i$ is the parent of $\bB_{i+1}$ for each $0 \leq i < t'$.
\item $\bA_i$ and $\bB_i$ form a matching pair, for $0 \leq i \leq t'$.
\end{enumerate}
\end{defi}

\begin{game}[Hereditary \ms Game with Repebbling]\label{msgame2}
Define the $r$-round, $k$-color \emph{hereditary \ms game with repebbling on $(\cA, \cB)$} as identical to the $r$-round, $k$-color \ms game with repebbling on $(\cA, \cB)$, except that, at the end of each of $r$ rounds, Duplicator needs to exhibit a pebbled structure on the left and a pebbled structure on the right forming a hereditary match (Definition \ref{def:hereditarymatch}), while Spoiler needs to exhibit a configuration within $r$ rounds where no pebbled structure on the left forms a hereditary match with any pebbled structure on the right.
\end{game}

Note that a Duplicator win in round $r$ of Game \ref{msgame2} immediately certifies a sequence of matching pairs that are valid for each of the previous rounds, and so if Duplicator wins Game \ref{msgame2} on an instance $(\cA, \cB)$, she certainly wins Game \ref{msgame1} on $(\cA, \cB)$. We also note that when $k = r$, both Games \ref{msgame1} and \ref{msgame2} are identical to ordinary \ms games. Furthermore, it is straightforward to see that Duplicator's oblivious strategy is optimal for both these variants.

We observe that the instance from Lemma \ref{weak_indistinguishabiliity_lemma} does not work as a counterexample for Game \ref{msgame2}, as Spoiler does win the $3$-round, $2$-color hereditary \ms game with repebbling on $(\{\lo(3)\},\{\lo(2)\})$. We leave this to the reader to verify.

However, it turns out that this stronger candidate game also fails to simultaneously capture the number of quantifiers and number of variables for {\fo}-distinguishability. This will follow from Proposition \ref{prop:hms-failure-to-capture}, but we will first need to set up stronger methods in order to prove such a result.

We now introduce a game that \textit{does} simultaneously capture these two parameters.

\subsection{The Quantifier-Variable Tree Game}

We define the \emph{$(r, k)$-quantifier-variable tree (QVT) game}, denoted the $\mathcal{QVT}(r,k)$ game, on $(\cA, \cB)$ as follows.

Two players, Spoiler and Duplicator, play by growing a \emph{game tree} $\cT$, starting from a single root node $\Xr$. Once again, we have a set $\cC$ of \emph{pebble colors}, with $|\cC| = k$, and arbitrarily many pebbles available of each color. Throughout the rest of this section, name the colors in $\cC$ as $x_1, \ldots, x_k$. The color $x_i$ will turn out to correspond to the variable $x_i$, and so we use this ambiguous notation rather deliberately.

We consider the leaf nodes in $\cT$ to be \emph{open} or \emph{closed}, with the root $\Xr$ being considered an open leaf at the start of the game. We say $\cT$ is \emph{closed} if all its leaves are closed.

Each node of $\cT$ consists of a \emph{left} side, a \emph{right} side, and a \emph{counter} $r' \in \N$. Each side consists of a set of pebbled $\tau$-structures. We denote a node in $\cT$ as a tuple $\gnode{\text{left}}{\text{right}}{r'}$.

At the root node $\Xr \in V(\cT)$,
\begin{itemize}
\item the left side consists of $\cA$, viewed as pebbled $\tau$-structures (with no pebbles placed yet).
\item the right side consists of $\cB$, viewed as pebbled $\tau$-structures (with no pebbles placed yet).
\item the counter $r'$ is set to $r$.
\end{itemize}
The root node, therefore, is denoted $\Xr = \gnode{\cA}{\cB}{r}$.

The contents $\gnode{\text{left}}{\text{right}}{r'}$ of the node $X \in V(\cT)$ will correspond to a \emph{configuration} of the \qvt game. When the context is clear, we will often identify a $\cT$-node by its configuration. Throughout the tree $\cT$, we will maintain the invariant that on every node $X \in V(\cT)$, a pebble color in $\cC$ appears in one pebbled structure iff it appears in every pebbled structure throughout the configuration.

Spoiler on his turn can perform any of the following moves:

\begin{itemize}
\item $\pebbleleft$ :
  \begin{enumerate}
  \item Spoiler chooses an open leaf node $X = \gnode{\cA'}{\cB'}{r'}$ with $r' \geq 1$, and a pebble color $x_i \in \cC$. If all structures in $X$ contain the pebble color $x_i$, Spoiler removes all of these pebbles.
  \item For each pebbled structure $\bA \in \cA'$, Spoiler places a pebble colored $x_i$ on an element in the universe of $\bA$. Call this new set of pebbled structures $\cA''$.
  \item For each pebbled structure $\bB \in \cB'$, Duplicator \emph{may} make any number of copies of $\bB$, then \emph{must} place a pebble colored $x_i$ on an element in the universe of $\bB$ and an element in the universe of each copy. Call this new set of pebbled structures $\cB''$.
   \item Spoiler makes a new open leaf
     $X' = \gnode{\cA''}{\cB''}{r'-1}$ in $\cT$, with
     parent $X$.
     Note that
     $X$ is no longer a leaf
     in $\cT$.
   \end{enumerate}
 \item $\pebbleright$: This move is dual to $\pebbleleft$; Spoiler plays on $\cB$, and Duplicator responds on $\cA$.
 \item $\splitleft$:
   \begin{enumerate}
   \item Spoiler chooses an open leaf node $X = \gnode{\cA'}{\cB'}{r'}$ and $r'_1,r'_2 \in \N$ such that $r' = r'_1 + r'_2$.
   \item Spoiler partitions $\cA'$ so that $\cA' = \cA_1' \cup \cA_2'$.
   \item Spoiler makes two new open leaf nodes $X_1 = \gnode{\cA_1'}{\cB'}{r'_1}$ and $X_2 = \gnode{\cA_2'}{\cB'}{r'_2}$ in $\cT$. Both new nodes have parent $X$.
    \end{enumerate}
  \item $\splitright$: This move is dual to
    $\splitleft$; Spoiler partitions $\cB'$.
  \item $\swap$: Spoiler chooses an open leaf node $X = \gnode{\cA'}{\cB'}{r'}$ in $\cT$. He makes a new open leaf node $X' = \gnode{\cB'}{\cA'}{r'}$ in $\cT$, with parent $X$.
\item $\close$: Spoiler chooses an open leaf node $X$ in $\cT$, say $\gnode{\cA'}{\cB'}{r'}$. Suppose there is an atomic formula $\varphi \in \mathrm{FO}^k$ with variables $\{x_{i_1}, \ldots, x_{i_m}\}$, such that:
\begin{enumerate}
    \item Every pebbled structure $\langle\bA ~|~ a_1, \ldots, a_{\ell}\rangle$ in $\cA$ has elements $a_{i_1}, \ldots, a_{i_m} \in A$ pebbled by colors $x_{i_1}, \ldots, x_{i_m}$ respectively, and $\bA \models \varphi(x_{i_1}/a_{i_1}, \ldots, x_{i_m}/a_{i_m})$.
    \item Every pebbled structure $\langle\bB ~|~ b_1, \ldots, b_{\ell}\rangle$ in $\cB$ has elements $b_{i_1}, \ldots, b_{i_m} \in B$ pebbled by colors $x_{i_1}, \ldots, x_{i_m}$ respectively, and $\bB \models \lnot\varphi(x_{i_1}/b_{i_1}, \ldots, x_{i_m}/b_{i_m})$.
\end{enumerate}
Then, Spoiler can mark $X$ \emph{closed}.
\end{itemize}
We say that $\cT$ is \emph{closed} if there are no open leaf nodes. Spoiler wins the game if he closes $\cT$, and Duplicator wins otherwise.

The following theorem characterizes the expressive power of the \qvt game, showing that it does indeed capture simultaneous bounds on number of quantifiers and number of variables. The key intuition behind the proof is that a closed game tree $\cT$ is by design isomorphic to the parse tree for a separating sentence for $(\cA,\cB)$, with the moves $\pebbleleft$, $\pebbleright$, $\splitleft$, $\splitright$, and $\swap$ in the game tree corresponding respectively to $\exists$, $\forall$, $\lor$, $\land$, and $\lnot$ in the separating sentence. We omit the proof, as it will follow immediately from the more general Theorem \ref{thm:syntacticchar} and Example \ref{exa:quantcount} in Section \ref{sec:syntactic}.

\begin{thm}[Equivalence theorem for QVT]\label{thm:qvtchar}
Spoiler has a winning strategy for the $\QVT{k}{r}{\cA}{\cB}$ iff there is an $r$-quantifier, $k$-variable sentence over $\tau$ that is separating for $(\cA,\cB)$.
\end{thm}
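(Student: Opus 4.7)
The plan is to prove the equivalence by setting up an inductive correspondence between closed QVT game trees and parse trees of separating formulas. I will prove a slightly stronger statement that accommodates non-root configurations: for any configuration $\gnode{\cA'}{\cB'}{r'}$ reached during the game, where the set of pebble colors present in the structures is some $F \subseteq \{x_1,\ldots,x_k\}$, Spoiler has a winning strategy starting from this configuration iff there is a formula $\varphi$ in $\mathrm{FO}^k(\tau)$ with at most $r'$ quantifiers and $\mathrm{free}(\varphi) \subseteq F$ that separates $(\cA',\cB')$ under the pebble interpretation for free variables. The theorem is the case $F = \emptyset$, $r' = r$.

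For the forward direction, I would proceed by induction on the structure of a closed game tree $\cT$ from the leaves upward, assigning to each node $X$ a formula $\varphi_X$ as follows: $\close \mapsto \varphi$ (the atomic witness); $\pebbleleft$ with color $x_i \mapsto \exists x_i\,\varphi_{X'}$; $\pebbleright$ with color $x_i \mapsto \forall x_i\,\varphi_{X'}$; $\splitleft \mapsto \varphi_{X_1} \lor \varphi_{X_2}$; $\splitright \mapsto \varphi_{X_1} \land \varphi_{X_2}$; $\swap \mapsto \lnot \varphi_{X'}$. Three invariants need to be checked simultaneously by induction: (i) $\varphi_X$ uses only variables from $\{x_1,\ldots,x_k\}$, which is immediate since pebble colors are named $x_1,\ldots,x_k$; (ii) $\varphi_X$ has at most $r'_X$ quantifiers, where the additive nature of quantifier count under $\lor$/$\land$ is matched precisely by the decomposition $r' = r'_1 + r'_2$ enforced by the split moves, while $\swap$ and $\close$ preserve the count; and (iii) $\varphi_X$ separates the configuration at $X$, verified case by case using the semantics of each connective (e.g., for $\splitleft$, each $\bA \in \cA_1' \cup \cA_2'$ lies in some $\cA_i'$ and thus satisfies $\varphi_{X_i}$, hence $\varphi_{X_1} \lor \varphi_{X_2}$).

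For the backward direction, I would induct on the structure of a separating formula $\varphi$ with at most $r'$ quantifiers and $\mathrm{free}(\varphi) \subseteq F$, describing Spoiler's move and invoking the induction hypothesis on the resulting configuration(s). If $\varphi$ is atomic, Spoiler plays $\close$ using $\varphi$ itself, which is legal since $\mathrm{free}(\varphi) \subseteq F$. If $\varphi = \lnot\psi$, Spoiler plays $\swap$ and recurses, noting that $\psi$ separates the swapped configuration. If $\varphi = \psi_1 \lor \psi_2$, Spoiler partitions $\cA' = \cA_1' \cup \cA_2'$ so that $\cA_i'$ satisfies $\psi_i$ (possible because every $\bA \in \cA'$ satisfies at least one $\psi_i$), sets $r'_i$ equal to the quantifier count of $\psi_i$ (so $r'_1 + r'_2 \le r'$), and plays $\splitleft$; each $\psi_i$ then separates $(\cA_i', \cB')$ and we recurse. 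The $\splitright$ case is dual, partitioning $\cB'$ according to which conjunct fails. If $\varphi = \exists x_i \psi$, Spoiler plays $\pebbleleft$ with color $x_i$; for each $\bA \in \cA'$, he selects a witness $a \in A$ with $\bA \models \psi(x_i/a)$ (which exists by semantics of $\exists$), and for any Duplicator response choosing $b$ on some $\bB \in \cB'$ we have $\bB \models \forall x_i\,\lnot\psi$, so $\bB \not\models \psi(x_i/b)$. Hence $\psi$ separates the child configuration and we recurse; the dual argument handles $\forall$.

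The main obstacle is bookkeeping around free variables and pebble colors across quantifier shadowing. When Spoiler plays $\pebbleleft$ with a color $x_i$ that is already present, the prior pebble is removed and a new one placed, which corresponds to rebinding $x_i$ in $\exists x_i\,\psi$; one must verify that the inductive invariant $\mathrm{free}(\varphi_X) \subseteq F_X$ survives this repebbling (the subformula $\psi$ may have $x_i$ free, which is absorbed by the new binder, while any prior occurrence of $x_i$ in ancestor formulas is irrelevant because those formulas are not evaluated at the child). A similar care is needed in the backward direction to confirm that after each Spoiler move, the subformula passed to the recursive call has its free variables contained in the updated pebble set. Once this bookkeeping is settled, every other step reduces to a direct semantic verification.
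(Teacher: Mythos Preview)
Your proposal is correct and follows essentially the same double structural induction (on the closed game tree for the forward direction, on the separating formula for the backward direction) that the paper uses; the paper simply packages this argument as the more general Syntactic Game Fundamental Theorem (Theorem~\ref{thm:syntacticchar}) for an arbitrary compositional syntactic measure and then specializes to the quantifier-count measure $f_q$. One minor bookkeeping point: the $\splitleft$/$\splitright$ rules require $r' = r'_1 + r'_2$ exactly, not $r'_1 + r'_2 \le r'$ as you wrote, so when the quantifier counts of $\psi_1,\psi_2$ sum to less than $r'$ you should absorb the slack into one of the $r'_i$ (your ``at most'' inductive hypothesis then still applies to that child).
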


For ease of arguments, we will henceforth refer to $\pebbleleft$ or $\pebbleright$ moves as \emph{pebble moves}, and $\splitleft$ or $\splitright$ moves as \emph{split moves}. We will say an internal vertex of $\cT$ is \emph{closed} when all $\cT$-leaves descended from it are closed. Note that the value of the counter $r'$ at each $\cT$-node effectively represents the ``budget'' on the number of pebble moves available to Spoiler to close that node.

It is not entirely obvious that Duplicator's optimal strategy in the \qvt game is also to play the oblivious strategy. The following proposition asserts that this is still true. The proof, once again, is omitted, as it will follow from the more general Proposition \ref{prop:obliviousST}.

\begin{prop}\label{prop:obliviousQVT}
If Duplicator has a winning strategy in the $\qvt$ game, then the oblivious strategy is a winning strategy.
\end{prop}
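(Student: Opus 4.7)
The plan is to prove a monotonicity lemma and then derive the proposition by contradiction. The guiding idea is that the oblivious response is \emph{maximal}: a $\pebbleleft$ or $\pebbleright$ by Spoiler is answered by placing the new pebble color in every possible way on every copy of every structure, so any other Duplicator response yields a subset (as a set of pebbled structures) of the oblivious response. Since the $\close$ move requires a single atomic formula to hold on \emph{every} structure on the left and fail on \emph{every} structure on the right, enlarging a configuration can only hinder Spoiler.

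The monotonicity lemma I would prove is: if Spoiler wins $\QVT{k}{r'}{\cA_1}{\cB_1}$ and $\cA_2 \subseteq \cA_1$, $\cB_2 \subseteq \cB_1$ (as sets of pebbled $\tau$-structures with the same pebble colors in play), then Spoiler wins $\QVT{k}{r'}{\cA_2}{\cB_2}$. The proof is by induction on Spoiler's winning game tree for $(\cA_1,\cB_1,r')$. The $\close$ case is immediate, since the witnessing atomic formula is inherited by subsets. For $\splitleft$, $\splitright$, and $\swap$, Spoiler restricts the shadow move (intersecting any partition with the smaller side) and invokes the inductive hypothesis on each child. In the pebble-move case, say $\pebbleleft$ with color $x_i$ and placement function $f$, Spoiler plays $f$ restricted to $\cA_2$. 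For any Duplicator response producing $\cB_2''$, I would lift it to a response on $\cB_1$ by reusing Duplicator's copies and placements on each $\bB \in \cB_2$ and playing obliviously on each $\bB \in \cB_1 \setminus \cB_2$; this yields $\cB_1''$ with $\cB_2'' \subseteq \cB_1''$. Spoiler's given winning strategy on $(\cA_1'',\cB_1'',r'-1)$ combined with the inductive hypothesis then gives a winning strategy on $(\cA_2'',\cB_2'',r'-1)$.

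To finish, assume Duplicator has a winning strategy yet, for contradiction, the oblivious strategy $\sigma^*$ loses against some Spoiler strategy $\mathcal{S}$. Let $\cT^*$ be the closed game tree produced by $\mathcal{S}$ against $\sigma^*$. I would build a Spoiler strategy $\mathcal{S}'$ winning against \emph{every} Duplicator strategy $\sigma$ by maintaining a map $\phi$ from the nodes of the actual game tree $\cT'$ into nodes of $\cT^*$ with the invariant that the configuration at each $X \in \cT'$ is contained, on each side, in the configuration at $\phi(X)$. At $X$, $\mathcal{S}'$ plays the move $\mathcal{S}$ plays at $\phi(X)$, restricted to the current sets. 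The invariant is preserved across all move types; in particular, across a pebble move because the oblivious response on $\phi(X)$ contains every possible placement Duplicator could choose on the subset configuration at $X$. When $\mathcal{S}$ closes $\phi(X)$ via $\varphi$, the monotonicity lemma closes $X$ via the same $\varphi$. Since $\cT^*$ is finite and closed, so is $\cT'$, giving Spoiler a winning strategy against $\sigma$, contradicting Duplicator's assumed winning strategy.

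The main obstacle is the pebble-move case of the monotonicity lemma: one must verify that a Duplicator response on $\cB_2$ extends to a response on $\cB_1$ whose output is a superset, and that this containment propagates correctly when later moves involve further copying. The delicate point is handling multiplicities of copies, but it is resolved by observing that the $\close$ condition depends only on the underlying \emph{set} of pebbled structures present (multiplicities are irrelevant for closure), so it suffices to reason at the level of sets; once this is noted, both the lemma and the construction of $\mathcal{S}'$ reduce to straightforward bookkeeping.
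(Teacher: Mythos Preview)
Your proposal is correct, but it follows a genuinely different route from the paper.

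The paper defers the proof to the more general Proposition~\ref{prop:obliviousST}, whose argument is \emph{semantic}: it relies on the Fundamental Theorem (Theorem~\ref{thm:syntacticchar}), which says Spoiler wins if and only if there is a separating formula of the right measure. The key step there is that if no $k$-variable formula of measure $r'$ separates the parent configuration, then after a pebble move answered obliviously, no formula of measure $r''$ separates the child (because any candidate $\psi$ would yield $\exists x_i\psi$ separating the parent, and the oblivious response necessarily contains a witness refuting $\psi$ on the right). Since Duplicator has no agency in the other moves, the ``no separating formula at some open leaf'' invariant persists, and Spoiler can never close.

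Your approach is purely \emph{combinatorial}: a strategy-shadowing argument showing that if Spoiler beats the oblivious Duplicator he beats every Duplicator, because every concrete Duplicator response is a subset (as a set of pebbled structures) of the oblivious one, and Spoiler's moves and the $\close$ condition are downward-closed under such inclusion. This is correct and has the virtue of being self-contained --- it does not presuppose the Fundamental Theorem, so it could even serve as a lemma in the proof of that theorem. The paper's argument is shorter but only once Theorem~\ref{thm:syntacticchar} is already in hand.

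Two small remarks. First, your Part~2 (the shadowing construction of $\mathcal{S}'$) already proves the proposition on its own; at the $\close$ step you only need the trivial fact that an atomic separating formula is inherited by subsets, not the full monotonicity lemma of Part~1. So Part~1, while true, is redundant here. Second, in Part~1 the phrase ``induction on Spoiler's winning game tree'' should be made precise as well-founded induction on the tree of all plays compatible with Spoiler's winning strategy (branching over all Duplicator responses); this tree is well-founded because every branch is a play that Spoiler wins and hence finite.
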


Figure \ref{fig_qvt_example} depicts a complete game tree $\cT$ for the $(3, 2)$-\qvt game on $(\lo(3), \lo(2))$. Duplicator plays her oblivious strategy, but Spoiler is able to close the tree. Note that this is the same instance as both Figure \ref{fig:3_vs_2} and Lemma \ref{weak_indistinguishabiliity_lemma}.

\tikzset{RED/.style={draw=red, very thick, fill=none, font=\small, circle, inner sep=1mm}}
\tikzset{BLUE/.style={draw=blue, very thick, fill=none, font=\small, circle, inner sep=1mm}}
\tikzset{BLUESMALL/.style={draw=blue, very thick, fill=none, font=\small, circle, inner sep=0.6mm}}
\tikzset{BLANK/.style={draw=black, fill=none, font=\small, circle, inner sep=1mm}}
\tikzset{LABEL/.style={draw=none, fill=none, font=\small, circle, inner sep=0cm}}
\tikzset{COUNTER/.style={draw=none, fill=none, font=\tiny, rectangle, color=teal, inner sep=1mm}}
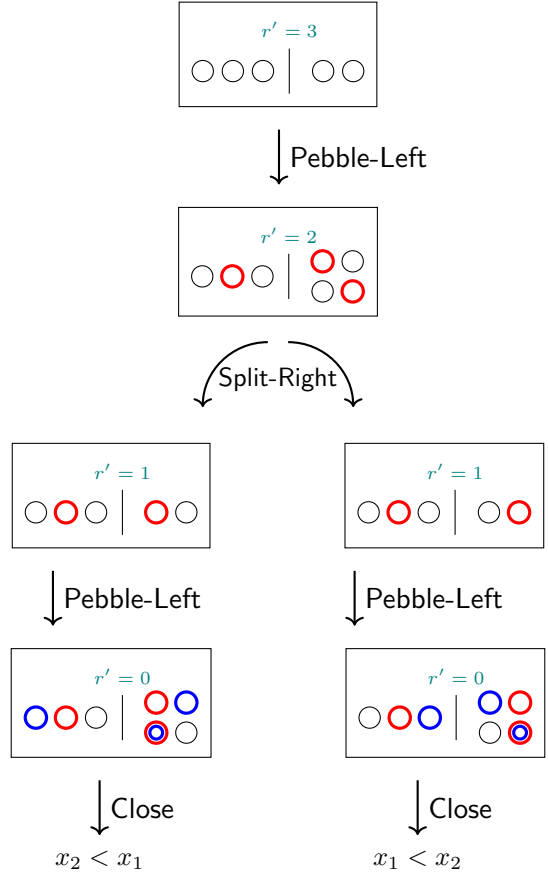
\begin{figure}[ht]
\centering
\begin{tikzpicture}[show background rectangle, scale=.5]

\node [BLANK] at (-2.3,0) {};
\node [BLANK] at (-1.5,0) {};
\node [BLANK] at (-0.7,0) {};

\draw (0, 0.6) -- (0, -0.6);
\node [COUNTER] at (0,1.1) {$r' = 3$};

\node [BLANK] at (0.9,0) {};
\node [BLANK] at (1.7,0) {};

\end{tikzpicture}

\begin{tikzpicture}[state/.style={rectangle,draw=none, rounded corners},arrow/.style={->,thick}]
  \node[state] (tail) at (0,0.5) {};
  \node[state] (head) at (0,-0.5) {};
  \node[state] (invis) at (-2,0) {};
  \draw[arrow] (tail) edge node[auto] {$\pebbleleft$} (head);
\end{tikzpicture}

\begin{tikzpicture}[show background rectangle, scale=.5]

\node [BLANK] at (-2.3,0) {};
\node [RED] at (-1.5,0) {};
\node [BLANK] at (-0.7,0) {};

\draw (0, 0.6) -- (0, -0.6);
\node [COUNTER] at (0,1.1) {$r' = 2$};

\node [RED] at (0.9,0.4) {};
\node [BLANK] at (1.7,0.4) {};

\node [BLANK] at (0.9,-0.4) {};
\node [RED] at (1.7,-0.4) {};
\end{tikzpicture}

\begin{tikzpicture}[state/.style={rectangle,draw=none, rounded corners},arrow/.style={->,thick}]
  \node[state] (tail) at (0,0.5) {};
  \node[state] (head1) at (-1,-0.5) {};
  \node[state] (head2) at (1,-0.5) {};
  \node [LABEL] at (0,0) {$\splitright$};
  \draw[arrow] (tail) edge[out=180,in=90] node[auto] {} (head1);
  \draw[arrow] (tail) edge[out=0,in=90] node[auto] {} (head2);
\end{tikzpicture}

\begin{tikzpicture}[show background rectangle, scale=.5]

\node [BLANK] at (-2.3,0) {};
\node [RED] at (-1.5,0) {};
\node [BLANK] at (-0.7,0) {};

\draw (0, 0.6) -- (0, -0.6);
\node [COUNTER] at (0,1.1) {$r' = 1$};

\node [RED] at (0.9,0) {};
\node [BLANK] at (1.7,0) {};
\end{tikzpicture}
\hspace{0.6in}
\begin{tikzpicture}[show background rectangle, scale=.5]

\node [BLANK] at (-2.3,0) {};
\node [RED] at (-1.5,0) {};
\node [BLANK] at (-0.7,0) {};

\draw (0, 0.6) -- (0, -0.6);
\node [COUNTER] at (0,1.1) {$r' = 1$};

\node [BLANK] at (0.9,0) {};
\node [RED] at (1.7,0) {};
\end{tikzpicture}

\begin{tikzpicture}[state/.style={rectangle,draw=none, rounded corners},arrow/.style={->,thick}]
  \node[state] (tail) at (0,0.5) {};
  \node[state] (head) at (0,-0.5) {};
  \draw[arrow] (tail) edge node[auto] {$\pebbleleft$} (head);
\end{tikzpicture}
\hspace{0.6in}
\begin{tikzpicture}[state/.style={rectangle,draw=none, rounded corners},arrow/.style={->,thick}]
  \node[state] (tail) at (0,0.5) {};
  \node[state] (head) at (0,-0.5) {};
  \draw[arrow] (tail) edge node[auto] {$\pebbleleft$} (head);
\end{tikzpicture}

\begin{tikzpicture}[show background rectangle, scale=.5]

\node [BLUE] at (-2.3,0) {};
\node [RED] at (-1.5,0) {};
\node [BLANK] at (-0.7,0) {};

\draw (0, 0.6) -- (0, -0.6);
\node [COUNTER] at (0,1.1) {$r' = 0$};

\node [RED] at (0.9,0.4) {};
\node [BLUE] at (1.7,0.4) {};

\node [RED] at (0.9,-0.4) {};
\node [BLANK] at (1.7,-0.4) {};
\node [BLUESMALL] at (0.9, -0.4) {};
\end{tikzpicture}
\hspace{0.6in}
\begin{tikzpicture}[show background rectangle, scale=.5]

\node [BLANK] at (-2.3,0) {};
\node [RED] at (-1.5,0) {};
\node [BLUE] at (-0.7,0) {};

\draw (0, 0.6) -- (0, -0.6);
\node [COUNTER] at (0,1.1) {$r' = 0$};

\node [BLUE] at (0.9,0.4) {};
\node [RED] at (1.7,0.4) {};

\node [BLANK] at (0.9,-0.4) {};
\node [RED] at (1.7,-0.4) {};
\node [BLUESMALL] at (1.7, -0.4) {};
\end{tikzpicture}

\begin{tikzpicture}[state/.style={rectangle,draw=none, rounded corners},arrow/.style={->,thick}]
  \node[state] (tail) at (0,0.5) {};
  \node[state] (head) at (0,-0.5) {};
  \draw[arrow] (tail) edge node[auto] {$\close$} (head);
  \node [LABEL] at (0,-0.7) {$x_2 < x_1$};
\end{tikzpicture}
\hspace{0.6in}
\begin{tikzpicture}[state/.style={rectangle,draw=none, rounded corners},arrow/.style={->,thick}]
  \node[state] (tail) at (0,0.5) {};
  \node[state] (head) at (0,-0.5) {};
  \node[state] (invis) at (-1.2,0) {};
  \draw[arrow] (tail) edge node[auto] {$\close$} (head);
  \node [LABEL] at (0,-0.7) {$x_1 < x_2$};
\end{tikzpicture}
\caption{A complete play of the $\QVT{2}{3}{\{\lo(3)\}}{\{\lo(2)\}}$. Spoiler closes the game tree $\cT$ using only three pebble moves, and two pebble colors $x_1$ (red) and $x_2$ (blue). The atomic separating sentences are shown below, and the resulting separating sentence is $\exists x_1(\exists x_2 (x_2 < x_1) \land \exists x_2(x_1 < x_2))$, which can be read off $\cT$.}
\label{fig_qvt_example}
\end{figure}

\subsection{The hereditary \ms game with repebbling does not capture number of variables}

We are now ready to prove that the $r$-round, $k$-color hereditary \ms game with repebbling (Game \ref{msgame2}) does \emph{not} capture the number of variables and quantifiers simultaneously, using Theorem \ref{thm:qvtchar}.

\begin{restatable}{prop}{hmsdoesnotwork}\label{prop:hms-failure-to-capture}
The following are both true:
\begin{enumerate}
\item There is no $3$-quantifier, $2$-variable separating sentence for $(\lo(4), \lo(3))$.
\item Spoiler has a winning strategy in the $3$-round, $2$-color hereditary \ms game with repebbling on $\{\{\lo(4)\}, \{\lo(3)\}\}$.
\end{enumerate}
\end{restatable}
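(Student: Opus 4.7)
My plan is to prove (b) by exhibiting an explicit Spoiler winning strategy (WLOG against Duplicator's oblivious strategy, which the paper notes is optimal for this variant), and to prove (a) via Theorem~\ref{thm:qvtchar} together with a Duplicator winning strategy in the corresponding QVT game.

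For (b), treat $\cA=\{\lo(4)\}$ as the left and $\cB=\{\lo(3)\}$ as the right, and have Spoiler play: (1)~$\pebbleright$ with color $x_1$ on the middle element $2\in\lo(3)$; (2)~$\pebbleleft$ with color $x_2$ on each of Duplicator's oblivious copies of $\lo(4)$, chosen so that the pair $(x_1,x_2)$ has relative order $<$ when $x_1 \le 2$ and $>$ when $x_1 \ge 3$; (3)~$\pebbleleft$ reusing color $x_1$, placing $x_1^{\mathrm{new}}$ on each copy of $\lo(4)$ so that $(x_1^{\mathrm{new}},x_2)$ has the opposite relative order from round~2. All these placements are possible because every element of $\lo(4)$ has at least one smaller and one larger element. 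On the right, $x_1$ stays pinned at element~$2$ throughout rounds~1 and~2, so Duplicator's oblivious $x_2$ determines the round-2 relative order: $<$ forces $x_2=3$ and $>$ forces $x_2=1$. A hereditary match at round~3 would then demand $x_1^{\mathrm{new}}>3$ or $x_1^{\mathrm{new}}<1$, both impossible in $\lo(3)$. Hence no left structure at round~3 has a hereditary match, and Spoiler wins.

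For (a), by Theorem~\ref{thm:qvtchar} it suffices to show Duplicator wins $\QVT{2}{3}{\{\lo(4)\}}{\{\lo(3)\}}$. The proposed strategy is the oblivious one (optimal by Proposition~\ref{prop:obliviousQVT}). The key observations are: a $\splitleft$ or $\splitright$ of a singleton only spawns an empty branch (trivially closable by an unsatisfiable atomic like $x_1<x_1$) and a nonempty branch that inherits the full burden with counter at most the original, so splits cannot help; $\swap$ simply flips sides and can be absorbed by negation when Spoiler attempts to $\close$; and with 2 colors and 3 pebble moves, every open leaf holds pebbled structures with at most 2 pebbles each, whose only atomic content over $\{<\}$ is a relative ordering in $\{<,>,=\}$. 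Since both $\lo(4)$ and $\lo(3)$ realize each two-element relative ordering, Duplicator's oblivious responses guarantee that at every open leaf Spoiler reaches there is a left-right pair of pebbled structures with identical substructure type, so no $\close$ move ever succeeds.

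The main obstacle is making~(a) fully rigorous: after Spoiler's first $\pebbleright$ multiple copies appear on the other side, and subsequent $\splitleft/\splitright$ moves combined with color reuse create a nontrivial case analysis through the QVT game tree that must be carried out carefully. The deeper principle making~(a) and~(b) compatible is that the QVT game requires only a single matching pair at each leaf independently, while the hereditary MS game requires matching through the entire parent chain; this stricter condition in~(b) is precisely what allows Spoiler's reversal-of-order strategy to exploit the lack of room in $\lo(3)$, even though no 3-quantifier 2-variable sentence separates the two structures.
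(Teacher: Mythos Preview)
Your strategy for part~(b) is essentially identical to the paper's: Spoiler plays $\playright$ on the middle element of $\lo(3)$, then $\playleft$ twice, using the second color to fix a relative order on each oblivious left copy and then reusing the first color to reverse that order. The paper spells out exactly these placements in its figures, and your analysis of why no hereditary match survives is correct.

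Your argument for part~(a), however, contains a genuine error. You claim that ``at every open leaf Spoiler reaches there is a left-right pair of pebbled structures with identical substructure type, so no $\close$ move ever succeeds.'' This is false. Consider: Spoiler plays $\pebbleleft$ with $x_1$ on B2, Duplicator obliviously produces right copies with $x_1$ on L1, L2, L3 (counter now $2$); Spoiler then plays $\splitright$, isolating the copy with $x_1=\text{L1}$ into a branch with counter~$1$. On that branch Spoiler plays $\pebbleleft$ with $x_2$ on B1, giving $x_2<x_1$ on the left; but on the right $x_1=\text{L1}$ is minimal, so no oblivious placement of $x_2$ yields $x_2<x_1$, and Spoiler closes with the atom $x_2<x_1$. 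The correct statement is not that Spoiler can never close a leaf, but that he cannot close \emph{every} leaf within the counter budget: in this example the sibling branch (right copies with $x_1\in\{\text{L2},\text{L3}\}$, counter~$1$) cannot be closed with a single remaining pebble move. Establishing this globally requires the kind of case analysis the paper actually carries out --- ruling out each first pebble move, then for the interesting ones (B2, B3 on the left or L2 on the right) arguing that splits are unaffordable and tracking the forced second and third moves. You acknowledge this as ``the main obstacle,'' but the sketch you give in its place is not merely informal; it rests on an incorrect invariant, so it does not reduce to a routine verification.
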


\begin{proof}
We first show (a), i.e.,~there is no $3$-quantifier, $2$-variable separating sentence for $(\lo(4), \lo(3))$. By Theorem \ref{thm:qvtchar}, it suffices to show that Duplicator has a winning strategy on the game $\QVT{2}{3}{\{\lo(4)\}}{\{\lo(3)\}}$. Since we are starting with singleton sets, Spoiler cannot play a split move, and gains nothing from a $\swap$ move at the root node of $\cT$. Let us consider each of the possible pebble moves one by one. Note that the counter will decrement after this move from $3$ to $2$.

It is straightforward to check that, at the root node of $\cT$, a $\pebbleright$ play on L1 or L3 is met easily with a Duplicator response on B1 or B4 respectively. This results in a Duplicator victory in the $3$-round \ms game, and therefore in $\QVT{3}{3}{\{\lo(4)\}}{\{\lo(3)\}}$, and therefore also in $\QVT{2}{3}{\{\lo(4)\}}{\{\lo(3)\}}$. So assume Spoiler plays $\pebbleright$ on L2 at the root of $\cT$. By Proposition \ref{prop:obliviousQVT}, we can assume Duplicator responds obliviously, and we are again at the situation depicted in Figure \ref{fig:4_v_3_MSr1}. At this point, a Spoiler split move can be ruled out, as the $\cT$-node resulting from the split containing the second structure on the left will require two more pebble moves to close, while the other $\cT$-node from the split will require at least one more pebble move to close. Spoiler therefore has to use up his second pebble move at this point, and we examine in turn the two possible pebble moves at this $\cT$-node. After this move, the counter will decrement to $1$.

If for his second pebble move, Spoiler plays $\pebbleright$ on L1, then Duplicator can respond on B1 on the second board on the left, and survive another pebble move on just this pair of structures. A symmetric argument applies when Spoiler plays $\pebbleright$ on L3. Finally, it is easy to see that Spoiler playing $\pebbleright$ on L2 achieves nothing. So Spoiler must play $\pebbleleft$ for his second pebble move.

Let us now examine this second pebble move by Spoiler on the second structure on the left. Where does Spoiler place a pebble? If he plays on B1 or B2, Duplicator responds with L1 or L2 respectively, and easily survives another pebble move. So, Spoiler must play either B3 or B4, which must be met with a Duplicator response on L3. In fact, for Spoiler to win on this pair of pebbled structures with one more pebble move, he must play on B3 (not B4) for his second pebble move. A symmetric argument shows that on the third structure on the left, Spoiler must play on B2 in his second pebble move. We can now look at Figure \ref{fig:4_v_3_MSr2-2}, where we have only depicted four relevant pebbled structures after two pebble moves.

\tikzset{RED/.style={draw=red, very thick, fill=none, font=\tiny, circle, inner sep=0.5mm}}
\tikzset{BLUE/.style={draw=blue, very thick, fill=none, font=\tiny, circle, inner sep=0.5mm}}
\tikzset{BLUESMALL/.style={draw=blue, very thick, fill=none, font=\tiny, circle, inner sep=0.3mm}}
\tikzset{BLANK/.style={draw=black, fill=none, font=\tiny, circle, inner sep=0.5mm}}
\tikzset{LABEL/.style={draw=none, fill=none, font=\small, circle, inner sep=0cm}}

\begin{figure}[ht]
\centering
\begin{tikzpicture}

\node [BLANK] at (-3.1,0.5) {B1};
\node [RED] at (-2.3,0.5) {B2};
\node [BLUE] at (-1.5,0.5) {B3};
\node [BLANK] at (-0.7,0.5) {B4};

\node [BLANK] at (-3.1,-0.5) {B1};
\node [BLUE] at (-2.3,-0.5) {B2};
\node [RED] at (-1.5,-0.5) {B3};
\node [BLANK] at (-0.7,-0.5) {B4};

\draw (0.2, 0.6) -- (0.2, -0.6);

\node [BLANK] at (1,0.5) {L1};
\node [RED] at (1.8,0.5) {L2};
\node [BLUE] at (2.6,0.5) {L3};

\node [BLUE] at (1,-0.5) {L1};
\node [RED] at (1.8,-0.5) {L2};
\node [BLANK] at (2.6,-0.5) {L3};

\end{tikzpicture}
\caption{Configuration (partial) after two pebble moves, depicting just two pairs of pebbled structures that maintain isomorphisms.}
\label{fig:4_v_3_MSr2-2}
\end{figure}
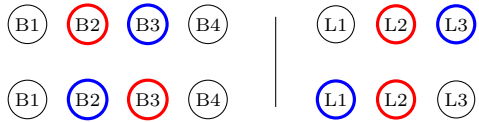

At this point, splitting is not an option, as the counter is at $1$, and so splitting would cause one of the two resultant $\cT$-nodes to have the counter value at $0$, and Spoiler cannot close that node. But Spoiler cannot win without splitting either, as moving any pebble on any of the four structures in Figure \ref{fig:4_v_3_MSr2-2} can be met with a response on at least one of the two structures on the other side that maintains an isomorphism. It follows that Spoiler cannot win with a $\pebbleright$ move at the root of $\cT$.

It is straightforward to check that, at the root of $\cT$, a $\pebbleleft$ play on B1 or B4 is met easily with a Duplicator response on L1 or L3 respectively. So assume Spoiler plays $\pebbleleft$ on B2 for his first pebble move (B3 will be symmetric). By Proposition \ref{prop:obliviousQVT}, we can assume Duplicator responds obliviously, as depicted in Figure \ref{fig:vs3-rd1}. The counter is at $2$.

\tikzset{RED/.style={draw=red, very thick, fill=none, font=\tiny, circle, inner sep=0.5mm}}
\tikzset{BLUE/.style={draw=blue, very thick, fill=none, font=\tiny, circle, inner sep=0.5mm}}
\tikzset{BLUESMALL/.style={draw=blue, very thick, fill=none, font=\tiny, circle, inner sep=0.3mm}}
\tikzset{BLANK/.style={draw=black, fill=none, font=\tiny, circle, inner sep=0.5mm}}
\tikzset{LABEL/.style={draw=none, fill=none, font=\small, circle, inner sep=0cm}}

\begin{figure}[ht]
\centering
\begin{tikzpicture}

\node [BLANK] at (-3.1,0) {B1};
\node [RED] at (-2.3,0) {B2};
\node [BLANK] at (-1.5,0) {B3};
\node [BLANK] at (-0.7,0) {B4};

\draw (0.2, 0.6) -- (0.2, -0.6);

\node [RED] at (1,1) {L1};
\node [BLANK] at (1.8,1) {L2};
\node [BLANK] at (2.6,1) {L3};

\node [BLANK] at (1,0) {L1};
\node [RED] at (1.8,0) {L2};
\node [BLANK] at (2.6,0) {L3};

\node [BLANK] at (1,-1) {L1};
\node [BLANK] at (1.8,-1) {L2};
\node [RED] at (2.6,-1) {L3};

\end{tikzpicture}
\caption{Configuration of the game $\QVT{2}{3}{\{\lo(4)\}}{\{\lo(3)\}}$ after the first pebble move, where Spoiler plays $\pebbleleft$ on B2 and Duplicator responds obliviously.}
\label{fig:vs3-rd1}
\end{figure}
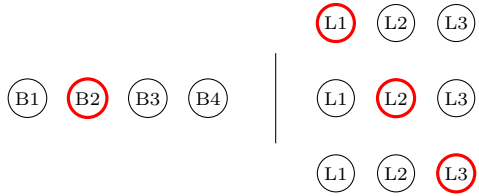

By a similar argument as before, we can conclude that a Spoiler split move can be ruled out. We can now argue that in order to win in three pebble moves, Spoiler must play $\pebbleright$ for his second pebble move. Indeed, if instead, Spoiler plays $\pebbleleft$, then a play of
\begin{itemize}
\item B1 is met with Duplicator playing on L1 on the middle structure on the right and surviving one more pebble move on that pair;
\item B3 is met with Duplicator playing L2 on the top structure on the right and L3 on the middle structure on the right, and surviving one more pebble move on one of these structures;
\item B4 is met with Duplicator playing L3 on both of the top two structures on the right, and surviving one more pebble move on one of these structures.
\end{itemize}

We now claim that for Spoiler's second pebble move with the $\pebbleright$, he must play on L3 on the middle structure on the right. Otherwise, a play on L1 or L2 is met by a Duplicator move on B1 or B2 respectively on the structure on the left, where she easily survives one more pebble move just on this pair. However, a Spoiler play of L3 is met with a Duplicator play on B4 on the left, and she can survive one more pebble move on this pair. This concludes the proof of (a).

We next show (b), i.e.,~Spoiler wins the $3$-round, $2$-color hereditary \ms game with repebbling on $\{\{\lo(4)\}, \{\lo(3)\}\}$. Spoiler's first round $\playright$ move, with Duplicator's oblivious responses on the left side, is shown in Figure \ref{fig:4_v_3_MSr1}.

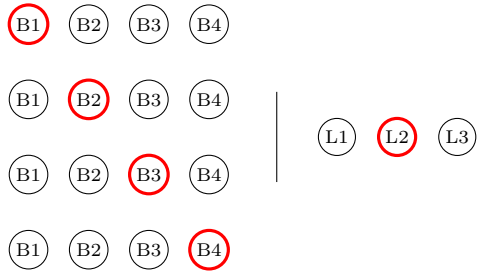
\begin{figure}[ht]
\centering
\begin{tikzpicture}

\node [RED] at (-3.1,1.5) {B1};
\node [BLANK] at (-2.3,1.5) {B2};
\node [BLANK] at (-1.5,1.5) {B3};
\node [BLANK] at (-0.7,1.5) {B4};

\node [BLANK] at (-3.1,0.5) {B1};
\node [RED] at (-2.3,0.5) {B2};
\node [BLANK] at (-1.5,0.5) {B3};
\node [BLANK] at (-0.7,0.5) {B4};

\node [BLANK] at (-3.1,-0.5) {B1};
\node [BLANK] at (-2.3,-0.5) {B2};
\node [RED] at (-1.5,-0.5) {B3};
\node [BLANK] at (-0.7,-0.5) {B4};

\node [BLANK] at (-3.1,-1.5) {B1};
\node [BLANK] at (-2.3,-1.5) {B2};
\node [BLANK] at (-1.5,-1.5) {B3};
\node [RED] at (-0.7,-1.5) {B4};

\draw (0.2, 0.6) -- (0.2, -0.6);

\node [BLANK] at (1,0) {L1};
\node [RED] at (1.8,0) {L2};
\node [BLANK] at (2.6,0) {L3};

\end{tikzpicture}
\caption{Configuration after round $1$ of the $3$-round, $2$-color hereditary \ms game with repebbling on $\{\{\lo(4)\}, \{\lo(3)\}\}$.}
\label{fig:4_v_3_MSr1}
\end{figure}

Spoiler will play $\playleft$ for the two remaining rounds. His second round plays are indicated in blue in Figure \ref{fig:4_v_3_MSr2}.

\begin{figure}[ht]
\centering
\begin{tikzpicture}

\node [RED] at (-3.1,1.5) {B1};
\node [BLUE] at (-2.3,1.5) {B2};
\node [BLANK] at (-1.5,1.5) {B3};
\node [BLANK] at (-0.7,1.5) {B4};

\node [BLANK] at (-3.1,0.5) {B1};
\node [RED] at (-2.3,0.5) {B2};
\node [BLUE] at (-1.5,0.5) {B3};
\node [BLANK] at (-0.7,0.5) {B4};

\node [BLANK] at (-3.1,-0.5) {B1};
\node [BLUE] at (-2.3,-0.5) {B2};
\node [RED] at (-1.5,-0.5) {B3};
\node [BLANK] at (-0.7,-0.5) {B4};

\node [BLANK] at (-3.1,-1.5) {B1};
\node [BLANK] at (-2.3,-1.5) {B2};
\node [BLUE] at (-1.5,-1.5) {B3};
\node [RED] at (-0.7,-1.5) {B4};

\draw (0.2, 0.6) -- (0.2, -0.6);

\node [BLANK] at (1,0) {L1};
\node [RED] at (1.8,0) {L2};
\node [BLANK] at (2.6,0) {L3};

\end{tikzpicture}
\caption{Configuration (partial) after Spoiler's second round $\playleft$ move (given in blue) for the $3$-round, $2$-color hereditary \ms game with repebbling on $\{\{\lo(4)\}, \{\lo(3)\}\}$.}
\label{fig:4_v_3_MSr2}
\end{figure}
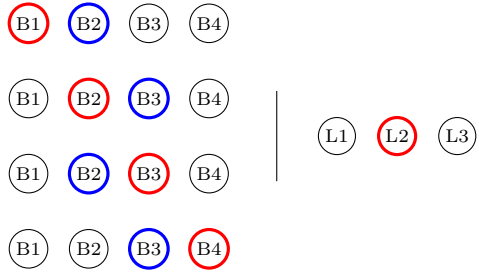

In response, let us WLOG only consider Duplicator's plays on L1 and L3 (the L2 move achieves nothing). Responding with L1 on the right keeps an isomorphism going with pebbled structures 3 and 4 on the left, and responding with L3 on the right keeps an isomorphism going with pebbled structures 1 and 2 on the left. In response, in the third round, Spoiler plays the red pebble on B3, B4, B1, and B2 on pebbled structures 1, 2, 3, and 4 respectively, in all cases breaking all hereditary isomorphisms with pebbled structures on the right. This concludes the proof.
\end{proof}

It is worth noting that we could \textit{not} have proven Proposition \ref{prop:hms-failure-to-capture} via a straightforward pebble game argument with $3$ rounds and $2$ pebbles. Indeed, Spoiler easily wins such a game beginning with a play on B2, B3 or L2. This implies that there is a rank-$3$, $2$-variable separating sentence for $(\{\lo(4)\},\{\lo(3)\})$, namely, $\exists x ( (\exists y (y<x)) \land \exists y ( x < y  \land \exists x (y <x)))$.
\section{The Syntactic Game}\label{sec:syntactic}

Consider a node $X = \gnode{\cA'}{\cB'}{r'}$ in the game tree $\cT$ for a given \qvt game. The counter $r' \in \N$ in $X$ is the number of pebble moves Spoiler may use to close the subtree of $\cT$ rooted at $X$. The proof of Theorem \ref{thm:qvtchar} will follow from the fact that this is exactly the number of quantifiers needed in some sense to ``distinguish'' between $\cA'$ and $\cB'$. In fact, as stated earlier, we will prove a significantly stronger generalization. We first need the formal notion of a \emph{syntactic measure}. Such measures were introduced to derive combinatorial games for ``reasonable'' complexity measure of infinitary formulas in the logic $\cL_{\omega_1, \omega}$ \cite[Definition 5.1]{DBLP:journals/mlq/VaananenW13}. In this section, we will derive games precisely capturing many familiar complexity measures, including: quantifier count, quantifier rank, and formula size. The main idea in all of these games would be to have specific types of \emph{pebble moves} corresponding to the inductive ways to build up FO formulas; Spoiler will use these moves to explicitly simulate a separating formula, and will try to do this and close the game tree without running out of a ``budget'' on the syntactic measure, which we keep track of by means of a counter. In these moves, the number of pebble colors captures the number of variables.  For each logical connective occurring in the language in question, there is a move and a corresponding cost of using that operator. For example, to count the quantifier number, we add $1$ for each quantifier used, as in the \qvt game.  Thus, these games capture the descriptive cost function under consideration. They are the natural tool for proving descriptive lower bounds; sometimes they are the best choice for proving upper bounds as well.

\subsection{Compositional syntactic measures}

Recall that $\mathrm{FO}^k(\tau)$ is the set of well-formed {\fo} formulas over the schema $\tau$, that use only the variables $\{x_1, \ldots, x_k\}$. Note that $\mathrm{FO}^k(\tau)$ is defined inductively, where each formula is the
application of a logical symbol to some subformulas. A \emph{compositional syntactic measure} uses such an inductive definition to measure formula complexity.

Recall that $\mathsf{Pred}(\tau)$ is the set of predicate symbols of $\tau$, assumed to be finite.

\begin{defi}\label{def:syntacticmeasure}
A \emph{compositional syntactic measure} is a function $f : \mathrm{FO}^k(\tau) \to \N$ defined in terms of helper
functions $h_\lnot, h_\exists, h_\forall : \N \to \N$, $h_\lor, h_\land : \N^2 \to \N$, $h_{\mathsf{atomic}} : \mathsf{Pred}(\tau) \to \N$, as follows:  
\begin{itemize}
\item if $\varphi$ is atomic, say $\varphi = P(t_1,\ldots, t_r)$ for some $P \in \mathsf{Pred}(\tau)$, and terms $t_1, \ldots, t_r$, where $r$ is the arity of $P$, then $f(\varphi) = h_{\mathsf{atomic}}(P)$. 
\item if $\varphi = \neg \psi$, then $f(\varphi) = h_\lnot(f(\psi))$.
\item if $\varphi = (\psi \lor \gamma)$, then $f(\varphi) = h_\lor(f(\psi), f(\gamma))$.
\item if $\varphi = (\psi \land \gamma)$, then $f(\varphi) = h_\land(f(\psi), f(\gamma))$.
\item if $\varphi = \exists x ( \psi)$, then $f(\varphi) = h_\exists(f(\psi))$.
\item if $\varphi = \forall x (\psi)$, then $f(\varphi) = h_\forall(f(\psi))$.
\end{itemize}
\end{defi}

We call such a measure ``compositional'' because the value of $f$ is determined by the value of $f$ on its subformulas. Compositional syntactic measures are a large and natural class of syntactic measures, which includes quantifier rank, quantifier number, formula size, and number of atomic formulas. However, we remark that the number of distinct variables is \emph{not} a compositional measure, as the value of this measure on, say, a disjunction depends not only on the values on the subformulas, but actually on the subformulas themselves. We also note that in Definition \ref{def:syntacticmeasure}, we treat equality as an ordinary binary predicate.

We could have stated Definition \ref{def:syntacticmeasure} purely in terms of well-formed formulas over $\tau$, instead of the slice $\mathrm{FO}^k(\tau)$. However, we chose our convention to evoke the idea of fixing $k$ variables at the outset, which will be relevant in the rest of this section.

The following examples show that quantifier number, quantifier rank, and formula size are indeed compositional syntactic measures.

\begin{exa}\label{exa:quantcount}
The \emph{quantifier count} $f_q$ satisfies:
\begin{itemize}
\item if $\varphi$ is atomic,
  $f_q(\varphi) = 0$. In other words, $h_{\mathsf{atomic}}(\varphi) = 0$.
\item if $\varphi = \neg \psi$,
  $f_q(\varphi) = f_q(\psi)$. In other words, $h_\lnot(n) = n$.
\item if $\varphi = (\psi \land \gamma)$ or $\varphi = (\psi \lor \gamma)$, $f_q(\varphi) = f_q(\psi) + f_q(\gamma)$. In other words, $h_\lor(m, n) = h_\land(m, n) = m + n$.
\item if $\varphi = \exists x \psi$ or $\varphi = \forall x \psi$, $f_q(\varphi) = 1 + f_q(\psi)$. In other words, $h_\exists(n) = h_\forall(n) = n + 1$.
\end{itemize}
\end{exa}

\begin{exa}\label{exa:quantrank}
The \emph{quantifier rank} $f_r$ satisfies:
\begin{itemize}
\item if $\varphi$ is atomic,
  $f_r(\varphi) = 0$. In other words, $h_{\mathsf{atomic}}(\varphi) = 0$.
\item if $\varphi = \neg \psi$,
  $f_r(\varphi) = f_r(\psi)$. In other words, $h_\lnot(n) = n$.
\item if $\varphi = (\psi \land \gamma)$ or $\varphi = (\psi \lor \gamma)$, $f_r(\varphi) = \max(f_r(\psi), f_r(\gamma))$. In other words, $h_\lor(m, n) = h_\land(m, n) = \max(m, n)$.
\item if $\varphi = \exists x \psi$ or $\varphi = \forall x \psi$, $f_r(\varphi) = 1 + f_r(\psi)$. In other words, $h_\exists(n) = h_\forall(n) = n + 1$.
\end{itemize}
\end{exa}

\begin{exa}\label{exa:formulasize}
The \emph{formula size} $f_s$ satisfies:
\begin{itemize}
\item if $\varphi$ is atomic,
  $f_r(\varphi) = 1$. In other words, $h_{\mathsf{atomic}}(\varphi) = 1$.
\item if $\varphi = \neg \psi$,
  $f_r(\varphi) = 1 + f_r(\psi)$. In other words, $h_\lnot(n) = n + 1$.
\item if $\varphi = (\psi \land \gamma)$ or $\varphi = (\psi \lor \gamma)$, $f_r(\varphi) = 1 + f_r(\psi) + f_r(\gamma)$. In other words, $h_\lor(m, n) = h_\land(m, n) = 1 + m + n$.
\item if $\varphi = \exists x \psi$ or $\varphi = \forall x \psi$, $f_r(\varphi) = 1 + f_r(\psi)$. In other words, $h_\exists(n) = h_\forall(n) = n + 1$.
\end{itemize}
\end{exa}

We are now ready to define the syntactic game.

\subsection{The syntactic game}

Fix a $\tau$-structure $\bA$, and let $\alpha_\bA$ be a \emph{partial assignment}, i.e.,~a function that maps a set of variables to elements in $A$. We call the ordered pair $(\bA, \alpha_\bA)$ a \emph{structure-assignment pair}. Note that if we identify pebble colors with variables, a pebbled $\tau$-structure $\langle\bA ~|~ a_1, \ldots, a_t\rangle$ has a one-to-one correspondence with the pair $(\bA, \alpha_\bA)$, where $\alpha_\bA$ maps the variable corresponding to a pebble color to the element in $A$ that it is placed on. In particular, a $\tau$-structure $\bA$ has a one-to-one correspondence with the pair $(\bA, \varnothing)$, where $\varnothing$ is the empty assignment.

Just as a $\tau$-structure $\bA$ satisfies a sentence over $\tau$, we can extend this notion to a structure-assignment pair $(\bA, \alpha_\bA)$ satisfying a \emph{formula}, which is allowed to have free variables as long as all such variables are in the domain\footnote{ Note that throughout this section, the term \emph{domain} will always refer to the domain of the assignment functions, and never to the universe of $\tau$-structures.} of $\alpha_\bA$. Formally, let $(\bA, \alpha_\bA)$ be a structure-assignment pair, and $\psi$ be a formula such that $\mathsf{Free}(\psi) = \{x_{i_1}, \ldots, x_{i_m}\} \subseteq \mathsf{dom}(\alpha_\bA)$. Suppose $\alpha_\bA(x_{i_j}) = a_{i_j} \in A$ for all $1 \leq j \leq m$. We say that $(\bA, \alpha_\bA)$ \emph{satisfies} $\psi$ (or $(\bA, \alpha_\bA) \models \psi$) if $\bA \models \psi(x_{i_1}/a_{i_1}, \ldots, x_{i_m}/a_{i_m})$.

Of course, for every structure-assignment pair $(\bA, \alpha_\bA)$ and formula $\psi$ with $\mathsf{Free}(\psi) \subseteq \mathsf{dom}(\alpha_\bA)$, exactly one of $(\bA, \alpha_\bA) \models \psi$ and $(\bA, \alpha_\bA) \models \lnot\psi$ is true.

For the rest of this paper, by convention, we will denote sets of structure-assignment pairs with script typeface $\mathscr{A}, \mathscr{B}$, and so on. Note that the $\tau$-structures among the pairs in any such $\mathscr{A}$ can in general be different. We now state a definition for convenience.

\begin{defi}\label{def:domainconsistent}
Two sets $\sA$ and $\sB$ of structure-assignment pairs are called \emph{domain-consistent} if there is a set $W$ of variables such that for every structure-assignment pair $(\mathbf{X}, \alpha_{\mathbf{X}}) \in \sA\cup\sB$, we have that $\mathbf{dom}(\alpha_{\mathbf{X}}) = W$. We denote this common domain $W$ by $\mathsf{dom}(\sA)$ (or, equivalently, $\mathsf{dom}(\sB)$).
\end{defi}

In other words, every partial assignment across all pairs in the two sets have the same domain. In particular, every structure-assignment pair within the set $\sA$ (or, equivalently, $\sB$) also has the same domain for its partial assignment.

We can now formalize what it means for a formula with free variables to be a separating formula, thereby generalizing Definition \ref{def:sepsentence}.

\begin{defi}\label{def:sepformula}
Let $\sA$ and $\sB$ be domain-consistent sets of structure-assignment pairs with common domain $\mathsf{dom}(\sA)$. A \emph{separating formula for $(\sA,\sB)$} is a {\fo}-formula $\psi$ with $\mathsf{Free}(\psi) \subseteq \mathsf{dom}(\sA)$, such that every $(\bA, \alpha_\bA) \in \sA$ satisfies $(\bA, \alpha_\bA) \models \varphi$, and every $(\bB, \alpha_\bB) \in \sB$ satisfies $(\bB, \alpha_\bB) \models \lnot\varphi$.
\end{defi}

Note that this does generalize Definition \ref{def:sepsentence}, using the correspondence between $\bA$ and $(\bA, \varnothing)$. A separating sentence is just a separating formula with no free variables.

We are now ready to define the syntactic game. Fix two domain-consistent sets $\sA$ and $\sB$ of structure-assignment pairs. For a compositional syntactic measure $f$, define the \emph{$(r, k)$-syntactic game on $f$}, denoted $\SG{r}{k}{f}$, on $(\sA, \sB)$ as follows.

Two players, Spoiler and
Duplicator, play by growing a \emph{game tree} $\cT$, starting from a single root node $\Xr$. As in the \qvt game, we have a set $\cC = \{x_1, \ldots, x_k\}$ of pebble colors, with arbitrarily many pebbles available of each color. Once again, the color $x_i$ will correspond to the variable $x_i$. The leaf nodes in $\cT$ are open or closed, with the root $\Xr$ considered to be an open leaf at the start of the game. As before, we define $\cT$ to be \emph{closed} if all its leaves are closed. The notion of an internal vertex being closed also carries over from Section \ref{sec:finitepebbles}. The nodes $X \in V(\cT)$ will correspond to a tuple $\gnode{\text{left}}{\text{right}}{c}$ as before. We define a configuration of the game as before, and maintain the same invariant as in the \qvt game. Recall that a pebbled $\tau$-structure $\langle\bA ~|~ a_1, \ldots, a_t\rangle$ now has a one-to-one correspondence with the pair $(\bA, \alpha_\bA)$, where $\alpha_\bA$ is the assignment that maps a variable corresponding to a pebble color to the element in $A$ it is placed on. To avoid notational clutter, we will view $\sA$ or $\sB$ as either a set of pebbled $\tau$-structures or as a set of structure-assignment pairs interchangeably, where the precise one will be clear from context.

At the root node $\Xr \in V(\cT)$,
\begin{itemize}
\item the left side consists of $\sA$, viewed as pebbled $\tau$-structures (with the pebbling inherited from the assignment functions).
\item the right side consists of $\sB$, viewed as pebbled $\tau$-structures (with the pebbling inherited from the assignment functions).
\item the counter $r'$ is set to $r$.
\end{itemize}
The root node, therefore, is denoted $\Xr = \gnode{\sA}{\sB}{r}$. Note that since $\sA$ and $\sB$ are domain-consistent, every pebbled $\tau$-structure at the root has the same set of pebbles on it. We will maintain this invariant throughout every node of $\cT$ (i.e., every configuration will correspond to a pair of domain-consistent sets $\sA'$ and $\sB'$). As before, the root is considered an open leaf node at the start.

Once again, Spoiler on his turn can perform any of the following moves on an open leaf node $X = \gnode{\sA'}{\sB'}{r'}$, as long as the precondition for that move is met:

\begin{itemize}
\item $\pebbleleft$ : precondition: $h^{-1}_\exists(r') \neq \varnothing$
  \begin{enumerate}
  \item Spoiler chooses a pebble color $x_i \in \cC$. If all structures in $X$ contain the pebble color $x_i$, Spoiler removes all of these pebbles.
  \item For each pebbled structure $(\bA, \alpha_\bA) \in \sA'$, Spoiler places a pebble colored $x_i$ on an element in the universe of $\bA$. Call this new set of pebbled structures $\sA''$.
  \item For each pebbled structures $(\bB, \alpha_\bB) \in \sB'$, Duplicator \emph{may} make any number of copies of $(\bB, \alpha_\bB)$, then \emph{must} place a pebble colored $x_i$ on an element in the universe of $\bB$ and an element in the universe of each copy. Call this new set of pebbled structures $\sB''$. 
  \item Spoiler chooses some $r'' \in \N$ with $h_\exists(r'') = r'$. Note that this is guaranteed to exist by the precondition.
  \item Spoiler makes a new open leaf $X' = \gnode{\sA''}{\sB''}{r''}$ in $\cT$, with parent $X$. Note that $X$ is no longer a leaf in $\cT$.
   \end{enumerate}
 \item $\pebbleright$ : precondition: $h^{-1}_\forall(r') \neq \varnothing$

This move is dual to $\pebbleleft$; Spoiler plays on $\sB'$, Duplicator responds on $\sA'$, and Spoiler chooses some $r'' \in \N$ with $h_\forall(r'') = r'$.
 \item $\splitleft$: precondition: $h^{-1}_\lor(r') \neq \varnothing$
   \begin{enumerate}
   \item Spoiler partitions $\sA'$ so that $\sA' = \sA'_1 \cup \sA'_2$.
   \item Spoiler chooses $r'_1,r'_2 \in \N$ such that $h_\lor(r'_1, r'_2) = r'$. Note that these are guaranteed to exist by the precondition.
   \item Spoiler makes two new open leaf nodes $X_1 = \gnode{\sA'_1}{\sB'}{r'_1}$ and $X_2 = \gnode{\sA'_2}{\sB'}{r'_2}$ in $\cT$. Both new nodes have parent $X$.
    \end{enumerate}
 \item $\splitright$: precondition: $h^{-1}_\land(r') \neq \varnothing$

   This move is dual to $\splitleft$; Spoiler partitions $\sB'$ and chooses $r'_1,r'_2 \in \N$
   such that $h_\land(r'_1, r'_2) = r'$.
  \item $\swap$: precondition: $h^{-1}_\lnot(r') \neq \varnothing$.
  \begin{enumerate}
  \item Spoiler chooses $r'' \in \N$ such that $h_\lnot(r'') = r'$.
  \item Spoiler makes a new open leaf node $X' = \gnode{\sB'}{\sA'}{r''}$ in $\cT$, with parent $X$.
  \end{enumerate}
\item $\close$: If there is an atomic formula $\varphi$ over $\tau$ that is separating (Definition \ref{def:sepformula}) for $(\sA',\sB')$, such that $h_{\mathsf{atomic}}(\varphi) = r'$, Spoiler can mark $X$ \emph{closed}.
\end{itemize}
We say that $\cT$ is \emph{closed} if there are no open leaf nodes. Spoiler wins the game if he closes $\cT$, and Duplicator wins otherwise.

If $\cA$ and $\cB$ are sets of $\tau$-structures, we will abuse notation slightly and talk about the syntactic game on $(\cA, \cB)$, where these will be viewed as sets of structure-assignment pairs with the empty assignment throughout.

We are now ready to characterize the expressive power of the syntactic game. The key to the proof is the observation that a closed game tree $\cT$ is isomorphic to the parse tree of a formula separating
$(\sA,\sB)$.

\begin{thm}[Syntactic Game Fundamental Theorem]\label{thm:syntacticchar}
Fix a compositional syntactic measure $f$, and domain-consistent sets $\sA, \sB$ of structure-assignment pairs, with $\mathsf{dom}(\sA) \subseteq \{x_1, \ldots, x_k\}$. Spoiler has a winning strategy for the game $\SG{r}{k}{f}$ on $(\sA,\sB)$ if and only if there is a formula $\varphi \in \mathrm{FO}^k(\tau)$ with $f(\varphi) = r$, that is
separating for $(\sA,\sB)$.
\end{thm}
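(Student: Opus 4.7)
The plan is to prove both directions by induction, exploiting the intended correspondence between closed game trees of $\SG{r}{k}{f}$ and parse trees of separating formulas: each internal node of $\cT$ with counter value $r'$ should correspond to a subformula with $f$-value exactly $r'$, and Spoiler's six move types ($\close$, $\pebbleleft$, $\pebbleright$, $\splitleft$, $\splitright$, $\swap$) should correspond respectively to the six clauses (atomic, $\exists$, $\forall$, $\lor$, $\land$, $\lnot$) of Definition \ref{def:syntacticmeasure}. All formulas constructed will lie in $\mathrm{FO}^k(\tau)$, since pebble colors are drawn from the palette $\{x_1,\ldots,x_k\}$.

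For the ``if'' direction, assume $\varphi \in \mathrm{FO}^k(\tau)$ with $f(\varphi) = r$ separates $(\sA,\sB)$, and proceed by induction on the structure of $\varphi$. In the atomic case, Spoiler wins immediately with $\close$ because $h_{\mathsf{atomic}}(\varphi) = r$. If $\varphi = \exists x_i \psi$, set $r'' := f(\psi)$, so $h_\exists(r'') = r$; Spoiler plays $\pebbleleft$ with color $x_i$, choosing for each $(\bA,\alpha_\bA) \in \sA$ a witness $a_\bA$ with $(\bA, \alpha_\bA[x_i \mapsto a_\bA]) \models \psi$, which exists because $(\bA,\alpha_\bA) \models \varphi$. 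For \emph{any} Duplicator response $\sB''$, every pair in $\sB''$ satisfies $\lnot\psi$ because each $(\bB,\alpha_\bB) \in \sB$ satisfies $\forall x_i \lnot\psi$; hence $\psi$ separates $(\sA'', \sB'')$ and the inductive hypothesis closes the child $\gnode{\sA''}{\sB''}{r''}$. The case $\varphi = \psi_1 \lor \psi_2$ uses $\splitleft$ with $\sA_1 := \{(\bA,\alpha_\bA) \in \sA : (\bA,\alpha_\bA) \models \psi_1\}$ and $\sA_2 := \sA \setminus \sA_1$ (whose members satisfy $\psi_2$), with counters $f(\psi_1), f(\psi_2)$. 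The case $\varphi = \lnot\psi$ uses $\swap$ with counter $f(\psi)$, invoking the inductive hypothesis on $\psi$ separating $(\sB,\sA)$. The $\forall$ and $\land$ cases are dual.

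For the ``only if'' direction, by Proposition \ref{prop:obliviousST} we may assume without loss of generality that Duplicator plays the oblivious strategy; this makes the game tree deterministic in Duplicator's actions, and since Spoiler wins, the resulting $\cT$ is finite. We induct on the height of $\cT$. If the root is immediately closed by $\close$ using atomic $\varphi$, then $\varphi$ is separating with $f(\varphi) = h_{\mathsf{atomic}}(\varphi) = r$. If Spoiler's first move is $\pebbleleft$ with color $x_i$, witnesses $a_\bA$, and child counter $r''$ satisfying $h_\exists(r'') = r$, then the oblivious response yields $\sB'' = \{(\bB, \alpha_\bB[x_i \mapsto b]) : (\bB,\alpha_\bB) \in \sB,\, b \in B\}$. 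The inductive hypothesis on the child subtree produces $\psi \in \mathrm{FO}^k(\tau)$ with $f(\psi) = r''$ separating $(\sA'', \sB'')$; then $\exists x_i \psi$ separates $(\sA, \sB)$ (witnessed on the left by the $a_\bA$, and falsified on the right because $\psi$ fails on every oblivious extension), with $f(\exists x_i \psi) = h_\exists(r'') = r$. The $\splitleft$ (resp.\ $\splitright$) case combines the two subtree formulas by $\lor$ (resp.\ $\land$); the $\swap$ case takes $\lnot$ of the subtree formula; $\pebbleright$ is dual to $\pebbleleft$.

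The main obstacle is that the helper functions $h_\exists, h_\lor, \ldots$ are arbitrary maps $\N \to \N$ and need not be monotone, so the counter $r$ is not itself a well-founded induction measure. The correct measures are the formula structure in the ``if'' direction and the finite height of Spoiler's winning strategy tree against oblivious Duplicator in the ``only if'' direction; finiteness of the latter relies on Spoiler having to close every leaf and on the at-most-binary branching induced by the oblivious convention. A secondary check, needed in every case of the ``only if'' direction, is that the constructed formula has free variables contained in $\mathsf{dom}(\sA)$ at each node, which is maintained as an invariant of the game from the root downward.
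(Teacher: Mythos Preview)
Your proof is correct and follows essentially the same approach as the paper: structural induction on $\varphi$ for the ``if'' direction and induction on the (finite) closed game tree for the ``only if'' direction, matching Spoiler's six move types to the six clauses defining $f$. One small fix: drop the appeal to Proposition~\ref{prop:obliviousST}, which appears \emph{after} this theorem in the paper and whose proof implicitly relies on it; the justification you actually need---that a winning strategy for Spoiler beats \emph{every} Duplicator strategy, in particular the oblivious one---is immediate and requires no lemma.
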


\begin{proof}\hfill

$(\Longrightarrow:)$ Suppose Spoiler closes $\cT$ even with optimal play from Duplicator. We use induction on the closed game tree $\cT$ to prove that there is a separating formula. If $\cT$ consists of a single node, then Spoiler closed the root node for his first move, and this corresponds to an atomic formula with the specified counter value by definition of the $\close$ move. Otherwise, consider a closed tree $\cT$, and consider Spoiler's first move on the root. Inductively, each of the subtrees corresponds to a separating formula. We consider each possible first move by Spoiler. Let the root node be $\Xr = \gnode{\sA}{\sB}{r}$.
\begin{itemize}
\item Suppose Spoiler plays $\swap$, creating $X = \gnode{\sB}{\sA}{r'}$. By induction, $(\sB, \sA)$ is separable by a formula $\psi$ such that $f(\psi) = r'$. Consider the formula $\varphi = \lnot\psi$, and observe that $\varphi$ is a separating formula for $(\sA, \sB)$. Furthermore, $f(\varphi) = h_\lnot(f(\psi)) = h_\lnot(r') = r$.
\item Suppose Spoiler plays $\splitleft$, creating $X_1 = \gnode{\sA_1}{\sB}{r_1}$ and $X_2 = \gnode{\sA_2}{\sB}{r_2}$. By induction, $(\sA_1, \sB)$ is separable by a formula $\varphi_1$ with $f(\varphi_1) = r_1$, and $(\sA_2, \sB)$ is separable by a formula $\varphi_2$ with $f(\varphi_2) = r_2$. Then, $\varphi = \varphi_1\lor\varphi_2$ is a separating formula for $(\sA, \sB)$, and $f(\varphi) = h_\lor(r_1, r_2) = r$.
\item If Spoiler plays $\splitright$, the analysis is similar.
\item Suppose Spoiler plays $\pebbleleft$ using the pebble color $x_i$, creating $X = \gnode{\sA'}{\sB'}{r'}$. By induction, $(\sA', \sB')$ has a separating formula $\psi$ with $f(\psi) = r'$. Let $\varphi = \exists x_i\psi$, and consider one particular structure-assignment pair $(\bA, \alpha_{\bA}) \in \cA$. We know this pair became a pair $(\bA, \alpha'_{\bA}) \in \sA'$, such that $(\bA, \alpha'_{\bA}) \models \psi$. Furthermore, the only difference between the two pairs is the mapping of the domain element $x_i$. Regardless of whether or not $x_i \in \mathsf{dom}(\alpha_{\bA})$, we have $(\bA, \alpha_{\bA}) \models \exists x_i\psi$, just by reverting the mapping of the element $x_i$ (or arbitrarily mapping it, if it was not in the domain). On the other hand, if some $(\bB, \alpha_\bB) \in \sB$ satisfies $(\bB, \alpha_\bB) \models \exists x_i\psi$, then Duplicator can respond to Spoiler's $\pebbleleft$ move by (re-)mapping the variable $x_i$ to a witness for $\psi$, and the resulting pair $(\bB, \alpha'_{\bB}) \in \sB'$ would satisfy $(\bB, \alpha'_{\bB}) \models \psi$. This is a contradiction, so every $(\bB, \alpha_\bB) \in \sB$ satisfies $(\bB, \alpha_\bB) \models \lnot\exists x_i\psi \equiv \lnot\varphi$. Therefore, $\varphi$ is a separating formula for $(\sA, \sB)$. Of course, $f(\varphi) = f(\exists x_i\psi) = h_\exists(f(\psi)) = h_\exists(r') = r$.
\item If Spoiler plays $\pebbleright$, the analysis is similar.
\end{itemize}

$(\Longleftarrow:)$ We use induction on the separating formula. Suppose there is a separating formula $\varphi \in \mathrm{FO}^k(\tau)$ for $(\sA, \sB)$ with $f(\varphi) = r$. We will show inductively that Spoiler has a winning strategy in the game $\SG{r}{k}{f}$ on $(\sA, \sB)$.

If $\varphi$ is atomic, Spoiler can just use $\varphi$ to close the root node
$\Xr=\gnode{\sA}{\sB}{r}$. Otherwise, inductively, we have the following cases:

\begin{itemize}
\item If $\varphi = \lnot\psi$, then Spoiler plays $\swap$, creating child $X_1=\gnode{\sB}{\sA}{r'}$ where $f(\psi)= r'$ and $r = h_\lnot(r')$. Note that the precondition is met, so Spoiler can play this move. Note that $\psi \in \mathrm{FO}^k(\tau)$ separates $(\sB,\sA)$, so by induction, Spoiler wins the remaining game.
\item If $\varphi = \psi\lor\gamma$, then every $(\bA,\alpha_\bA) \in \sA$ satisfies $(\bA, \alpha_\bA) \models \psi\lor\gamma$, whereas every $(\bB,\alpha_\bB) \in \sB$ satisfies $(\bB, \alpha_\bB) \models \lnot(\psi\lor\gamma) \equiv \lnot\psi \land \lnot\gamma$. Let $\sA_1$ be the subset of $\sA$ satisfying $\psi$, and let $\sA_2 = \sA - \sA_1$. Let $f(\psi) = r_1$ and $f(\gamma) = r_2$, so that $r = h_\lor(r_1, r_2)$. Now, Spoiler plays $\splitleft$ with nodes $\gnode{\sA_1}{\sB}{r_1}$ and $\gnode{\sA_2}{\sB}{r_2}$. Note that the precondition is met, so Spoiler can play this move. The formula $\psi \in \mathrm{FO}^k(\tau)$ separates $(\sA_1,\sB)$ and satisfies $f(\psi) = r_1$. Similarly, the formula $\gamma \in \mathrm{FO}^k(\tau)$ separates $(\sA_2,\sB)$ and satisfies $f(\gamma) = r_2$. By induction, Spoiler wins the remaining games.
\item If $\varphi = \psi\land\gamma$, Spoiler plays $\splitright$ analogously.
\item Suppose $\varphi = \exists x_i\psi$. If $f(\psi) = r'$, then note that $r = h_\exists(r')$. Note that the precondition is met, so Spoiler can play $\pebbleleft$. By assumption, every $(\bA, \alpha_\bA) \in \sA$ has $(\bA, \alpha_\bA) \models \exists x_i\psi$. So for some $a \in A$, $(\bA, \alpha'_\bA) \models \psi$, where $\alpha'_\bA$ is identical to $\alpha_\bA$ except that $x_i$ has been (re)mapped to the element $a$. Spoiler plays $\pebbleleft$, placing pebble $x_i$ on the witness $a$ on each $\bA$. Note that this creates exactly $(\bA, \alpha'_\bA)$. On the other hand, every $(\bB, \alpha_\bB) \in \sB$ satisfies $(\bB, \alpha_\bB) \models \lnot\exists x_i\psi \equiv \forall x_i\lnot\psi$, and so every $b \in B$ is a witness for $\lnot\psi$, and so even after making copies, Duplicator can never play on a witness for $\psi$ in any $\bB$. So for each pair $(\bB, \alpha'_\bB)$ created by Duplicator, we have $(\bB, \alpha'_\bB) \models \lnot\psi$. At the end, therefore, we have the new tree node $X' = \gnode{\sA'}{\sB'}{r'}$, where $\psi\in\mathrm{FO}^k(\tau)$ is a separating formula for $(\sA', \sB')$ with $f(\psi) = r'$, and so by induction, Spoiler wins the remaining game.
\item If $\varphi = \forall x_i\psi$, Spoiler plays $\pebbleright$ analogously.
\end{itemize}
This proves both directions of the theorem.
\end{proof}

We remark here that Theorem \ref{thm:syntacticchar} is applicable directly to several straightforward compositional syntactic measures, including the ones in Examples \ref{exa:quantcount}, \ref{exa:quantrank}, and \ref{exa:formulasize}. In particular, an examination of Example \ref{exa:quantcount} shows that the $(r, k)$-\qvt game is identical to $\SG{r}{k}{f_q}$, where $f_q$ is the quantifier-number measure; this proves Theorem \ref{thm:qvtchar} immediately.

We now complete this subsection with a result showing that a familiar tool for analyzing these games carries over to this general framework as well. The following proposition shows that Duplicator's oblivious strategy is optimal in the syntactic game for \emph{every} compositional syntactic measure.

\begin{prop}\label{prop:obliviousST}
Let $f$ be a compositional syntactic measure. If Duplicator has a winning strategy in the $\SG{r}{k}{f}$ on $(\sA, \sB)$, then the oblivious strategy is winning.
\end{prop}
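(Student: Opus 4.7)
The plan is to argue by contrapositive: I will show that if Spoiler has a strategy $\sigma_S$ closing the game tree against oblivious Duplicator, then for any Duplicator strategy $\sigma_D$ there is a Spoiler strategy $\sigma'_S$ closing the tree against $\sigma_D$. Spoiler's strategy $\sigma'_S$ is to mentally simulate a parallel run of the game against oblivious Duplicator, and at each decision point lift the move prescribed by $\sigma_S$ in the simulation down to the actual game.

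The central inductive invariant is that for every node $X = \gnode{\sA_X}{\sB_X}{r'}$ in the actual game tree $\cT$, there is a counterpart node $X^{\mathrm{obl}} = \gnode{\sA_{X^{\mathrm{obl}}}}{\sB_{X^{\mathrm{obl}}}}{r'}$ in the simulated tree $\cT^{\mathrm{obl}}$, with the same counter value and sidedness, and with $\sA_X \subseteq \sA_{X^{\mathrm{obl}}}$ and $\sB_X \subseteq \sB_{X^{\mathrm{obl}}}$ (where corresponding pebbled structures carry identical pebble histories). The base case is immediate at the root. For the inductive step I handle each move type in turn: under $\pebbleleft$, Spoiler places on each structure in $\sA_X$ the pebble prescribed by $\sigma_S$ on its counterpart in $\sA_{X^{\mathrm{obl}}}$, and Duplicator's response in $\cT$ is automatically contained in the oblivious response in $\cT^{\mathrm{obl}}$ because the latter enumerates every possible placement of the new color on every structure of $\sB_{X^{\mathrm{obl}}} \supseteq \sB_X$. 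The case $\pebbleright$ is symmetric. Under $\splitleft$, Spoiler reuses the counter split $r' = r'_1 + r'_2$ and the partition $\sA_{X^{\mathrm{obl}}} = \sA^{\mathrm{obl}}_1 \cup \sA^{\mathrm{obl}}_2$ from $\sigma_S$, taking the induced partition $\sA_X = (\sA_X \cap \sA^{\mathrm{obl}}_1) \cup (\sA_X \cap \sA^{\mathrm{obl}}_2)$ in the actual game; $\splitright$ is dual and $\swap$ is immediate. When $\sigma_S$ closes $X^{\mathrm{obl}}$ with an atomic separating $\varphi$ satisfying $h_{\mathsf{atomic}}(\varphi) = r'$, the containment invariant implies that $\varphi$ also separates $(\sA_X, \sB_X)$, so $\sigma'_S$ closes $X$ with the same $\varphi$.

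The main obstacle is the bookkeeping needed to maintain a consistent identification of ``the same'' pebbled structure across $\cT$ and $\cT^{\mathrm{obl}}$, so that the phrase ``lift the pebble placement from $X^{\mathrm{obl}}$ to $X$'' is well-defined in the presence of Duplicator-generated copies. This is resolved by noting that only Duplicator creates copies and that Spoiler's $\sigma_S$-moves depend only on data preserved by the invariant, namely the underlying $\tau$-structures and the current assignment of pebble colors to elements, so the natural identification threads through the induction without ambiguity. Once this is in place, the induction yields that $\sigma'_S$ closes every leaf of $\cT$, completing the contrapositive and hence the proposition.
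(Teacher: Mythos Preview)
Your proof is correct, but it takes a genuinely different route from the paper's. The paper argues semantically via Theorem~\ref{thm:syntacticchar}: it shows that the condition ``there is no separating formula of the appropriate measure'' propagates from a node to its child under a pebble move with oblivious response (the other move types being trivial since Duplicator has no agency there). Concretely, if some $\psi$ with $f(\psi)=r''$ held on all of $\sA''$ after the move, then $\exists x_i\psi$ with measure $r'$ would hold on all of $\sA'$, and oblivious play guarantees a witness for $\psi$ lands in $\sB''$. Your argument is instead purely game-theoretic: you run a parallel copy of the game against oblivious Duplicator and maintain a node-by-node subset invariant $\sA_X\subseteq\sA_{X^{\mathrm{obl}}}$, $\sB_X\subseteq\sB_{X^{\mathrm{obl}}}$ with matching counters, lifting Spoiler's moves and observing that any Duplicator response is contained in the oblivious one.

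What each approach buys: the paper's proof is shorter because it only needs to treat the pebble case, but it leans on the Fundamental Theorem. Your argument is self-contained and does not require Theorem~\ref{thm:syntacticchar} at all, at the cost of handling every move type (including splits, where you take the induced partition, and $\close$, where the subset invariant immediately transfers the atomic separator). One small point worth making explicit in your write-up: under $\splitleft$ the induced partition $\sA_X\cap\sA_1^{\mathrm{obl}}$, $\sA_X\cap\sA_2^{\mathrm{obl}}$ may have an empty part, but this causes no trouble since you simply continue to follow $\sigma_S$ on that branch and the eventual closing atomic formula still separates the (possibly empty) actual node.
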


\begin{proof}
It suffices to consider the case where Spoiler plays $\pebbleleft$ on some node $X = \gnode{\sA'}{\sB'}{r'}$ of $\cT$, creating the new node $X' = \gnode{\sA''}{\sB''}{r''}$. Note that $h_\exists(r'') = r'$. The $\pebbleright$ move will be analogous, and Duplicator has no agency elsewhere.

Suppose there is no separating formula $\varphi \in \mathrm{FO}^k(\tau)$ for $(\sA', \sB')$ with $f(\varphi) = r'$. It suffices to show that regardless of Spoiler's move, as long as Duplicator responds obliviously, there is no separating formula $\psi \in \mathrm{FO}^k(\tau)$ for $(\sA'',\sB'')$ with $f(\psi) = r''$.

Suppose WLOG Spoiler plays the pebble color $x_i$ on his move. Consider all pairs $(\bA, \alpha'_\bA) \in \sA''$, and take any arbitrary formula $\psi \in \mathrm{FO}^k(\tau)$ with $f(\psi) = r''$ that is true for \emph{all} these pairs. In other words, for every pair $(\bA, \alpha'_\bA) \in \sA''$, we have $(\bA, \alpha'_\bA) \models \psi$. An arbitrary pair $(\bA, \alpha'_\bA) \in \sA''$ arose from a pair $(\bA, \alpha_\bA) \in \sA'$, with Spoiler changing the assignment $\alpha_\bA$ by just (re)assigning the variable $x_i$. Furthermore, every pair $(\bA, \alpha_\bA) \in \sA'$ gave rise to a new pair in this way.

We claim that this means the formula $\sigma = \exists x_i\psi$ is true of all pairs $(\bA, \alpha_\bA) \in \sA'$. This follows by simply reversing the assignment of the variable $x_i$ in each of the pairs. It follows that for each pair $(\bA, \alpha_\bA) \in \sA'$, we have $(\bA, \alpha_\bA) \models \sigma$. Note that $f(\sigma) = f(\exists x_i\psi) = h_\exists(f(\psi)) = h_\exists(r'') = r'$. By our assumption, therefore, there must be some pair $(\bB, \alpha_\bB) \in \sB'$ satisfying $(\bB, \alpha_\bB) \models \sigma$. In other words, $(\bB, \alpha_\bB) \models \exists x_i\psi$.

Now, when Duplicator responds obliviously, she (re)assigns the variable $x_i$ in all possible ways in each of the pairs in $\sB'$. In particular, therefore, in one of the copies of the pair $(\bB, \alpha_\bB)$ defined above, she plays on a witness $b \in B$ for $\psi$, creating the pair $(\bB, \alpha'_\bB) \in \sB''$. But then, $(\bB, \alpha'_\bB) \models \psi$, and so $\psi$ cannot be a separating formula for $(\sA'', \sB'')$.

Since $\psi$ was arbitrary, we are done.
\end{proof}

Of course, Proposition \ref{prop:obliviousQVT} follows immediately as a corollary.

It is worth remarking that the proof of Proposition \ref{prop:obliviousST} also captures why it is critical for Duplicator to use her copying ability. Duplicator needs a pair $(\bB, \alpha_\bB) \in \sB'$ to be a witness for the formula $\psi$. However, the same pair can be a witnessing structure for two different such formulas. In particular, it could be the case that $(\bB, \alpha_\bB) \models \exists x_i\psi_1$ and $(\bB, \alpha_\bB) \models \exists x_i\psi_2$, but the elements in $B$ that witness $\psi_1$ and $\psi_2$ are \emph{different}. Making copies of the pair $(\bB, \alpha_\bB)$ in order to have two new assignments of $x_i$, enabling different elements to act as witnesses simultaneously in the same move, achieves this.

\subsection{Properties of some measures}

Having defined the syntactic game, it is natural to ask for the properties of specific syntactic measures that can enable us to apply existing tools of analyzing similar games, in order to have the results carry over to this realm.

Consider the quantifier rank syntactic measure, denoted by $f_r$ in Example \ref{exa:quantrank}. Consider the game $\SG{r}{k}{f_r}$, and observe that it is very similar to the \qvt game. The only difference is that in the $\splitleft$ and $\splitright$ moves on a tree node $X = \gnode{\cA'}{\cB'}{r'}$, Spoiler labels the new nodes in $\cT$ with two values whose maximum is $r'$, instead of two values that sum to $r'$.

The following proposition considers when we could look at games on each of the singleton pairs, rather than entire classes of $\tau$-structures.

\begin{prop}\label{singletonnicemeasure}
Consider the syntactic game for any compositional syntactic measure $f$ satisfying:
\begin{enumerate}[1.]
\item there are finitely many equivalence classes of the equivalence relation ``satisfy the same {\fo}-sentences with $f$-measure $r$ and $k$ variables'',
\item each such class is definable by a sentence with the parameters defining the class,
\item the set of sentences with these parameters is closed under disjunction.
\end{enumerate}
Then, for all nonempty sets $\cA$ and $\cB$ of $\tau$-structures, and for all $r, k \in \N$, the following statements are equivalent:
    \begin{enumerate}
        \item Spoiler wins the $\SG{r}{k}{f}$ on $(\cA, \cB)$.
        \item For every $\bA \in \cA$ and every $\bB \in \cB$, there is a separating $k$-variable sentence $\varphi$ for $(\{\bA\}, \{\bB\})$ with $f(\varphi) = r$.
    \end{enumerate}
\end{prop}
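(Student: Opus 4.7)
The plan is to prove both implications through Theorem~\ref{thm:syntacticchar}. The forward direction (a)$\Rightarrow$(b) is essentially immediate: a winning strategy for Spoiler yields, by Theorem~\ref{thm:syntacticchar}, a sentence $\varphi \in \mathrm{FO}^k(\tau)$ with $f(\varphi) = r$ separating $(\cA, \cB)$, and the same $\varphi$ separates every singleton pair $(\{\bA\}, \{\bB\})$ with $\bA \in \cA$ and $\bB \in \cB$.

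For the backward direction (b)$\Rightarrow$(a), the idea is to glue the pairwise separating sentences from (b) into one global separating sentence for $(\cA,\cB)$ with the right parameters, and then invoke Theorem~\ref{thm:syntacticchar} again. Let $\equiv$ denote the equivalence relation from condition~(1). First, I would observe that no $\bA \in \cA$ can be $\equiv$-equivalent to any $\bB \in \cB$: if it were, $\bA$ and $\bB$ would agree on every $k$-variable sentence of $f$-measure $r$, contradicting the existence of the separating sentence guaranteed by (b). Hence the $\equiv$-classes meeting $\cA$ are disjoint from those meeting $\cB$.

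By condition~(1), only finitely many $\equiv$-classes, say $[\bA_1],\ldots,[\bA_m]$, contain a member of $\cA$. By condition~(2), each $[\bA_i]$ is definable by a sentence $\theta_i \in \mathrm{FO}^k(\tau)$ with $f(\theta_i) = r$, so $\theta_i$ holds on exactly the structures in $[\bA_i]$. Set $\psi = \bigvee_{i=1}^{m} \theta_i$; by condition~(3), $\psi \in \mathrm{FO}^k(\tau)$ and $f(\psi) = r$. Every $\bA \in \cA$ lies in some $[\bA_i]$, so $\bA \models \theta_i$ and hence $\bA \models \psi$; every $\bB \in \cB$ lies outside each $[\bA_i]$ by the previous paragraph, so $\bB \not\models \theta_i$ for any $i$, and thus $\bB \not\models \psi$. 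Therefore $\psi$ separates $(\cA,\cB)$ with $k$ variables and $f$-measure $r$, and Theorem~\ref{thm:syntacticchar} furnishes Spoiler's winning strategy.

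The main subtlety I anticipate lies in reading condition~(2) carefully: the defining sentence $\theta_i$ must not only hold on every structure in $[\bA_i]$, but also \emph{fail} on every structure outside $[\bA_i]$, so that it characterizes the class exactly. This is the natural reading of ``definable by'' and is essential for concluding that no $\bB \in \cB$ satisfies any $\theta_i$. The remaining steps --- reducing from an arbitrary separating sentence on each singleton pair to pairwise $\equiv$-inequivalence, taking a finite disjunction, and bookkeeping the $f$-measure and variable count --- are routine given conditions~(1) and~(3).
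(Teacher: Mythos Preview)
Your proof is correct and takes essentially the same approach as the paper: both directions go through Theorem~\ref{thm:syntacticchar}, and for the backward direction you build the separating sentence as the disjunction of the defining sentences of the $\equiv$-classes meeting $\cA$, exactly as the paper does. The only cosmetic difference is that you establish the disjointness of the $\equiv$-classes meeting $\cA$ from those meeting $\cB$ up front, whereas the paper derives this at the end by contradiction; the underlying argument is identical.
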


\begin{proof}\hfill

$(\Longrightarrow:)$ Take a separating $k$-variable sentence $\varphi$ for $(\cA,\cB)$ with $f(\varphi) = r$. This is separating for every pair $(\{\bA\}, \{\bB\})$ with $\bA \in \cA$ and $\bB \in \cB$.

$(\Longleftarrow:)$ By property 1, there are finitely many equivalence classes of the equivalence relation ``satisfy the same {\fo}-sentences with measure $r$ and $k$ variables'', and by property 2, each such class is definable by a {\fo}-sentence with these parameters. Let $\theta_1,\ldots,\theta_m$ be the list of all such  sentences arising from structures in $\cA$, and consider the sentence $\psi = \theta_1 \lor \ldots \lor \theta_m$. By property 3, $\psi$ is a $k$-variable formula with $f(\psi) = r$. We claim that $\psi$ is a separating sentence for $(\cA, \cB)$. Indeed, clearly, every structure in $\cA$ satisfies one of the $\theta_j$'s. We claim that no structure $\bB \in \cB$ does. Otherwise, take $\bB \in \cB$ satisfying $\theta_j$ for some $j$. Then $\bB$ satisfies the same sentences with measure $r$ and $k$ variables as some structure $\bA \in \cA$. But then $(\{\bA\},\{\bB\})$ is inseparable, which is a contradiction.
\end{proof}

Proposition \ref{singletonnicemeasure} immediately helps us relate the syntactic game on the quantifier rank measure $f_r$ to the other canonical technique we have for quantifier rank --- the \ef games.

\begin{cor}\label{singletonqd}
Let $f_r$ be the quantifier rank syntactic measure. For all nonempty sets $\cA$ and $\cB$ of $\tau$-structures, and for all $r, k \in \N$, the following statements are equivalent:
\begin{enumerate}
\item Spoiler wins the game $\SG{r}{k}{f_r}$ on $(\cA, \cB)$.
\item For every $\bA \in \cA$ and every $\bB \in \cB$, Spoiler wins the $\ef(r,k)$ game on $(\bA, \bB)$.
\end{enumerate}
\end{cor}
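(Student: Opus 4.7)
The plan is to apply Proposition \ref{singletonnicemeasure} to the quantifier-rank measure $f_r$, and then invoke the classical fundamental theorem of the $k$-pebble \EF game.

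First I would verify the three hypotheses of Proposition \ref{singletonnicemeasure} for $f_r$. Over the fixed finite schema $\tau$, up to logical equivalence there are only finitely many $k$-variable \fo-sentences of quantifier rank $r$, so hypothesis (1) holds. Each equivalence class is then defined by the conjunction of all such sentences satisfied by any representative, and this conjunction is, up to logical equivalence, itself a $k$-variable sentence of quantifier rank $r$; this gives (2). Finally, disjunction preserves both the quantifier-rank bound and the variable bound, giving (3). Proposition \ref{singletonnicemeasure} thus yields that condition (a) is equivalent to the statement: for every $\bA \in \cA$ and every $\bB \in \cB$, there is a $k$-variable separating sentence $\varphi$ for $(\{\bA\},\{\bB\})$ with $f_r(\varphi) = r$.

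Second, I would observe that this last condition is equivalent to the existence of a $k$-variable separating sentence of quantifier rank \emph{at most} $r$. One direction is immediate; for the other, any $\psi$ of quantifier rank $s < r$ can be padded to the logically equivalent sentence $\exists x_1 \exists x_1 \cdots \exists x_1 \psi$ with $r-s$ redundant prefixes $\exists x_1$, which uses the same $k$ variables and has quantifier rank exactly $r$.

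Finally, I would invoke the well-known characterization of the $k$-pebble \EF game \cite{DBLP:journals/jsyml/Barwise77,DBLP:journals/jcss/Immerman82,DBLP:journals/iandc/KolaitisV92}: for all $\tau$-structures $\bA$ and $\bB$, Spoiler wins the $\ef(r,k)$ game on $(\bA,\bB)$ if and only if there is a $k$-variable separating sentence for $(\{\bA\},\{\bB\})$ of quantifier rank at most $r$. Chaining this equivalence with those obtained in the previous two paragraphs yields the desired biconditional. I do not anticipate a significant obstacle: the substantive work is already packaged inside Proposition \ref{singletonnicemeasure} and the classical pebble-game theorem, and what remains is only the straightforward hypothesis check and the brief padding argument above.
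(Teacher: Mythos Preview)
Your proposal is correct and follows exactly the route the paper intends: the paper states the corollary as an immediate consequence of Proposition \ref{singletonnicemeasure} together with the classical $k$-pebble \ef characterization, and you have correctly spelled out the hypothesis check (finitely many $k$-variable sentences of rank $r$ up to equivalence, closure of this class under conjunction and disjunction) and the padding step needed to pass between ``rank exactly $r$'' and ``rank at most $r$''.
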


We next remark on the formula size measure, denoted by $f_s$ in Example \ref{exa:formulasize}. Consider the $\SG{r}{k}{f_s}$ game, and observe that it would be very similar to the \qvt game, with the following differences on a node $X = \gnode{\cA'}{\cB'}{r'}$:
\begin{itemize}
\item For a $\pebbleleft$, $\pebbleright$, or $\swap$ move, Spoiler would decrement the value of the counter $r'$ to $r' - 1$.
\item For a $\splitleft$ or $\splitright$ move on a node with counter value $r'$, Spoiler would choose $r'_1$ and $r'_2$ such that $r' = r'_1 + r'_2 + 1$.
\item Spoiler would only close nodes with counter value of $1$.
\end{itemize}

Note that this last game is very similar to the ones in \cite{adlerimmerman} and \cite{hv15}. In fact, as in \cite{hv15}, we could have changed the rules of the syntactic game and allowed Duplicator to choose one of the branches from a split move to continue along as well, without affecting the eventual winner. But then we would lose the property that a closed game tree is isomorphic to a parse tree of a separating sentence.

Finally, we remark briefly on the measure $f_t$ that is always zero. This is clearly a valid compositional syntactic measure. It should be clear that $\SG{0}{k}{f_t}$ is equivalent to the $k$-pebble game \cite{DBLP:journals/jsyml/Barwise77,DBLP:journals/jcss/Immerman82}. Formally, we claim that Spoiler wins the $k$-pebble game on $(\bA, \bB)$ if and only if he wins $\SG{0}{k}{f_t}$ on $(\{\bA\}, \{\bB\})$. Indeed, if Spoiler has a winning strategy on the $k$-pebble game, he can make the same move at the root node of $\cT$, and then once Duplicator responds, he can split the resulting node into singletons and win inductively on each of them. Conversely, if Duplicator wins the $k$-pebble game, she just plays the same moves on the syntactic game without making any copies, proving the result.
\section{Open Problems \& Future Directions}\label{sec:openproblems}

Our results suggest several open problems about \MS games and their
variants. To begin with, we saw that the hereditary \ms game with
repebbling (Game \ref{msgame2}) does not simultaneously capture the
number of quantifiers and the number of variables. What is the
fragment of {\fo} captured by this game? Considered from a slightly
different perspective, how is the ``hereditary'' property reflected by
{\fo} semantics?

We also investigated the variant of the \ms game in which Spoiler wins
without ever playing on top. In some cases, Spoiler wins the \ms game
on $(\cA,\cB)$ without playing on top on the left side (but may play
on top on the right), while in other cases, this is reversed. Is it
true that Spoiler wins the $r$-round \ms game on $(\cA,\cB)$ without
playing on top if and only if Spoiler wins both the $r$-round \ms game
on $(\cA,\cB)$ without playing on top on the left side, and the
$r$-round \ms game on $(\cA,\cB)$ without playing on top on the right
side?

We investigated compositional syntactic measures and their associated
games for first-order logic (introduced for the case of infinitary formulas by \cite{DBLP:journals/mlq/VaananenW13}).  These definitions allowed us to relate structural properties
of such complexity measures to properties of game strategies.  What
other relationships are possible?  Can we unify and better understand
logical inexpressibility results by studying syntactic measures?

Beyond these and other related problems, the main challenge is to use
syntactic games to prove lower bounds on any reasonable complexity
measure, first on unordered structures and then, hopefully, on ordered
structures --- the latter will be very difficult, because of
connections to longstanding open problems in computational complexity
(see Section \ref{sec:intro}). 
A few exceptions notwithstanding (e.g., \cite{DBLP:conf/focs/Schwentick94,DBLP:conf/lics/Ruhl99, DBLP:books/daglib/0009850}), there are relatively few inexpressibility results obtained using  EF games on ordered structures or on structures with built-in arithmetic predicates.  Can we go beyond these applications of the EF games? What about other syntactic measures on ordered structures, such as simultaneous bounds on number of quantifiers and number of variables?

\section*{Acknowledgments}
\noindent We would like to thank Ryan Williams (MIT) for many helpful discussions. Rik Sengupta acknowledges NSF grant CCF-1934846. We also would like to thank the anonymous referees for an incredibly careful reading of the paper, and for many helpful suggestions and edits.

\bibliographystyle{alphaurl}
\bibliography{references}

\end{document}